\newcommand{\REMOVE}[1]{}
\newcommand{\eps}{\varepsilon}
\newcommand{\vol}{{\rm vol}}
\newcommand{\lca}{{\rm lca}}
\newcommand{\BB}{{\mathcal B}}
\newcommand{\PP}{{\mathcal P}}
\newcommand{\Pin}{{\mathcal Pin}}
\newcommand{\VV}{{\mathcal V}}
\newcommand{\calS}{{\mathcal S}}
\newcommand{\GG}{{\mathcal G}}
\newcommand{\nodeexpansion}{{\sf node-expansion}}
\newcommand{\oldbarexp}{{\sf old-bar-expansion}}
\newcommand{\newbarexp}{{\sf new-bar-expansion}}
\newcommand{\crimpreduction}{{\sf crimp-reduction}}
\newcommand{\spurreduction}{{\sf spur-reduction}}
\newcommand{\nodesplit}{{\sf node-split}}
\newcommand{\merge}{{\sf merge}}
\newcommand{\subdivision}{{\sf subdivision}}
\newcommand{\pinextraction}{{\sf pin-extraction}}
\newcommand{\Vshortcut}{{\sf V-shortcut}}
\newcommand{\Lshortcut}{{\sf L-shortcut}}
\newcommand{\spurshortcut}{{\sf spur-shortcut}}
\def\Lhat{{\smash{\widehat{L}}\vphantom{L}}}
\newtheorem{lemma}{Lemma}
\newtheorem{theorem}{Theorem}
\newtheorem{corollary}{Corollary}
\newtheorem{remark}{Remark}
\date{}
\begin{document}

\title{Recognizing Weakly Simple Polygons\footnote{A preliminary version of this paper appeared in the
\emph{Proceedings of the 32nd International Symposium on Computational Geometry (SoCG 2016)},
\url{doi:10.4230/LIPIcs.SoCG.2016.8}.}}
\author{
Hugo A. Akitaya\thanks{Department of Computer Science, Tufts University, Medford, MA.
Email: \texttt{hugo.alves\_akitaya@tufts.edu}, \texttt{aloupis.greg@gmail.com}, \texttt{cdtoth@eecs.tufts.edu}}
\and
Greg Aloupis\footnotemark[2]
\and
Jeff Erickson\thanks{Department of Computer Science, University of Illinois, Urbana-Champaign, IL.
Email: \texttt{jeffe@illinois.edu}}
\and
Csaba D. T\'oth\footnotemark[2]~\thanks{Department of Mathematics, California State University Northridge, Los Angeles, CA.}
}

\maketitle

\begin{abstract}
\noindent We present an $O(n\log n)$-time algorithm that determines whether a given $n$-gon in the plane is weakly simple. This improves upon an $O(n^2\log n)$-time algorithm by Chang, Erickson, and Xu~\cite{CEX15}. Weakly simple polygons are required as input for several geometric algorithms. As such, recognizing simple or weakly simple polygons is a fundamental problem.
\end{abstract}

\paragraph{Keywords:} simple polygon, combinatorial embedding, perturbation
\paragraph{MSC:}
05C10, 
05C38, 
52C45, 
68R10. 

\section{Introduction}
\label{sec:intro}

A polygon is \emph{simple} if it has distinct vertices and interior-disjoint edges that do not pass through vertices. Geometric algorithms are often designed for simple polygons, but many also work for degenerate polygons that do not ``self-cross.'' A polygon with at least three vertices is \emph{weakly simple} if for every $\eps>0$, the vertices can be perturbed within a ball of radius $\eps$ to obtain a simple polygon. Such polygons arise naturally in numerous applications, e.g., for modeling planar networks or as the geodesic hull of points within a simple polygon (Figure~\ref{fig:intro}).

\begin{figure}[htb]
		\centering
		\def\svgwidth{.7\textwidth}
\begingroup%
  \makeatletter%
  \providecommand\color[2][]{%
    \errmessage{(Inkscape) Color is used for the text in Inkscape, but the package 'color.sty' is not loaded}%
    \renewcommand\color[2][]{}%
  }%
  \providecommand\transparent[1]{%
    \errmessage{(Inkscape) Transparency is used (non-zero) for the text in Inkscape, but the package 'transparent.sty' is not loaded}%
    \renewcommand\transparent[1]{}%
  }%
  \providecommand\rotatebox[2]{#2}%
  \ifx\svgwidth\undefined%
    \setlength{\unitlength}{497.43540039bp}%
    \ifx\svgscale\undefined%
      \relax%
    \else%
      \setlength{\unitlength}{\unitlength * \real{\svgscale}}%
    \fi%
  \else%
    \setlength{\unitlength}{\svgwidth}%
  \fi%
  \global\let\svgwidth\undefined%
  \global\let\svgscale\undefined%
  \makeatother%
  \begin{picture}(1,0.26614848)%
    \put(0,0){\includegraphics[width=\unitlength,page=1]{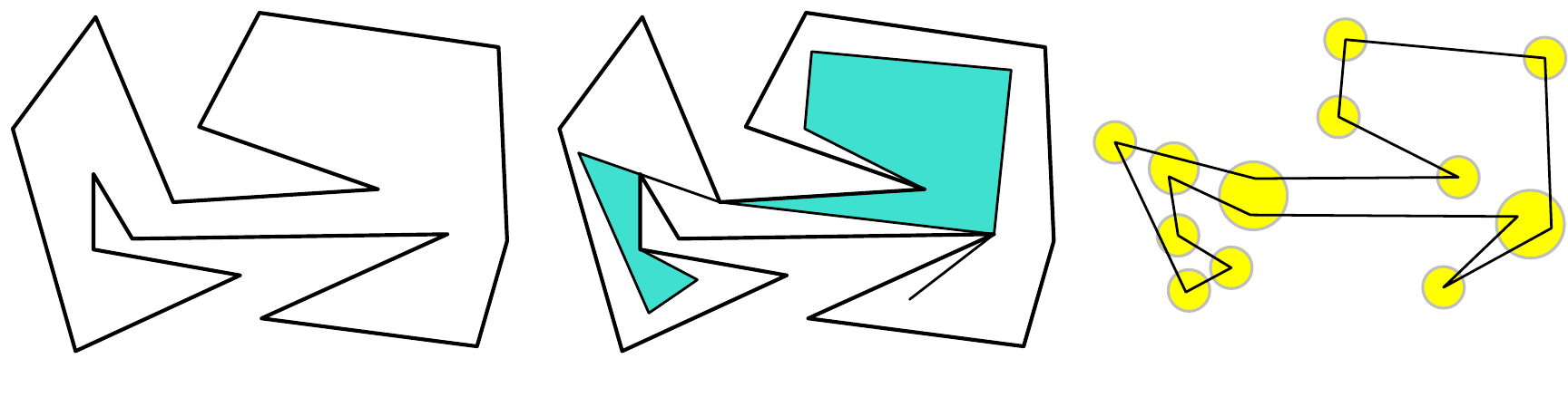}}%
    \put(0.14027634,0.01199398){\color[rgb]{0,0,0}\makebox(0,0)[lb]{\smash{(a)}}}%
    \put(0.49699588,0.01199398){\color[rgb]{0,0,0}\makebox(0,0)[lb]{\smash{(b)}}}%
    \put(0.8387838,0.01199398){\color[rgb]{0,0,0}\makebox(0,0)[lb]{\smash{(c)}}}%
    \put(0,0){\includegraphics[width=\unitlength,page=2]{fig-intro.pdf}}%
  \end{picture}%
\endgroup%
		\caption{\small
		(a) A simple polygon $P$ with 16 vertices.
		(b) Eight points in the interior of $P$ (solid dots); their geodesic hull is a weakly simple polygon $P'$ with 14 vertices.
		(c) A perturbation of $P'$ into a simple polygon.
		}
		\label{fig:intro}
\end{figure}

Several alternative definitions have been proposed for weakly simple polygons, formalizing the intuition that such polygons do not self-cross. Some of these definitions were unnecessarily restrictive or incorrect; see~\cite{CEX15} for a detailed discussion and five equivalent definitions for weak simplicity of a polygon. Among others, a result by {Rib\'o Mor}~\cite[Theorem 3.1]{Mor06} implies an equivalent definition in terms of Fr\'echet distance, in which a polygon is perturbed into a simple closed curve (see Section~\ref{sec:preliminaries}).
This definition is particularly useful for recognizing weakly simple polygons, since it allows transforming edges into polylines (by subdividing the edges with Steiner points, which may be perturbed). With suitable Steiner points, the perturbation of a vertex incurs only local changes. (In other words, we do not need to worry about stretchability of the perturbed configuration.)

We can decide whether an $n$-gon in the plane is simple in $O(n\log n)$ time by a sweepline algorithm~\cite{SH76}.
Chazelle's polygon triangulation algorithm also recognizes simple polygons (in $O(n)$ time),
because it only produces a triangulation if the input is simple~\cite{Cha91}.
Recognizing weakly simple polygons, however, is more subtle.
Skopenkov~\cite{Sko03} gave a combinatorial characterization of the topological obstructions to weak simplicity in terms of line graphs. Cortese et al.~\cite{CDP+09} gave an $O(n^6)$-time algorithm to recognize weakly simple $n$-gons.
Chang et al.~\cite{CEX15} improved the running time to $O(n^2\log n)$ in general; and to $O(n\log n)$ in several special cases.
They identified two features that are difficult to handle: A \emph{spur} is a vertex whose incident edges overlap, and a \emph{fork} is a vertex that lies in the interior of an edge. (A vertex may be both a fork and a spur.)
They gave an easy algorithm for polygons that have neither forks nor spurs, and two more involved ones for polygons with spurs but no forks and for polygons with forks but no spurs, all three running in $O(n\log n)$ time.
In the presence of both forks and spurs, they presented an $O(n^2 \log n)$ time algorithm that eliminates forks by subdividing all edges
that contain vertices in their interiors, potentially creating a quadratic number of vertices.

We show how to manage both forks and spurs efficiently, while building on ideas from~\cite{CEX15,CDP+09} and from
Arkin et al.~\cite{ABD+09}, and obtain the following main results.

\begin{theorem}\label{thm:main}
Deciding whether a polygon $P$ with $n$ vertices in the plane is weakly simple takes $O(n\log n)$ time.
\end{theorem}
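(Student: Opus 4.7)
The plan is to reduce the general problem, in $O(n\log n)$ total time, to a polygon without forks or spurs that can be handled directly by the algorithm of Chang, Erickson, and Xu~\cite{CEX15}. The naive reduction---subdividing every edge at every vertex it contains in its interior---can blow the size up to $\Theta(n^2)$, so the heart of the argument is a family of \emph{local} transformations that never introduce many new vertices in aggregate.

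First, I would catalog the topological configurations in which forks and spurs interact. A spur creates a \emph{bar} (an overlapped edge piece); bars sharing endpoints form \emph{pins}, and consecutive bars chain into \emph{snakes}. These are precisely the configurations where the fork-only and spur-only cases of~\cite{CEX15} do not apply. For each such configuration I would define a reduction operation: \crimpreduction{} and \spurreduction{} dispose of isolated spurs, \pinextraction{} (via a \Vshortcut{} or \Lshortcut{}) removes a pin, and \snakeextraction{} (via a \Yshortcut{} or \Zshortcut{}, possibly preceded by a \treesplitting{} step) removes a snake. Each operation produces a polygon of strictly smaller size that is weakly simple if and only if the input is, an equivalence best proved by a local Fréchet-distance argument using the perturbation definition from Section~\ref{sec:preliminaries}.

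The crux is efficient implementation. Each individual reduction is local, but the set of reducible configurations is global and changes with every step; a naive search would take $\Omega(n)$ per reduction and yield $\Omega(n^2)$ overall. I would therefore maintain a dynamic combinatorial representation of the polygon, using a balanced-tree structure over each maximal chain of collinear overlapping edges, so that pins and snakes can be located, split, and rerouted in $O(\log n)$ amortized time per reduction. Collinear chains are stored implicitly, with individual forks materialized only when a reduction actually touches them; this is what prevents the quadratic blowup.

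The main obstacle will be ensuring that these implicit representations interact cleanly with the reduction operations. Each \pinextraction{} or \snakeextraction{} may redirect a portion of a long collinear chain through a shortcut, and the reroute must update the data structure consistently in polylogarithmic time while respecting the global combinatorial embedding; verifying that no reduction ever requires more than a constant number of such rerouting operations, and that the total vertex count stays $O(n)$ throughout, is the technically delicate step. Once the reductions are exhausted, the residual polygon has neither forks nor spurs, so the corresponding subroutine of~\cite{CEX15} decides weak simplicity in $O(n\log n)$ time, completing the bound.
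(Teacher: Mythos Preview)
Your outline matches the paper's strategy at the top level---perform ws-equivalent local reductions with an implicit $O(n)$-size representation until no forks or spurs remain, then invoke the easy case of~\cite{CEX15}---and even uses the paper's operation names.

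Two places where your sketch does not line up with what the argument actually requires. First, the claim that ``each operation produces a polygon of strictly smaller size'' is false and cannot be repaired: the paper's \Lshortcut{} deliberately \emph{inserts} a crimp in certain cross-chains (its phase~(1)) so that the unimodality invariant of Lemma~\ref{lem:irreducible} survives; termination then rests on Lemma~\ref{lem:simplfication-existance}, which bounds the total number of crimps ever introduced, not on monotone size decrease. Second, the paper does not interleave fork- and spur-removal in a single reduction loop as you propose. It runs two structurally distinct phases: bar simplification (Section~\ref{sec:bars}) eliminates all forks one bar at a time, parking any residual spurs in cluster trees on the bar boundary; spur elimination (Section~\ref{sec:tree-exp}) then works entirely inside those clusters, pairing them into groups and using dynamic LCA structures, Euler-tour order maintenance, and an interval tree to support each \spurshortcut{} and group split in amortized $O(\log n)$ per benchmark. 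Your ``balanced tree over each maximal collinear chain'' is the wrong object for this second phase---the combinatorics are governed by the rooted cluster trees $T[u]$, not by linear chains---and the heavy-path amortization of Lemma~\ref{lem:group-simplfication-time} is what actually delivers the $O(n\log n)$ bound.
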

\begin{theorem}\label{thm:perturb}
Given a weakly simple polygon $P$ with $n$ vertices and a constant $\eps>0$, a simple polygon with $2n$ vertices within Fr\'echet distance $\eps$ from $P$ can be computed in $O(n\log n)$ time.
\end{theorem}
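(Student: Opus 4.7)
The plan is to piggyback on the recognition algorithm of Theorem~\ref{thm:main}. That algorithm reduces $P$ to a trivial simple polygon via a sequence of local operations (spur-reductions, crimp-reductions, shortcuts, contractions, etc., corresponding to the macros declared in the preamble); each such reduction has an inverse ``expansion'' that is realizable by a small perturbation. If we record the reduction history and invert it, we obtain a perturbation of $P$ itself.

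First, I would subdivide each edge of $P$ at a single interior point, producing a polygon $P'$ with exactly $2n$ vertices. These Steiner points play the role of the ``extra'' vertices that resolve forks: when an original vertex $v$ lies in the relative interior of an edge $e$, the Steiner point on $e$ can be displaced slightly sideways so that the perturbed edge bypasses $v$, using only one additional vertex per original edge. (This is exactly the mechanism that lets us avoid the quadratic blowup in Chang et al.~\cite{CEX15}.) Then I would run the recognition algorithm on $P'$, but instrumented so that every reduction is logged together with the local combinatorial data (in particular, the cyclic order of edge-ends at each collapsed location) needed to invert it.

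Second, I would initialize a trivially simple realization of the fully reduced polygon (say, a small triangle or a single point) and apply the logged reductions in reverse order. At each reverse step, the vertices reintroduced by the expansion are placed inside prescribed $\eps$-balls at their target positions in $P'$, using the stored cyclic order to decide on which side of the existing curve each new piece should be attached. If the perturbation radii are chosen to decrease geometrically along the reverse sequence, the cumulative displacement of every vertex stays within $\eps$, and the intermediate curve remains simple after each expansion. Applying this to $P'$ gives a simple polygon with $2n$ vertices within Fr\'echet distance~$\eps$ of $P$.

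The main obstacle will be showing that each reverse step can be carried out in $O(1)$ time using only the stored local data, so that the total running time matches the $O(n\log n)$ bound of Theorem~\ref{thm:main}. This requires that the reductions used in recognition are genuinely local, that the logged cyclic orders at forks and spurs fully determine the correct side of the ambient curve on which to place each new edge, and that a single Steiner point per original edge provides enough freedom to absorb every expansion that touches that edge. Verifying these three points (locality, sufficiency of the stored combinatorial data, and a single-Steiner-point budget per edge) is the technical heart of the construction; once they are established, the time bound, the vertex count, and the Fr\'echet distance bound all follow directly.
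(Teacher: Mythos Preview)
Your high-level plan---run the recognition algorithm, log the reductions, and invert them---is exactly what the paper does. But two concrete gaps separate your sketch from a working proof.

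First, the source of the $2n$ vertices is misidentified. You propose subdividing each edge of $P$ once up front and argue that one Steiner point per edge lets the perturbed copy ``bypass'' a fork. But a single edge of $P$ can pass through $\Theta(n)$ nodes along a bar, so one subdivision point cannot absorb them all; this is precisely the quadratic obstacle the bar-simplification phase (Section~\ref{sec:bars}) is built to avoid. In the paper the $2n$ arises differently: the final perturbation lives in a strip system, and each vertex $u$ of $P$ is replaced by its two intersection points with $\partial D_u$ (entry and exit). No pre-subdivision is involved, and your ``single-Steiner-point budget per edge'' hypothesis is not what makes the construction work.

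Second, and more seriously, you do not address the size of the combinatorial data that must be carried through the reversal. The paper notes that ordinary signatures (edge orders along every segment) can be $\Theta(n^2)$ when spurs are present, so ``storing the cyclic order at each collapsed location'' and reversing in $O(1)$ per step is not feasible as stated. Section~\ref{sec:perturb} introduces a new $O(n)$-size representation, the \emph{bar-signature}, which stores a single total order per bar (outside clusters) rather than per segment, and then shows for each operation class (\spurshortcut, group splits, \pinextraction, \Vshortcut, \Lshortcut, \crimpreduction) how to update that order while reversing. The reversal of \Lshortcut\ is particularly delicate: phase~\ref{phase1} inserts crimps into cross-chains, and undoing a \crimpreduction\ na\"{\i}vely forces a $\Theta(n)$ reordering of edges in a bar. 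The paper handles this by partitioning the bar's edge set into pieces $E_1,\dots,E_{|K|+1}$ delimited by the cross-chains, maintaining the order within each piece separately, and taking a common linear extension only at the end. Without this (or an equivalent) mechanism, your ``locality'' and ``$O(1)$ per reverse step'' claims do not hold, and the $O(n\log n)$ bound is not established.
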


Our decision algorithm is detailed in Sections~\ref{sec:preprocess}--\ref{sec:tree-exp}.
It consists of three phases, simplifying the input polygon by a sequence of reduction steps.
First, the \emph{preprocessing} phase rules out edge crossings in $O(n\log n)$ time
and applies known reduction steps such as \emph{crimp reductions} and \emph{node expansions} (Section~\ref{sec:preprocess}).
Second, the \emph{bar simplification} phase successively eliminates all forks (Section~\ref{sec:bars}).
Third, the \emph{spur elimination} phase eliminates all spurs (Section~\ref{sec:tree-exp}).
When neither forks nor spurs are present, we can decide weak simplicity in $O(n)$ time~\cite{CDP+09}.
Finally, by reversing the sequence of operations, we can also perturb any weakly simple polygon into
a simple polygon in $O(n\log n)$ time (Section~\ref{sec:perturb}).

\section{Preliminaries}
\label{sec:preliminaries}
In this section, we review previously established definitions and known methods from~\cite{CEX15} and~\cite{CDP+09}.

\medskip\noindent{\bf Polygons and weak simplicity.}
An \emph{arc} in $\mathbb{R}^2$ is a continuous function $\gamma:[0,1]\rightarrow \mathbb{R}^2$.
A \emph{closed curve} is a continuous function (map) $\gamma:\mathbb{S}^1\rightarrow \mathbb{R}^2$.
A closed curve $\gamma$ is \emph{simple} (also known as a \emph{Jordan curve}) if it is injective.
A (\emph{simple}) \emph{polygon} is the image of a piecewise linear ({\em simple}) closed curve.
Thus a polygon $P$ can be represented by a cyclic sequence of points $(p_0,\ldots , p_{n-1})$, called \emph{vertices}, where the image of $\gamma$ consists of line segments $p_0p_1,\ldots , p_{n-2}p_{n-1}$, and $p_{n-1}p_0$ in this cyclic order. Note that a nonsimple polygon may have repeated vertices and overlapping edges~\cite{Gru12}. Similarly, a \emph{polygonal chain} (alternatively, \emph{path}) is the image of a piecewise linear arc, and can be represented by a sequence of points $[p_0,\ldots , p_{n-1}]$.

A polygon $P=(p_0,\ldots , p_{n-1})$ is \emph{weakly simple} if $n=2$,
or if $n>2$ and for every $\eps>0$ there is a simple polygon $(p'_0,\ldots , p'_{n-1})$ such that $|p_i,p'_i|<\eps$ for all $i=0,\ldots , n-1$. This definition is difficult to work with because a small perturbation of a vertex modifies the two incident edges, which may be long, and the effect of a perturbation is not localized. Combining earlier results from~\cite{CDR02}, \cite{CDP+09}, and \cite[Theorem 3.1]{Mor06}, an equivalent definition was formulated by Chang et al.~\cite{CEX15} in terms of Fr\'echet distance: A polygon given by $\gamma: \mathbb{S}^1\rightarrow \mathbb{R}^2$ is weakly simple if for every $\eps>0$ there is a simple closed curve $\gamma':\mathbb{S}^1\rightarrow \mathbb{R}^2$ such that ${\rm dist}_F(\gamma, \gamma')<\eps$, where ${\rm dist}_F$ denotes the Fr\'echet distance between two closed curves.
The curve $\gamma'$ can approximate an edge of the polygon by a polyline, and any perturbation of a vertex can be restricted to a small neighborhood. With this definition, recognizing weakly simple polygons becomes a combinatorial problem, as explained below. Note that in topology, the broader question of \emph{isotopic embeddability} has been considered~\cite{Min97,Sko03}: Given a continuous map $f:A\rightarrow \mathbb{R}^d$ for a simplicial complex $A$, is it isotopic to some \emph{injective} continuous map (i.e., \emph{embedding}) $g: A\rightarrow \mathbb{R}^d$?

\medskip\noindent{\bf Bar decomposition and image graph.}
Two edges of a polygon $P$ \emph{cross} if their interiors intersect at precisely one point; we call this an \emph{edge crossing}. Weakly simple polygons cannot have edge crossings. In the remainder of this section,
we assume that such crossings have been ruled out.
Two edges of $P$ \emph{overlap} if their intersection is a (nondegenerate) line segment.
The transitive closure of the overlap relation is an equivalence relation on the edges of $P$; see Figure~\ref{fig:bars}(a) where equivalence classes are represented by purple regions.
The union of all edges in an equivalence class is called a \emph{bar}.\footnote{We adopt terminology from~\cite{CEX15}.} All bars of a polygon can be computed in $O(n\log n)$ time~\cite{CEX15}.
The bars are open line segments that are pairwise disjoint.
There are at most $n$ bars, since the bars are unions of disjoint subsets of edges.

The vertices and bars of $P$ define a planar straight-line graph $G$, called the \emph{image graph} of $P$. We call the vertices and edges of $G$ \emph{nodes} and \emph{segments}\footnotemark[1] to distinguish them from the vertices and edges of $P$. Every node that is not in the interior of a bar is called \emph{sober}\footnotemark[1]. The set of nodes in $G$ is $\{p_0,\ldots , p_{n-1}\}$ (note that $P$ may have repeated vertices that correspond to the same node); two nodes are connected by a segment in $G$ if they are consecutive nodes along a bar; see Figure~\ref{fig:bars}(b).
Hence $G$ has $O(n)$ nodes and segments, and it can be computed in $O(n\log n)$ time~\cite{CEX15}. Note, however, that up to $O(n)$ edges of $P$ may pass through a node of $G$, and there may be $O(n^2)$ edge-node pairs such that an edge of $P$ passes through a node of $G$.
An $O(n \log n)$-time algorithm cannot afford to compute these pairs explicitly.

\begin{figure}[h!tbp]
\centering
		\def\svgwidth{\textwidth}
\begingroup%
  \makeatletter%
  \providecommand\color[2][]{%
    \errmessage{(Inkscape) Color is used for the text in Inkscape, but the package 'color.sty' is not loaded}%
    \renewcommand\color[2][]{}%
  }%
  \providecommand\transparent[1]{%
    \errmessage{(Inkscape) Transparency is used (non-zero) for the text in Inkscape, but the package 'transparent.sty' is not loaded}%
    \renewcommand\transparent[1]{}%
  }%
  \providecommand\rotatebox[2]{#2}%
  \ifx\svgwidth\undefined%
    \setlength{\unitlength}{540.55043945bp}%
    \ifx\svgscale\undefined%
      \relax%
    \else%
      \setlength{\unitlength}{\unitlength * \real{\svgscale}}%
    \fi%
  \else%
    \setlength{\unitlength}{\svgwidth}%
  \fi%
  \global\let\svgwidth\undefined%
  \global\let\svgscale\undefined%
  \makeatother%
  \begin{picture}(1,0.30037502)%
    \put(0,0){\includegraphics[width=\unitlength,page=1]{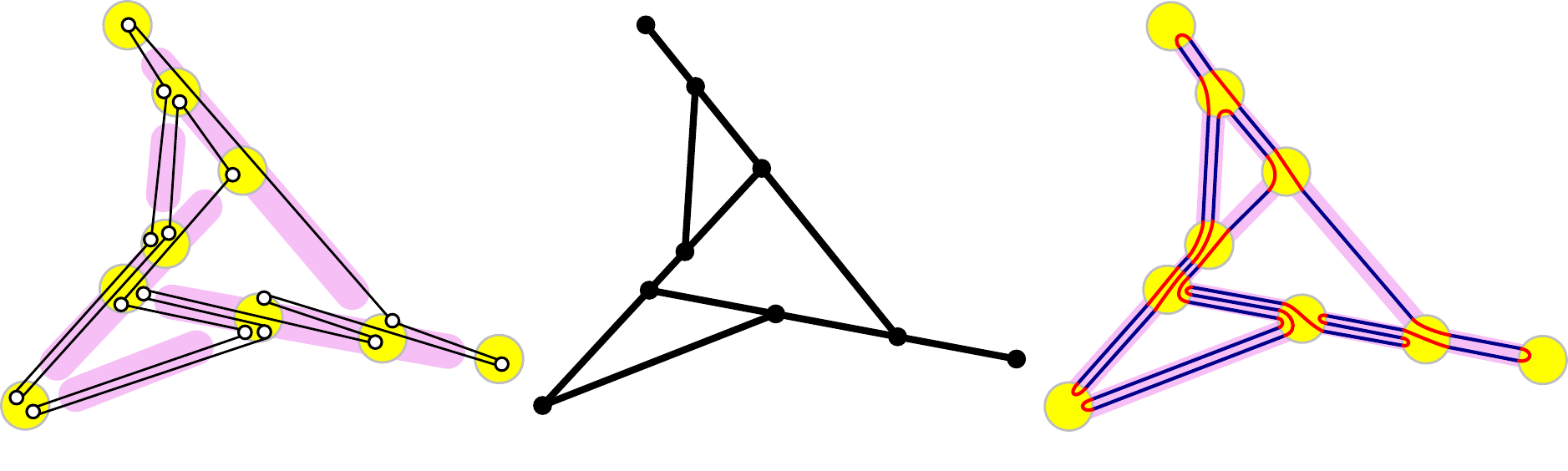}}%
    \put(0.12635056,0.01062479){\color[rgb]{0,0,0}\makebox(0,0)[lb]{\smash{(a)}}}%
    \put(0.46192666,0.01062479){\color[rgb]{0,0,0}\makebox(0,0)[lb]{\smash{(b)}}}%
    \put(0.80449391,0.01062479){\color[rgb]{0,0,0}\makebox(0,0)[lb]{\smash{(c)}}}%
  \end{picture}%
\endgroup%
\caption{\small
(a) The bar decomposition for a weakly simple polygon $P$ with 16 vertices
    ($P$ is perturbed into a simple polygon for clarity).
(b) The image graph of $P$.
(c) A perturbation in a strip system of $P$.
\label{fig:bars}}
\end{figure}

\noindent{\bf Operations.}
We use certain elementary operations that successively
modify a polygon and ultimately eliminate forks and spurs.
An operation that produces a weakly simple polygon if and only if it is performed on a weakly simple polygon is called \emph{ws-equivalent}.
Several such operations are already known
(e.g., crimp reduction, node expansion, bar expansion).
We shall use these and introduce several new operations in Sections~\ref{ssec:bar-expansion}--\ref{sec:tree-exp}.

\medskip\noindent{\bf Combinatorial characterization of weak simplicity.}
To show that an operation is ws-equivalent, it suffices to provide suitable simple $\eps$-perturbations for all $\eps>0$.
We use a combinatorial representation of an $\eps$-perturbation (independent of $\eps$ or any specific embedding).
When a weakly simple polygon $P$ is perturbed into a simple polygon, overlapping edges in $P$ are perturbed into interior-disjoint near-parallel edges, which define an ordering. It turns out that these orderings over all segments of the image graph are sufficient to encode an $\eps$-perturbation and to (re)construct an $\eps$-perturbation.

We rely on the notion of ``strip system'' introduced in \cite[Appendix B]{CEX15}.
Similar concepts have previously been used in \cite{CDR02,CDP+09,FT14,Min97,Sko03}.
Let $P$ be a polygon and $G$ its image graph. Without loss of generality, we assume that no bar is vertical
(so that the above-below relationship is defined between disjoint segments parallel to a bar).
For every $\eps>0$, the \emph{$\eps$-strip-system} of $P$ consists of the following regions:
\begin{itemize}\itemsep -2pt
\item For every node $u$ of $G$, let $D_u$ be a disk of radius $\eps$ centered at $u$.
\item For every segment $uv$, let the \emph{corridor} $N_{uv}$ be the set of points at distance at most
$\eps^2$ from $uv$, outside of the disks $D_u$ and $D_v$, that is,
$N_{uv}=\{p\in \mathbb{R}^2: {\rm dist}(p,uv)\leq \eps^2, p\not\in D_u\cup D_v\}$.
\end{itemize}
Denote by $U_\eps$ the union of all these disks and corridors. There is a sufficiently small $\eps_0=\eps_0(P)>0$, depending on $P$,
such that the disks $D_u$ are pairwise disjoint, the corridors $N_{uv}$ are pairwise disjoint, and
every corridor $N_{uv}$ of a segment intersects only the disks at its endpoints $D_u$ and $D_v$.
These properties hold for all $\eps$, $0<\eps<\eps_0$.

A polygon is \emph{in the $\eps$-strip-system} of $P$ if its edges alternate between an edge that connects the boundaries of two disks $D_u$ and $D_v$ and whose interior is contained in $N_{uv}$; and an edge between two points on the boundary of a disk. In particular, the edges of $P$ that lie in a disk $D_u$ or a corridor $N_{uv}$ form a perfect matching. See Figure~\ref{fig:bars}(c) for an example, where the edges within the disk $D_u$ are drawn with circular arcs for clarity.
Let $\Phi(P)$ be the set of simple polygons in the $\eps$-strip-system of $P$ that cross the disks and corridors in the same order as $P$ traverses the corresponding nodes and segments of $G$. It is clear that every $Q\in \Phi(P)$ is within Fr\'echet distance $\eps$ from $P$. By \cite[Theorem~B.2]{CEX15}, $P$ is weakly simple if and only if $\Phi(P)\neq \emptyset$.

\medskip\noindent{\bf Combinatorial representation by signatures.}
Let $Q$ be a polygon in the strip system of $P$. For each segment $uv$, the above-below relationship of the edges of $Q$ in $N_{uv}$ is a total order. We define the \emph{signature} of $Q\in \Phi(P)$, denoted $\sigma(Q)$, as the collection of these total orders for all segments of $G$.

Given the signature $\sigma(Q)$ of a polygon $Q$ in the strip system of $P$, we can easily (re)construct a simple polygon $Q'$ with the same signature in the $\eps$-strip-system of $P$ for any $0<\eps<\eps_0$. For every segment $uv$ of $G$, let the \emph{volume} $\vol(uv)$ be the number of edges of $P$ that lie on $uv$. Place $\vol(uv)$ parallel line segments between $\partial D_u$ and $\partial D_v$ in $N_{uv}$ of the $\eps$-strip-system of $P$. Finally, for every disk $D_u$, construct a straight-line perfect matching between the endpoints of these edges that lie in $\partial D_u$: connect the endpoints of two edges if they correspond to adjacent edges of $P$. It is easily verified that the Fr\'echet distance between $Q$ and $Q'$ is at most $2\eps$. Furthermore, $Q\in \Phi(P)$ implies $Q'\in \Phi(P)$, since $Q$ and $Q'$ determine the same perfect matching between corresponding endpoints on $\partial D_u$ at every node $u$.

\begin{remark}\label{rem:combin}
{\rm The construction above has two consequences: (1) To prove weak simplicity, it is enough to find a signature that defines a simple perturbation. In other words, the signature can witness weak simplicity (independent of the value of $\eps$). (2) Weak simplicity of a polygon depends only on the \emph{combinatorial embedding} of the image graph $G$ (i.e., the counterclockwise order of edges incident to each vertex), as long as $G$ is a planar graph. Consequently, when an operation modifies the image graph, it is enough to maintain the combinatorial embedding of $G$ (the  precise coordinates of the nodes do not matter).}
\end{remark}

In the presence of spurs, the size of a signature is $O(n^2)$, and this bound is the best possible.
We use this simple combinatorial representation in our proofs of correctness, but our algorithm does not maintain it explicitly. In Section~\ref{sec:perturb}, we introduce another combinatorial representation of $O(n)$ size that uses
the ordering of the edges in each bar (rather than each segment) of the image graph.

\medskip\noindent{\bf Combinatorially different perturbations.}
In the absence of spurs, a polygon $P$ determines a unique noncrossing perfect matching in each disk $D_u$, hence a unique noncrossing 2-regular graph in the $\eps$-strip-system of $P$~\cite[Section~3.3]{CEX15}.
Consequently, to decide whether $P$ is weakly simple it is enough to check whether this graph is connected. The uniqueness no longer holds in the presence of spurs. In fact, it is not difficult to construct weakly simple $n$-gons that admit $2^{\Theta(n)}$ perturbations into simple polygons that are combinatorially different (i.e., have different bar-signatures); see Figure~\ref{fig:ambig}.

\begin{figure}[htb]
\centering
\includegraphics[width=.7\textwidth]{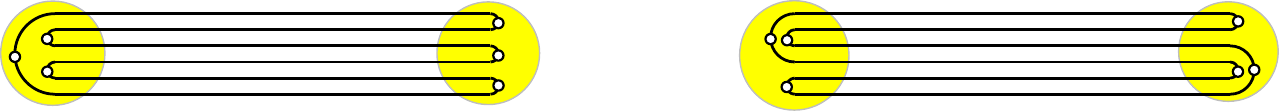}
\caption{\small
Two perturbations of a weakly simple polygon on 6 vertices (all of them spurs) that
alternate between two distinct points in the plane.
\label{fig:ambig}}
\end{figure}

\section{Preprocessing}
\label{sec:preprocess}

We are given a polygon $P=(p_0,\ldots , p_{n-1})$ in the plane. By a standard line sweep~\cite{SH76},
we can test whether any two edges properly cross;
if they do, the algorithm halts and reports that $P$ is not weakly simple.
We then simplify the polygon, using some known steps from~\cite{ABD+09,CEX15},
and some new ones. All of this takes $O(n\log n)$ time.

\subsection{Crimp reduction}
\label{ssec:crimp}

Arkin et al.~\cite{ABD+09} gave an $O(n)$-time algorithm for recognizing  weakly simple $n$-gons
in the special case where all edges are collinear (in the context of flat foldability of a polygonal linkage). They defined the ws-equivalent \crimpreduction\/ operation. A \emph{crimp} is a chain of three consecutive collinear edges $[a,b,c,d]$ such that both the first edge $[a,b]$ and the last edge $[c,d]$ contain the middle edge $[b,c]$ (the containment need not be proper).
The operation \crimpreduction$(a,b,c,d)$ replaces the crimp $[a,b,c,d]$ with edge $[a,d]$; see Figure~\ref{fig:crimp}.

\begin{figure}[htbp]
\centering
	\def\svgwidth{.8\textwidth}
\begingroup%
  \makeatletter%
  \providecommand\color[2][]{%
    \errmessage{(Inkscape) Color is used for the text in Inkscape, but the package 'color.sty' is not loaded}%
    \renewcommand\color[2][]{}%
  }%
  \providecommand\transparent[1]{%
    \errmessage{(Inkscape) Transparency is used (non-zero) for the text in Inkscape, but the package 'transparent.sty' is not loaded}%
    \renewcommand\transparent[1]{}%
  }%
  \providecommand\rotatebox[2]{#2}%
  \ifx\svgwidth\undefined%
    \setlength{\unitlength}{326.48100586bp}%
    \ifx\svgscale\undefined%
      \relax%
    \else%
      \setlength{\unitlength}{\unitlength * \real{\svgscale}}%
    \fi%
  \else%
    \setlength{\unitlength}{\svgwidth}%
  \fi%
  \global\let\svgwidth\undefined%
  \global\let\svgscale\undefined%
  \makeatother%
  \begin{picture}(1,0.13956355)%
    \put(0,0){\includegraphics[width=\unitlength,page=1]{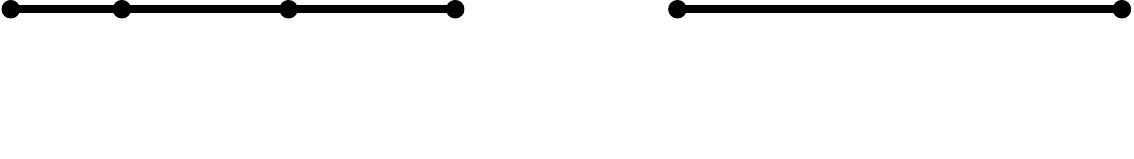}}%
    \put(0.49064619,0.1187872){\color[rgb]{0,0,0}\makebox(0,0)[lb]{\smash{$\Rightarrow$}}}%
    \put(0,0){\includegraphics[width=\unitlength,page=2]{fig-crimp-v2.pdf}}%
    \put(0.48889,0.03197444){\color[rgb]{0,0,0}\makebox(0,0)[lb]{\smash{$\Rightarrow$}}}%
    \put(0.00269418,0.07893944){\color[rgb]{0,0,0}\makebox(0,0)[lb]{\smash{$a$}}}%
    \put(0.10220535,0.07893944){\color[rgb]{0,0,0}\makebox(0,0)[lb]{\smash{$c$}}}%
    \put(0.2505992,0.07893944){\color[rgb]{0,0,0}\makebox(0,0)[lb]{\smash{$b$}}}%
    \put(0.3958506,0.07893944){\color[rgb]{0,0,0}\makebox(0,0)[lb]{\smash{$d$}}}%
    \put(0.59030389,0.07893944){\color[rgb]{0,0,0}\makebox(0,0)[lb]{\smash{$a$}}}%
    \put(0.98346023,0.07893944){\color[rgb]{0,0,0}\makebox(0,0)[lb]{\smash{$d$}}}%
  \end{picture}%
\endgroup%
\caption{\small A crimp reduction replaces
$[a,b,c,d]$ with $[a,d]$.
Top: image graph. Bottom: polygon.
\label{fig:crimp}}
\end{figure}

\begin{lemma}\label{lem:crimp}
The \crimpreduction\/ operation is ws-equivalent.
\end{lemma}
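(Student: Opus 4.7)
I would prove Lemma~\ref{lem:crimp} by invoking the signature/strip-system characterization of weak simplicity from Remark~\ref{rem:combin}. Both directions of the ws-equivalence reduce to local surgery inside a tubular neighborhood of the bar containing the four collinear points $a,c,b,d$ (listed in their order along the supporting line, since the containments force $a\leq c<b\leq d$). Let $P'$ denote the polygon obtained from $P$ by \crimpreduction$(a,b,c,d)$; the image graphs $G$ and $G'$ agree outside this bar, so any $\eps$-perturbation can be copied verbatim outside a small neighborhood of it. Thus it suffices, given a simple polygon in the $\eps$-strip-system of one polygon, to construct a simple polygon in the $\eps'$-strip-system of the other that agrees outside this neighborhood.

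For the forward direction, fix $Q\in\Phi(P)$ and identify the sub-chain $\alpha$ of $Q$ that perturbs the three crimp edges $[a,b]$, $[b,c]$, $[c,d]$. The chain $\alpha$ starts on $\partial D_a$, ends on $\partial D_d$, and lies entirely inside the disks and corridors along the bar, zig-zagging along the line with net displacement from $a$ to $d$. I would delete $\alpha$ and replace it by a single monotone arc $\alpha'$ from the same start point in $\partial D_a$ to the same end point in $\partial D_d$, routed through the single corridor of the strip system of $P'$ that corresponds to the bar. The resulting curve $Q'$ belongs to $\Phi(P')$. Simplicity is preserved because $\alpha'$ can be drawn inside an arbitrarily thin tube following $\alpha$; any edge of $Q$ that would cross $\alpha'$ must cross $\alpha$ by a standard continuity argument, contradicting simplicity of $Q$.

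For the reverse direction, fix $Q'\in\Phi(P')$ and consider the perturbed edge $\beta$ of $Q'$ that realizes $[a,d]$ in the corridor spanning the bar. I would replace $\beta$ by a polyline $\widetilde{a}\to \widetilde{b}\to \widetilde{c}\to \widetilde{d}$ that follows $\beta$ while making short detours through small neighborhoods of $b$ and $c$, using the newly introduced disks $D_b,D_c$ and the subdivided corridors $N_{ac},N_{cb},N_{bd}$ of the strip system of $P$. Since the detours can be confined to a tube about $\beta$ whose width is smaller than the distance from $\beta$ to every other edge of $Q'$, no new crossings arise and the new polygon lies in $\Phi(P)$.

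The main obstacle is checking that these local replacements respect the signature constraints, i.e., that the inserted or deleted arcs fit consistently into the above/below orderings within each corridor. Locality together with the freedom to choose the tube width smaller than the inter-edge spacing handles both directions, and the monotone structure of the crimp (all three edges are collinear and together cover only the segment $[a,d]$) ensures that no global rerouting of the remainder of the polygon is required.
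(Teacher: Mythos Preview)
Your reverse direction is essentially the paper's: the three crimp edges can be inserted as three consecutive lanes replacing the single edge $[a,d]$, and being consecutive they enclose no other edge of $Q'$.

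The forward direction has a genuine gap. The assertion that the monotone arc $\alpha'$ ``can be drawn inside an arbitrarily thin tube following $\alpha$'' is self-contradictory: $\alpha$ reverses direction twice, so no monotone arc from its start point to its end point stays in a thin neighborhood of it. More importantly, the accompanying claim (``any edge of $Q$ that would cross $\alpha'$ must cross $\alpha$'') is false. In the paper's notation, let $W_{top}$ (resp.\ $W_{bot}$) be the maximal subpaths of $Q$ lying in the convex hull of $D_b\cup D_c$ strictly between the perturbed $[a,b]$ and $[b,c]$ (resp.\ between $[b,c]$ and $[c,d]$). A path in $W_{top}$ may extend through $D_c$ into the corridor over $ac$ (necessarily below the perturbed $[a,b]$ there), and symmetrically a path in $W_{bot}$ may extend through $D_b$ into the corridor over $bd$ (above the perturbed $[c,d]$). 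Such paths never meet $\alpha$. But $\alpha'$ must occupy the lane of $[a,b]$ over $ac$ and the lane of $[c,d]$ over $bd$ (to leave the matchings in $D_a$ and $D_d$ intact), so its chord inside $D_c$ or inside $D_b$ must drop from the $[a,b]$-lane to the $[c,d]$-lane; whichever disk absorbs this drop, the corresponding through-path from $W_{top}$ or $W_{bot}$ is crossed.

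This is exactly why the paper's proof does more than replace the three edges by one: after routing the new edge in the lanes of $[a,b]$, $[b,c]$, $[c,d]$ over $ac$, $cb$, $bd$ respectively, it additionally \emph{exchanges} the blocks $W_{top}$ and $W_{bot}$ about the middle lane, preserving the internal order of each. That exchange is precisely the rerouting of other parts of $Q$ that your last paragraph declares unnecessary; without it, the forward implication does not go through.
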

\begin{proof}
Let $P_1$ and $P_2$ be two polygons such that $P_2$ is obtained from $P_1$ by the operation \crimpreduction$(a,b,c,d)$. Without loss of generality, assume that $ad$ is horizontal with $a$ on the left and $d$ on the right.

First assume that $P_1$ is weakly simple. Then there exists a simple polygon $Q_1\in \Phi(P_1)$.
We modify $Q_1$ to obtain a simple polygon $Q_2\in \Phi(P_2)$. Without loss of generality, assume that edge $[a,b]$ is above $[b,c]$ (consequently, $[c,d]$ is below $[b,c]$) in $Q_1$. The modification involves the perfect matchings at the disks $D_b$ and $D_c$, and all disks and corridors along the line segment $bc$. Denote by $W_{top}$ the set of maximal paths that lie in the convex hull of $D_b\cup D_c$, below $[a,b]$ and above $[b,c]$; similarly, let $W_{bot}$ be the set of maximal paths that lie in the convex hull of $D_b\cup D_c$, below $[b,c]$ and above $[c,d]$. We proceed in two steps; refer to Figure~\ref{fig:crimp1}.
First, replace the path $[a,b,c,d]$ with the path $[a,c,b,d]$ such that the new edge $[a,c]$ replaces the old $[a,b]$ in the edge ordering of segment $ac$, the new $[c,b]$ replaces $[b,c]$ in the segments contained in $bc$, and finally the new $[b,d]$ replaces $[c,d]$ in $bd$.
Second, exchange $W_{top}$ and $W_{bot}$ such that the top-to-bottom order within each set of paths remains the same. Since the top-to-bottom order within $W_{top}$ and $W_{bot}$ is preserved, and the paths in $W_{top}$ (resp., $W_{bot}$) lie below (resp., above) the new path $[a,c,b,d]$, no edge crossings have been introduced. We obtain a simple polygon $Q_2\in \Phi(P_2)$, which shows that $P_2$ is weakly simple.

\begin{figure}[htb]
\centering
	\def\svgwidth{.9\textwidth}
\begingroup%
  \makeatletter%
  \providecommand\color[2][]{%
    \errmessage{(Inkscape) Color is used for the text in Inkscape, but the package 'color.sty' is not loaded}%
    \renewcommand\color[2][]{}%
  }%
  \providecommand\transparent[1]{%
    \errmessage{(Inkscape) Transparency is used (non-zero) for the text in Inkscape, but the package 'transparent.sty' is not loaded}%
    \renewcommand\transparent[1]{}%
  }%
  \providecommand\rotatebox[2]{#2}%
  \ifx\svgwidth\undefined%
    \setlength{\unitlength}{576.4bp}%
    \ifx\svgscale\undefined%
      \relax%
    \else%
      \setlength{\unitlength}{\unitlength * \real{\svgscale}}%
    \fi%
  \else%
    \setlength{\unitlength}{\svgwidth}%
  \fi%
  \global\let\svgwidth\undefined%
  \global\let\svgscale\undefined%
  \makeatother%
  \begin{picture}(1,0.07730382)%
    \put(0.10417988,-0.00233975){\color[rgb]{0,0,0}\makebox(0,0)[lb]{\smash{$c$}}}%
    \put(0,0){\includegraphics[width=\unitlength,page=1]{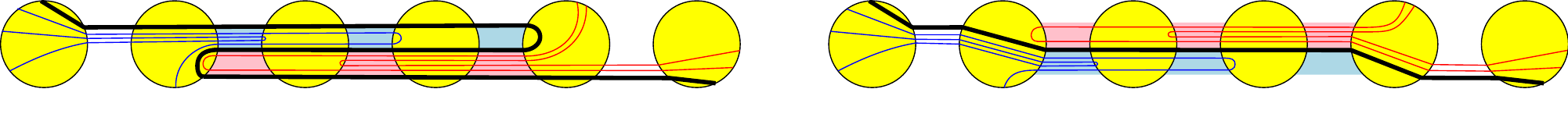}}%
    \put(0.02093038,-0.00233975){\color[rgb]{0,0,0}\makebox(0,0)[lb]{\smash{$a$}}}%
    \put(0.3541853,-0.00233975){\color[rgb]{0,0,0}\makebox(0,0)[lb]{\smash{$b$}}}%
    \put(0.43717783,-0.00233975){\color[rgb]{0,0,0}\makebox(0,0)[lb]{\smash{$d$}}}%
    \put(0.63296835,-0.00233975){\color[rgb]{0,0,0}\makebox(0,0)[lb]{\smash{$c$}}}%
    \put(0.54971884,-0.00233975){\color[rgb]{0,0,0}\makebox(0,0)[lb]{\smash{$a$}}}%
    \put(0.88297376,-0.00233975){\color[rgb]{0,0,0}\makebox(0,0)[lb]{\smash{$b$}}}%
    \put(0.96596627,-0.00233975){\color[rgb]{0,0,0}\makebox(0,0)[lb]{\smash{$d$}}}%
    \put(0.48970526,0.03936875){\color[rgb]{0,0,0}\makebox(0,0)[lb]{\smash{$\Rightarrow$}}}%
  \end{picture}%
\endgroup%
\caption{\small The operation \crimpreduction\ replaces a crimp $[a,b,c,d]$ with an edge $[ad]$.
\label{fig:crimp1}}
\end{figure}

\begin{figure}[htb]
\centering
	\def\svgwidth{.9\textwidth}
\begingroup%
  \makeatletter%
  \providecommand\color[2][]{%
    \errmessage{(Inkscape) Color is used for the text in Inkscape, but the package 'color.sty' is not loaded}%
    \renewcommand\color[2][]{}%
  }%
  \providecommand\transparent[1]{%
    \errmessage{(Inkscape) Transparency is used (non-zero) for the text in Inkscape, but the package 'transparent.sty' is not loaded}%
    \renewcommand\transparent[1]{}%
  }%
  \providecommand\rotatebox[2]{#2}%
  \ifx\svgwidth\undefined%
    \setlength{\unitlength}{575.77016602bp}%
    \ifx\svgscale\undefined%
      \relax%
    \else%
      \setlength{\unitlength}{\unitlength * \real{\svgscale}}%
    \fi%
  \else%
    \setlength{\unitlength}{\svgwidth}%
  \fi%
  \global\let\svgwidth\undefined%
  \global\let\svgscale\undefined%
  \makeatother%
  \begin{picture}(1,0.07925906)%
    \put(0,0){\includegraphics[width=\unitlength,page=1]{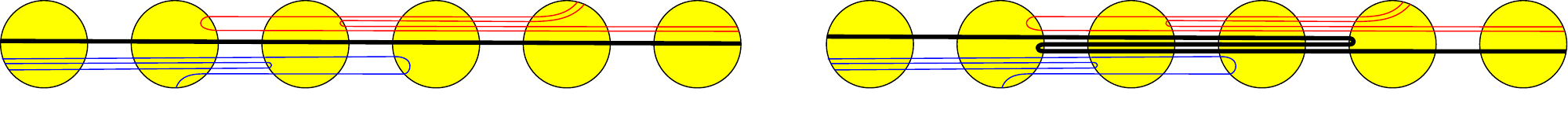}}%
    \put(0.1044071,0.00049815){\color[rgb]{0,0,0}\makebox(0,0)[lb]{\smash{$c$}}}%
    \put(0.02106655,0.00049815){\color[rgb]{0,0,0}\makebox(0,0)[lb]{\smash{$a$}}}%
    \put(0.35468598,0.00049815){\color[rgb]{0,0,0}\makebox(0,0)[lb]{\smash{$b$}}}%
    \put(0.43776928,0.00049815){\color[rgb]{0,0,0}\makebox(0,0)[lb]{\smash{$d$}}}%
    \put(0.63160524,0.00049815){\color[rgb]{0,0,0}\makebox(0,0)[lb]{\smash{$c$}}}%
    \put(0.54826467,0.00049815){\color[rgb]{0,0,0}\makebox(0,0)[lb]{\smash{$a$}}}%
    \put(0.88188408,0.00049815){\color[rgb]{0,0,0}\makebox(0,0)[lb]{\smash{$b$}}}%
    \put(0.96496738,0.00049815){\color[rgb]{0,0,0}\makebox(0,0)[lb]{\smash{$d$}}}%
    \put(0.48796277,0.04370109){\color[rgb]{0,0,0}\makebox(0,0)[lb]{\smash{$\Rightarrow$}}}%
  \end{picture}%
\endgroup%
\caption{\small The reversal of \crimpreduction\ replaces edge $[ad]$ with a crimp $[a,b,c,d]$.
\label{fig:crimp2}}
\end{figure}

Next assume that $P_2$ is weakly simple. Then, there exists a simple polygon $Q_2\in \Phi(P_2)$.
We modify $Q_2$ to obtain a simple polygon $Q_1\in \Phi(P_1)$; refer to Figure~\ref{fig:crimp2}.
Replace edge $[a,d]$ by $[a,b,c,d]$ also replacing $[a,d]$ in the ordering of the affected segments by $[c,d]$, $[b,c]$, and $[a,b]$, in this order.
The new ordering produces a polygon $Q_1$ in the strip system of $P$.
Because $Q_2$ is simple, by construction the new matchings do not interact with the preexisting edges in the disks. Hence, $Q_1\in \Phi(P_1)$, which shows that $P_1$ is weakly simple.
\end{proof}

Given a chain of two edges $[a,b,c]$ such that $[a,b]$ and $[b,c]$ are collinear but do not overlap, the {\sf merge} operation replaces $[a,b,c]$ with a single edge $[a,c]$. The merge operation (as well as its inverse, {\sf subdivision}) is ws-equivalent by the definition of weak simplicity in terms of  Fr\'echet distance~\cite{CEX15}.
If we greedily apply \crimpreduction\ and {\sf merge} operations,
in linear time we obtain a polygon with the following two properties:

\begin{enumerate}[label=(A\arabic*)]\itemsep -1pt
\item \label{cond:A1} Every two consecutive collinear edges overlap (i.e., form a spur).
\item \label{cond:A2} No three consecutive collinear edges form a crimp.
\end{enumerate}

Assuming properties \ref{cond:A1} and \ref{cond:A2}, we can characterize a chain of collinear edges with the sequence of their edge lengths.

\begin{lemma}\label{lem:irreducible}
Let $C = [e_i,\ldots ,e_{k}]$ be a chain of collinear edges in a polygon with properties \ref{cond:A1} and \ref{cond:A2}.
Then the sequence of edge lengths $(|e_i|,\ldots , |e_{k}|)$ is unimodal (all local maxima are consecutive); and no two consecutive edges have the same length, except possibly the maximal edge length that can occur at most twice.
\end{lemma}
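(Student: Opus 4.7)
The plan is to reduce properties~\ref{cond:A1} and \ref{cond:A2} to a single combinatorial condition on the sequence of edge lengths $\ell_j := |e_j|$, and then to carry out an elementary analysis of sequences satisfying that condition.

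Orient the supporting line of $C$ as the $x$-axis. Property~\ref{cond:A1} says any two consecutive edges $e_j, e_{j+1}$ of $C$ overlap; combined with collinearity and a shared endpoint, this means the free endpoints of $e_j$ and $e_{j+1}$ lie on the same side of the shared vertex, so that consecutive edges point in opposite directions along the line. Under this description, $e_{j+1} \subseteq e_j$ iff $\ell_{j+1} \le \ell_j$, and similarly for the other containment. Hence a triple $(e_{j-1}, e_j, e_{j+1})$ is a crimp iff $\ell_{j-1} \ge \ell_j$ and $\ell_{j+1} \ge \ell_j$, i.e., iff $\ell_j$ is a weak local minimum of the length sequence. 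Property~\ref{cond:A2} thus becomes the purely combinatorial condition $(\star)$: for every interior index $j$ (with $i < j < k$), either $\ell_{j-1} < \ell_j$ or $\ell_{j+1} < \ell_j$.

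Let $M = \max_j \ell_j$ and let $\alpha \le \beta$ be the smallest and largest indices attaining $M$. I would establish three facts implied by $(\star)$: (a) every index in $[\alpha, \beta]$ attains $M$; (b) $\beta - \alpha \le 1$; and (c) $\ell$ is strictly increasing on $[i, \alpha]$ and strictly decreasing on $[\beta, k]$. For (a), if some $\alpha < j < \beta$ has $\ell_j < M$, take the smallest such $j$; then $\ell_{j-1} = M > \ell_j$, so $(\star)$ forces $\ell_{j+1} < \ell_j$, and iterating contradicts $\ell_\beta = M$. For (b), three consecutive indices all attaining $M$ would make the middle index a weak local minimum, contradicting $(\star)$. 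For (c), let $j < \alpha$ be the largest index with $\ell_j \ge \ell_{j+1}$ (if one exists); one checks that $j + 1 < \alpha$ (otherwise $\ell_j \ge \ell_\alpha = M$ contradicts the minimality of $\alpha$), so by maximality of $j$ we have $\ell_{j+1} < \ell_{j+2}$, making $\ell_{j+1}$ an interior weak local minimum and contradicting $(\star)$. The decreasing side is symmetric.

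Together (a)--(c) say that the length sequence is unimodal with a peak plateau of length at most $2$, which directly yields both claims of the lemma: off the plateau the sequence is strictly monotonic (so no two consecutive values are equal unless both equal $M$), and $M$ is attained at most twice. The main obstacle is the opening reduction: property~\ref{cond:A1} must be invoked carefully to rephrase the geometric crimp condition as ``weak local minimum of the length sequence.'' Once that reduction is in hand, the rest is a short discrete argument.
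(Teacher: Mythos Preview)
Your proof is correct and follows essentially the same approach as the paper's: both reduce properties~\ref{cond:A1}--\ref{cond:A2} to the single combinatorial fact that no interior edge length is a weak local minimum, and then derive unimodality from that. Your argument is simply more detailed in spelling out steps (a)--(c), whereas the paper asserts unimodality directly once the weak-local-minimum observation is made.
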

\begin{proof}
For every $j$ such that $i<j<k$, consider $|e_j|$. If $|e_{j{-}1}|$ and  $|e_{j+1}|$ are at least as large as $|e_j|$,
then the three edges form a crimp, by \ref{cond:A1}. However, this contradicts \ref{cond:A2}.
This proves unimodality, and that no three consecutive edges can have the same length.
In fact if $|e_j|$ is not maximal, one neighbor must be strictly smaller, to avoid the same contradiction.
\end{proof}

The operations introduced in Section~\ref{sec:bars} maintain properties \ref{cond:A1}--\ref{cond:A2} for all maximal paths inside an elliptical disk $D_b$.

\subsection{Node expansion}
Compute the bar decomposition of $P$ and its image graph $G$ (defined in Section~\ref{sec:preliminaries}, see Figure~\ref{fig:bars}).
For every sober node of the image graph, we perform the ws-equivalent \nodeexpansion\ operation, described in~\cite[Section~3]{CEX15} (Cortese et al.~\cite{CDP+09} call this a \emph{cluster expansion}).
Let $u$ be a sober node of the image graph. Let $D_u$ be the disk centered at $u$ with radius $\delta>0$  sufficiently small so that $D_u$ intersects only the segments incident to $u$. For each segment $ux$ incident to $u$, create a new node $u^x$ at the intersection point $ux\cap \partial D_u$. Then modify $P$ by replacing each subpath $[x,u,y]$ passing through $u$ by $[x,u^x,u^y,y]$; see Figure~\ref{fig:node-expansion}.
If a node expansion produces an edge crossing, report that $P$ is not weakly simple.
\begin{figure}[htb]
\centering
	\def\svgwidth{.75\textwidth}
\begingroup%
  \makeatletter%
  \providecommand\color[2][]{%
    \errmessage{(Inkscape) Color is used for the text in Inkscape, but the package 'color.sty' is not loaded}%
    \renewcommand\color[2][]{}%
  }%
  \providecommand\transparent[1]{%
    \errmessage{(Inkscape) Transparency is used (non-zero) for the text in Inkscape, but the package 'transparent.sty' is not loaded}%
    \renewcommand\transparent[1]{}%
  }%
  \providecommand\rotatebox[2]{#2}%
  \ifx\svgwidth\undefined%
    \setlength{\unitlength}{269.43876953bp}%
    \ifx\svgscale\undefined%
      \relax%
    \else%
      \setlength{\unitlength}{\unitlength * \real{\svgscale}}%
    \fi%
  \else%
    \setlength{\unitlength}{\svgwidth}%
  \fi%
  \global\let\svgwidth\undefined%
  \global\let\svgscale\undefined%
  \makeatother%
  \begin{picture}(1,0.17891604)%
    \put(0,0){\includegraphics[width=\unitlength,page=1]{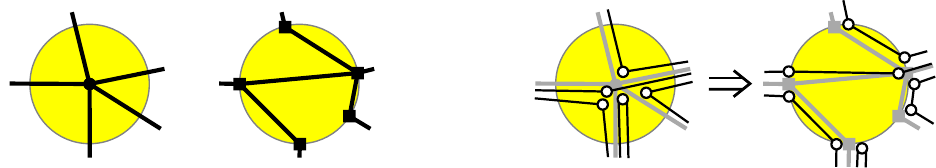}}%
    \put(-0.00254991,0.12631645){\color[rgb]{0,0,0}\makebox(0,0)[lb]{\smash{$D_u$}}}%
    \put(0,0){\includegraphics[width=\unitlength,page=2]{fig-nodeexpansion.pdf}}%
  \end{picture}%
\endgroup%
\caption{Node expansion. (Left) Changes in the image graph. (Right) Changes in $P$ (the vertices are perturbed for clarity).
New nodes are shown as squares.}
\label{fig:node-expansion}
\end{figure}

\subsection{Bar expansion}
\label{ssec:bar-expansion}
Chang et al.~\cite[Section 4]{CEX15} define a bar expansion operation.
In this paper, we refer to it as \oldbarexp.
For a bar $b$ of the image graph, draw a long and narrow ellipse $D_b$ around the interior nodes of $b$,
create subdivision vertices at the intersection of $\partial D_b$ with the edges, and replace each maximal path in $D_b$ by a straight-line edge. If $b$ contains no spurs, \oldbarexp\ is known to be ws-equivalent~\cite{CEX15}.
Otherwise, it can produce false positives, hence it is not ws-equivalent; see Figure~\ref{fig:normal-expansion} for an example.
\begin{figure}[htbp]
\centering
	\def\svgwidth{\textwidth}
\begingroup%
  \makeatletter%
  \providecommand\color[2][]{%
    \errmessage{(Inkscape) Color is used for the text in Inkscape, but the package 'color.sty' is not loaded}%
    \renewcommand\color[2][]{}%
  }%
  \providecommand\transparent[1]{%
    \errmessage{(Inkscape) Transparency is used (non-zero) for the text in Inkscape, but the package 'transparent.sty' is not loaded}%
    \renewcommand\transparent[1]{}%
  }%
  \providecommand\rotatebox[2]{#2}%
  \ifx\svgwidth\undefined%
    \setlength{\unitlength}{357.16848145bp}%
    \ifx\svgscale\undefined%
      \relax%
    \else%
      \setlength{\unitlength}{\unitlength * \real{\svgscale}}%
    \fi%
  \else%
    \setlength{\unitlength}{\svgwidth}%
  \fi%
  \global\let\svgwidth\undefined%
  \global\let\svgscale\undefined%
  \makeatother%
  \begin{picture}(1,0.16755263)%
    \put(0,0){\includegraphics[width=\unitlength,page=1]{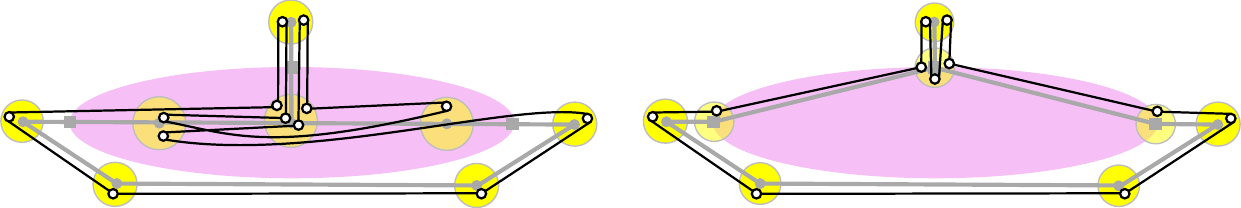}}%
    \put(0.09165214,0.10714039){\color[rgb]{0,0,0}\makebox(0,0)[lb]{\smash{$D_b$}}}%
    \put(0.49213378,0.02960972){\color[rgb]{0,0,0}\makebox(0,0)[lb]{\smash{$\Rightarrow$}}}%
  \end{picture}%
\endgroup%
\caption{
The \oldbarexp\ converts a non-weakly simple polygon to a weakly simple one.
}
\label{fig:normal-expansion}
\end{figure}

\medskip\noindent{\bf New bar expansion operation.}
Let $b$ be a bar in the image graph with at least one interior node; see Figure~\ref{fig:bar-expansion}.
Without loss of generality, assume that $b$ is horizontal.
Let $D_b$ be an ellipse whose major axis is in $b$ such that $D_b$ contains all interior nodes of $b$ (nodes in $b$ except its endpoints), but does not contain any other node of the image graph and does not intersect any segment that is not incident to some node inside $D_b$.

Similar to \oldbarexp, the operation \newbarexp\ introduces subdivision vertices on $\partial D_b$, however we keep all interior vertices of a bar at their original positions.
{In Section~\ref{sec:bars}, we apply a sequence of new operations to eliminate all vertices on $b$ sequentially while creating new nodes in the vicinity of $D_b$.
Our bar expansion operation can be considered as a preprocessing step for this subroutine.}

For each segment $ux$ between a node $u\in b\cap D_b$ and a node $x\not\in b$, create a new node $u^x$ at the intersection point $ux\cap \partial D_b$ and subdivide every edge $[u,x]$  to a path $[u,u^x,x]$. For each endpoint $v$ of $b$, create two new nodes, $v'$ and $v''$, as follows.
Node $v$ is adjacent to a unique segment $vw\subset b$, where $w\in b\cap D_b$.
Create a new node $v'\in \partial D_b$ sufficiently close to the intersection point $vw\cap \partial D_b$, but strictly above $b$; and create a new node $v''$ in the interior of segment $vw\cap D_b$. Subdivide every edge $[v,y]$, where $y\in b$, into a path $[v,v',v'',y]$.
Since the \newbarexp\ operation consists of only subdivisions (and slight perturbations of the edges passing through the end-segments of the bars), it is ws-equivalent.

\begin{figure}[htb]
\centering
	\def\svgwidth{\textwidth}
\begingroup%
  \makeatletter%
  \providecommand\color[2][]{%
    \errmessage{(Inkscape) Color is used for the text in Inkscape, but the package 'color.sty' is not loaded}%
    \renewcommand\color[2][]{}%
  }%
  \providecommand\transparent[1]{%
    \errmessage{(Inkscape) Transparency is used (non-zero) for the text in Inkscape, but the package 'transparent.sty' is not loaded}%
    \renewcommand\transparent[1]{}%
  }%
  \providecommand\rotatebox[2]{#2}%
  \ifx\svgwidth\undefined%
    \setlength{\unitlength}{386.07536621bp}%
    \ifx\svgscale\undefined%
      \relax%
    \else%
      \setlength{\unitlength}{\unitlength * \real{\svgscale}}%
    \fi%
  \else%
    \setlength{\unitlength}{\svgwidth}%
  \fi%
  \global\let\svgwidth\undefined%
  \global\let\svgscale\undefined%
  \makeatother%
  \begin{picture}(1,0.11891201)%
    \put(0,0){\includegraphics[width=\unitlength,page=1]{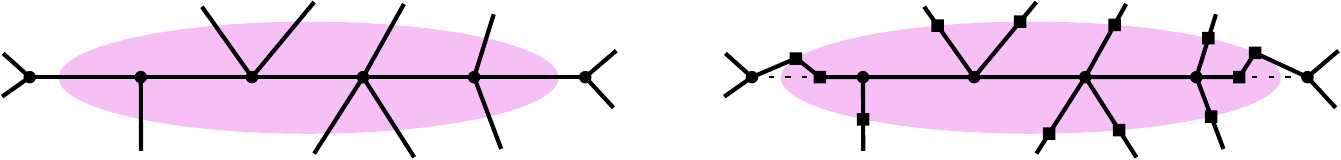}}%
    \put(0.06695963,0.09834062){\color[rgb]{0,0,0}\makebox(0,0)[lb]{\smash{$D_b$}}}%
    \put(0.60527625,0.09834062){\color[rgb]{0,0,0}\makebox(0,0)[lb]{\smash{$D_b$}}}%
    \put(0.48784422,0.05516142){\color[rgb]{0,0,0}\makebox(0,0)[lb]{\smash{$\Rightarrow$}}}%
  \end{picture}%
\endgroup%
\caption{The changes in the image graph caused by \newbarexp.}
\label{fig:bar-expansion}
\end{figure}

\medskip\noindent{\bf Crossing paths.}
Apart from \nodeexpansion\ and \oldbarexp, none of our operations creates edge crossings. In some cases, our bar simplification algorithm (Section~\ref{sec:bars}) detects whether two subpaths cross. Crossings between overlapping paths are not easy to identify (see~\cite[Section~2]{CEX15} for a discussion).
We rely on the following simple condition to detect some (but not all) crossings.

\begin{lemma}\label{lem:cross}
Let $P$ be a weakly simple polygon parameterized by a curve $\gamma_1:\mathbb{S}^1\rightarrow \mathbb{R}^2$; and let $\gamma_2:\mathbb{S}^1\rightarrow \mathbb{R}^2$ be a closed Jordan curve that does not pass through any vertices of $P$ and intersects every edge of $P$ transversely. Suppose that $q_1, \ldots , q_4$ are distinct points in $\gamma_2(\mathbb{S}^1)$ in counterclockwise order. Then there are no two disjoint arcs $I_1,I_2\subset \mathbb{S}^1$ such that $\gamma_1(I_1)$ and $\gamma_1(I_2)$ connect $q_1$ to $q_3$ and $q_2$ to $q_4$, each passing through the interior of $\gamma_2(\mathbb{S}^1)$.
\end{lemma}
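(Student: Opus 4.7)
The plan is a proof by contradiction via perturbation. Suppose such disjoint arcs $I_1, I_2 \subseteq \mathbb{S}^1$ exist, let $\Omega$ be the bounded Jordan domain with $\partial\Omega = \gamma_2(\mathbb{S}^1)$, and denote by $s_j \in \mathbb{S}^1$ the parameter with $\gamma_1(s_j) = q_j$, where $\{s_1,s_3\}=\partial I_1$ and $\{s_2,s_4\}=\partial I_2$. Because $I_1$ and $I_2$ are disjoint arcs on $\mathbb{S}^1$ with these endpoint pairs, the cyclic order of the four parameters is forced to be $(s_1, s_3, s_2, s_4)$, up to reflection.

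By weak simplicity, for every $\eps > 0$ there is a simple closed curve $\gamma_1'$ with $\mathrm{dist}_F(\gamma_1, \gamma_1') < \eps$; reparametrize so that $|\gamma_1(t) - \gamma_1'(t)| < \eps$ for all $t$. I would choose $\eps$ small enough that (i) $\gamma_2$ stays clear of every vertex of $P$; (ii) each transverse crossing of $\gamma_1$ with $\gamma_2$ corresponds to a unique nearby transverse crossing of $\gamma_1'$ with $\gamma_2$, and no spurious extra crossings are introduced; and (iii) the cyclic order of all crossings on $\gamma_2$ is preserved. Let $q_j^* = \gamma_1'(t_j) \in \gamma_1' \cap \gamma_2$ denote the crossing near $q_j$, with $t_j$ close to $s_j$. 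Then on $\gamma_2$ the perturbed points appear in the ccw order $(q_1^*, q_2^*, q_3^*, q_4^*)$, whereas on the Jordan curve $\gamma_1'$ (parametrized injectively by $\mathbb{S}^1$) they appear in ccw order $(q_1^*, q_3^*, q_2^*, q_4^*)$. The two pairs $\{q_1^*, q_3^*\}$ and $\{q_2^*, q_4^*\}$ are therefore \emph{interleaved} on $\gamma_2$ but consecutive on $\gamma_1'$.

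The heart of the argument is then a topological impossibility. I would extract, from each of the disjoint simple arcs $\gamma_1'(I_k)$, a sub-arc $\mu_k \subseteq \overline{\Omega}$ whose endpoints lie on $\gamma_2$ close to $q_1, q_3$ (for $k=1$) and close to $q_2, q_4$ (for $k=2$). Since $\gamma_1(I_k)$ passes through $\Omega$ and $\gamma_1'$ is close to $\gamma_1$, the set $\gamma_1'(I_k) \cap \overline{\Omega}$ is nonempty; using the transverse crossings at the $q_j^*$, I pick the extremal components of this set nearest the endpoints of $\gamma_1'(I_k)$ to obtain a sub-arc $\mu_k$ with the desired endpoints on $\gamma_2$. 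The sub-arcs $\mu_1, \mu_2$ are then disjoint (as subsets of the disjoint $\gamma_1'(I_1)$ and $\gamma_1'(I_2)$), simple, contained in $\overline{\Omega}$, and have endpoints on $\partial\Omega$ in interleaved ccw order. This contradicts the standard corollary of the Jordan curve theorem that two disjoint simple arcs in a closed Jordan disk with endpoints on the boundary must induce a non-crossing pairing of their endpoints.

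The main obstacle is the extraction step when $\gamma_1'(I_k) \cap \overline{\Omega}$ splits into several connected components (the arc enters and leaves $\Omega$ several times), and in particular when the endpoint $\gamma_1'(s_j)$ is displaced slightly to the ``wrong'' side of $\gamma_2$ so that the extremal component of $\gamma_1'(I_k) \cap \overline{\Omega}$ does not already have its endpoint on $\gamma_2$ near $q_j$. Both situations are handled by tracking, for $\eps$ small enough, which side of $\gamma_2$ the tiny sub-arc of $\gamma_1'$ parametrically adjacent to $s_j$ inside $I_k$ lies on; this local side is preserved under small perturbation by transversality of the crossing of $\gamma_1$ with $\gamma_2$ at $q_j$, allowing a canonical choice of the sub-arc $\mu_k$ with an endpoint at the perturbed crossing $q_j^*$. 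Once the extraction is in place, the Jordan curve argument closes the proof.
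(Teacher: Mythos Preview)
Your proposal is correct and follows essentially the same approach as the paper: assume the arcs exist, perturb $\gamma_1$ to a simple closed curve $\gamma_1'$, and derive a contradiction from the Jordan curve theorem by observing that the two resulting simple sub-arcs inside the closed disk $\overline{\Omega}$ have interleaved endpoints on $\partial\Omega$ and hence must intersect. The paper's proof is much terser---it simply asserts that $\gamma_1'$ can be chosen ``with the same properties as $\gamma_1$'' and immediately invokes separation of $U$ by $\gamma_1'(I_1)$---so your careful treatment of the perturbed crossing points $q_j^*$ and the extraction of the sub-arcs $\mu_k$ in fact fills in details the paper leaves implicit.
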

\begin{proof}
Suppose, to the contrary, that there exist two disjoint arcs $I_1,I_2\subset \mathbb{S}_1$ such that $\gamma_1(I_1)$ and $\gamma_1(I_2)$ respectively connect $q_1$ to $q_3$ and $q_2$ to $q_4$, passing through the interior of $\gamma_2(\mathbb{S}^1)$. (See Figure~\ref{fig:crossing}.)
Since $P$ is weakly simple, then $\gamma_1$ can be perturbed to a closed Jordan curve $\gamma_1'$ with the same properties as $\gamma_1$. Let $U$ denote the interior of $\gamma_2(\mathbb{S}^1)$, and note that $U$ is simply connected. Consequently, $U\setminus \gamma_1'(I_1)$ has two components, which are incident to $q_2$ and $q_4$, respectively. The Jordan arc $\gamma_1'(I_2)$ connects $q_2$ to $q_4$ via $U$, so it must intersect $\gamma_1'(I_1)$,
contradicting the assumption that $\gamma_1'$ is a Jordan curve.
\end{proof}

\begin{figure}[h!tbp]
\centering
	\def\svgwidth{.4\textwidth}
\begingroup%
  \makeatletter%
  \providecommand\color[2][]{%
    \errmessage{(Inkscape) Color is used for the text in Inkscape, but the package 'color.sty' is not loaded}%
    \renewcommand\color[2][]{}%
  }%
  \providecommand\transparent[1]{%
    \errmessage{(Inkscape) Transparency is used (non-zero) for the text in Inkscape, but the package 'transparent.sty' is not loaded}%
    \renewcommand\transparent[1]{}%
  }%
  \providecommand\rotatebox[2]{#2}%
  \ifx\svgwidth\undefined%
    \setlength{\unitlength}{181.96272482bp}%
    \ifx\svgscale\undefined%
      \relax%
    \else%
      \setlength{\unitlength}{\unitlength * \real{\svgscale}}%
    \fi%
  \else%
    \setlength{\unitlength}{\svgwidth}%
  \fi%
  \global\let\svgwidth\undefined%
  \global\let\svgscale\undefined%
  \makeatother%
  \begin{picture}(1,0.37815236)%
    \put(0,0){\includegraphics[width=\unitlength,page=1]{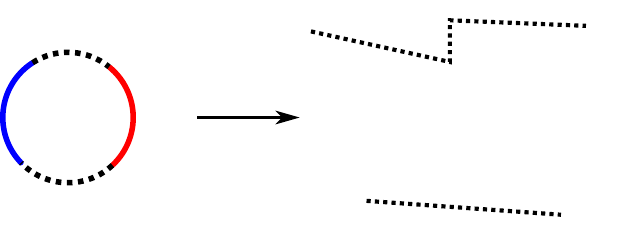}}%
    \put(0.38768361,0.20967619){\color[rgb]{0,0,0}\makebox(0,0)[b]{\smash{$\gamma_1'$}}}%
    \put(0,0){\includegraphics[width=\unitlength,page=2]{fig-crossing-lemma.pdf}}%
    \put(1.02535063,0.18769374){\color[rgb]{0,0,0}\makebox(0,0)[lb]{\smash{$\textcolor{darkgray}{\gamma_2}(\mathbb{S}^1)$}}}%
    \put(-0.0573594,0.49992056){\color[rgb]{0,0,0}\makebox(0,0)[lt]{\begin{minipage}{1.07714369\unitlength}\raggedright \end{minipage}}}%
    \put(0.10756951,0.00304046){\color[rgb]{0,0,0}\makebox(0,0)[b]{\smash{$\mathbb{S}^1$}}}%
    \put(-0.03751515,0.18769367){\color[rgb]{0,0,0}\makebox(0,0)[b]{\smash{$\textcolor{blue}{I_1}$}}}%
    \put(0.25474978,0.18769367){\color[rgb]{0,0,0}\makebox(0,0)[b]{\smash{$\textcolor{red}{I_2}$}}}%
    \put(0.50974706,0.25799521){\color[rgb]{0,0,0}\makebox(0,0)[b]{\smash{$q_1$}}}%
    \put(0.61965968,0.10411748){\color[rgb]{0,0,0}\makebox(0,0)[b]{\smash{$q_2$}}}%
    \put(0.94060454,0.07334194){\color[rgb]{0,0,0}\makebox(0,0)[b]{\smash{$q_3$}}}%
    \put(0.87026046,0.24920214){\color[rgb]{0,0,0}\makebox(0,0)[b]{\smash{$q_4$}}}%
    \put(0,0){\includegraphics[width=\unitlength,page=3]{fig-crossing-lemma.pdf}}%
  \end{picture}%
\endgroup%
\caption{Forbidden configuration described by Lemma~\ref{lem:cross}.}
\label{fig:crossing}
\end{figure}

We show that a weakly simple polygon cannot contain certain configurations, outlined below.
\begin{corollary}\label{cor:forbidden}
A weakly simple polygon cannot contain a pair of paths of the following types:
\begin{enumerate}\itemsep -1pt
\item\label{cross:1}
 $[u_1,u_2,u_3]$ and $[v,u_2,w]$, where $u_2u_1$, $u_2v$, $u_2u_3$, and $u_2w$ are nonoverlapping segments in this cyclic order around $u_2$ (\emph{node crossing}; see Figure~\ref{fig:forbidden}(a)).
\item\label{cross:3}
 $[u_1,u_3,w]$ and $[v,u_2,u_4]$, where $u_1$, $u_2$, $u_3$, and $u_4$ are on a line in this order, and nodes $v$ and $w$ lie in an open halfplane bounded by this line (Figure~\ref{fig:forbidden}(b)).
\item\label{cross:4}
 $[u_1,u_2,u_3]$ and $[v_1,v_2,\ldots ,v_{k-1},v_k]$ where $v_2\in {\rm int}(u_2u_3)$, $v_3,\ldots , v_{k-1}\in \{u_2\}\cup {\rm int}(u_2u_3)$, nodes $u_1$ and $v_1$ lie in an open halfplane bounded by the supporting line of $u_2u_3$, and node $v_k$ lies on the other open halfplane bounded by this line (Figure~\ref{fig:forbidden}(c)).
\end{enumerate}
\end{corollary}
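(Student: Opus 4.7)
The plan is to derive a contradiction in each case by applying Lemma~\ref{lem:cross}. For each case, I construct a Jordan curve $\gamma_2$ that isolates the alleged crossing, identify four transverse intersections $q_1,\ldots,q_4$ of $\gamma_2$ with edges of $P$, and verify that the two named sub-paths give disjoint sub-arcs of $\mathbb{S}^1$ whose images connect $q_1$ to $q_3$ and $q_2$ to $q_4$ through the interior of $\gamma_2$, producing the forbidden configuration. For case~\ref{cross:1}, let $\gamma_2$ be a circle around $u_2$ small enough to contain no other node and to cross each incident edge transversely; by hypothesis it meets $u_2u_1, u_2v, u_2u_3, u_2w$ at $q_1,q_2,q_3,q_4$ in counterclockwise order, and the two sub-paths use different edges at $u_2$, so they correspond to disjoint intervals on $\mathbb{S}^1$ each passing through $u_2$ in the interior of $\gamma_2$.

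For case~\ref{cross:3}, place the four collinear nodes on the $x$-axis with $v,w$ in the upper halfplane, and take $\gamma_2$ to be a thin ellipse enclosing the closed segment $u_2u_3$ and no other node. The horizontal edges $u_1u_3$ and $u_2u_4$ cross $\gamma_2$ at its leftmost and rightmost points $q_1$ and $q_4$ on the $x$-axis (each passes through one interior node but goes from inside to outside exactly once), while edges $vu_2$ and $u_3w$ cross the upper half at $q_2$ (near $u_2$) and $q_3$ (near $u_3$). Traversed counterclockwise from $q_1$, the four crossings appear as $q_1,q_4,q_3,q_2$, in which $\{q_1,q_3\}$ and $\{q_2,q_4\}$ interleave, so the sub-arcs for $[u_1,u_3,w]$ and $[v,u_2,u_4]$ trigger Lemma~\ref{lem:cross}.

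For case~\ref{cross:4}, place $u_2u_3$ on the $x$-axis with $u_1,v_1$ above and $v_k$ below. Let $x^\star$ denote the largest $x$-coordinate among $v_2,\ldots,v_{k-1}$ (all of which lie in $\{u_2\}\cup{\rm int}(u_2u_3)$, so at $x$-coordinates in $[0,x^\star]$), and choose $\gamma_2$ to be a thin ellipse enclosing the subsegment of $u_2u_3$ from slightly left of $u_2$ to a point strictly between $x^\star$ and $u_3$, avoiding all other vertices of $P$. Then $u_2,v_2,\ldots,v_{k-1}$ lie inside while $u_1,u_3,v_1,v_k$ lie outside; the edges $u_1u_2$ and $v_1v_2$ cross the upper half at $q_1$ and $q_2$, the edge $u_2u_3$ exits at the rightmost point $q_3$ on the $x$-axis, and $v_{k-1}v_k$ crosses the lower half at $q_4$. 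The main obstacle is ordering $q_1$ and $q_2$ correctly on the upper half: by making the ellipse sufficiently thin perpendicular to $u_2u_3$, the crossing $q_1$ on $u_1u_2$ is forced toward $u_2=(0,0)$ while $q_2$ on $v_1v_2$ stays near $v_2=(x_{v_2},0)$ with $x_{v_2}>0$, yielding the counterclockwise order $q_3,q_2,q_1,q_4$ with the required interleaving of $\{q_1,q_3\}$ and $\{q_2,q_4\}$.
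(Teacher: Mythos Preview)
Your proof is correct and takes the same approach as the paper: in each case, apply Lemma~\ref{lem:cross} with a suitable Jordan curve~$\gamma_2$. Your specific curves differ slightly from the paper's (for case~\ref{cross:3} you enclose $u_2u_3$ whereas the paper says $u_1u_2$, and for case~\ref{cross:4} you always include $u_2$ whereas the paper uses the convex hull of $\{v_2,\dots,v_{k-1}\}$), but your choices work and your explicit verification of the cyclic order of the four crossing points is more careful than the paper's one-line sketch.
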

\begin{proof}
In all four cases, Lemma~\ref{lem:cross} with a suitable Jordan curve $\gamma_2$ completes the proof.
In case~\ref{cross:1}, let $\gamma_2$ be a small circle around $u_2$.
In case~\ref{cross:3}, let $\gamma_2$ be a small neighborhood of segment $u_1u_2$.
In case~\ref{cross:4}, let $\gamma_2$ be a small neighborhood of the convex hull of $\{v_2,\ldots, v_{k-1}\}$.
\end{proof}

\begin{figure}[htbp]
\centering
	\def\svgwidth{.75\textwidth}
\begingroup%
  \makeatletter%
  \providecommand\color[2][]{%
    \errmessage{(Inkscape) Color is used for the text in Inkscape, but the package 'color.sty' is not loaded}%
    \renewcommand\color[2][]{}%
  }%
  \providecommand\transparent[1]{%
    \errmessage{(Inkscape) Transparency is used (non-zero) for the text in Inkscape, but the package 'transparent.sty' is not loaded}%
    \renewcommand\transparent[1]{}%
  }%
  \providecommand\rotatebox[2]{#2}%
  \ifx\svgwidth\undefined%
    \setlength{\unitlength}{396.46938477bp}%
    \ifx\svgscale\undefined%
      \relax%
    \else%
      \setlength{\unitlength}{\unitlength * \real{\svgscale}}%
    \fi%
  \else%
    \setlength{\unitlength}{\svgwidth}%
  \fi%
  \global\let\svgwidth\undefined%
  \global\let\svgscale\undefined%
  \makeatother%
  \begin{picture}(1,0.22404272)%
    \put(0,0){\includegraphics[width=\unitlength,page=1]{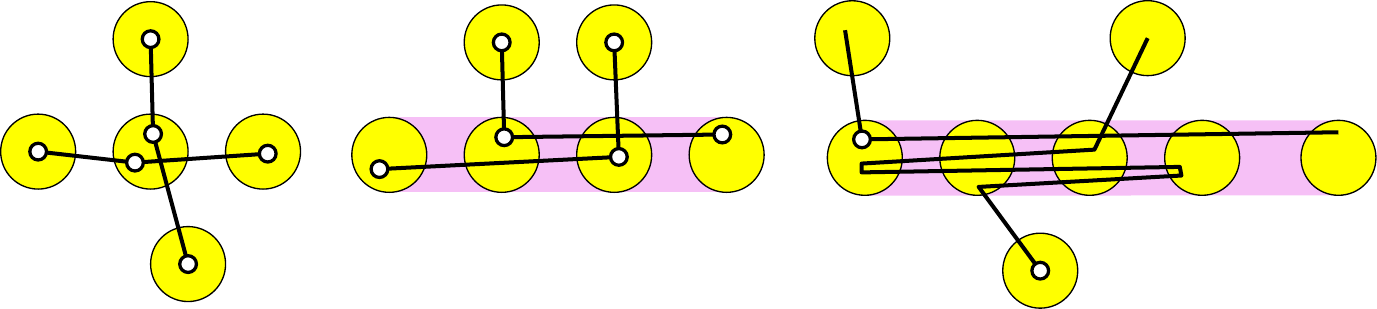}}%
    \put(0.00893843,0.00903658){\color[rgb]{0,0,0}\makebox(0,0)[lb]{\smash{(a)}}}%
    \put(0.25588036,0.00903658){\color[rgb]{0,0,0}\makebox(0,0)[lb]{\smash{(b)}}}%
    \put(0.60357926,0.00903658){\color[rgb]{0,0,0}\makebox(0,0)[lb]{\smash{(c)}}}%
    \put(0.05999997,0.18743589){\color[rgb]{0,0,0}\makebox(0,0)[lb]{\smash{$v$}}}%
    \put(0.01817817,0.06716575){\color[rgb]{0,0,0}\makebox(0,0)[lb]{\smash{$u_1$}}}%
    \put(0.0880336,0.06820995){\color[rgb]{0,0,0}\makebox(0,0)[lb]{\smash{$u_2$}}}%
    \put(0.18034562,0.06845119){\color[rgb]{0,0,0}\makebox(0,0)[lb]{\smash{$u_3$}}}%
    \put(0.08526715,0.02158787){\color[rgb]{0,0,0}\makebox(0,0)[lb]{\smash{$w$}}}%
    \put(0.31881364,0.18815718){\color[rgb]{0,0,0}\makebox(0,0)[lb]{\smash{$v$}}}%
    \put(0.48319893,0.18859399){\color[rgb]{0,0,0}\makebox(0,0)[lb]{\smash{$w$}}}%
    \put(0.26833465,0.0672297){\color[rgb]{0,0,0}\makebox(0,0)[lb]{\smash{$u_1$}}}%
    \put(0.35506867,0.06745696){\color[rgb]{0,0,0}\makebox(0,0)[lb]{\smash{$u_2$}}}%
    \put(0.43529568,0.06464483){\color[rgb]{0,0,0}\makebox(0,0)[lb]{\smash{$u_3$}}}%
    \put(0.51654191,0.06539354){\color[rgb]{0,0,0}\makebox(0,0)[lb]{\smash{$u_4$}}}%
    \put(0.65282592,0.19470929){\color[rgb]{0,0,0}\makebox(0,0)[lb]{\smash{$u_1$}}}%
    \put(0.77695415,0.19382583){\color[rgb]{0,0,0}\makebox(0,0)[lb]{\smash{$v_1$}}}%
    \put(0.58681215,0.06318452){\color[rgb]{0,0,0}\makebox(0,0)[lb]{\smash{$v_3=u_2$}}}%
    \put(0.69667125,0.14836691){\color[rgb]{0,0,0}\makebox(0,0)[lb]{\smash{$v_5$}}}%
    \put(0.69933222,0.01762016){\color[rgb]{0,0,0}\makebox(0,0)[lb]{\smash{$v_6$}}}%
    \put(0.86758966,0.06422872){\color[rgb]{0,0,0}\makebox(0,0)[lb]{\smash{$v_4$}}}%
    \put(0,0){\includegraphics[width=\unitlength,page=2]{fig-forbidden+v2.pdf}}%
    \put(0.96250519,0.06395821){\color[rgb]{0,0,0}\makebox(0,0)[lb]{\smash{$u_3$}}}%
    \put(0.781779,0.0648617){\color[rgb]{0,0,0}\makebox(0,0)[lb]{\smash{$v_2$}}}%
  \end{picture}%
\endgroup%
\caption{Three pairs of incompatible paths.}
\label{fig:forbidden}
\end{figure}

\medskip\noindent{\bf Terminology.}
We classify the maximal paths in $D_b$.
All nodes $u\in \partial D_b$ lie either above or below  $b$. We call them \emph{top} and \emph{bottom} nodes, respectively. Let $\PP$ denote the set of maximal paths $p=[u_1^x,u_1,\ldots ,u_{k},u_k^y]$ in $D_b$. The paths in $\PP$ are classified based on the position of their endpoints.
A path $p$ can be labeled as follows:
\begin{itemize}\itemsep -2pt
\item \emph{cross-chain} if $u_1^x$ and $u_k^y$ are top and bottom nodes respectively,
\item \emph{top chain} (resp., \emph{bottom chain}) if both $u_1^x$ and $u_k^y$ are top nodes (resp., bottom nodes),
\item \emph{pin} if $p=[u_1^x,u_1,u_1^x]$ (note that every pin is a top or a bottom chain),
\item \emph{V-chain} if $p=[u_1^x,u_1,u_1^y]$, where $x\neq y$ and $p$ is a top or a bottom chain.
\end{itemize}
Finally, let $\Pin\subset \PP$ be the set of pins, and $\VV\subset \PP$ the set of V-chains.

\subsection{Clusters}
\label{ssec:clusters}
As a preprocessing step for spur elimination (Section~\ref{sec:tree-exp}), we group all nodes that do not lie inside a bar into \emph{clusters}.
After \nodeexpansion\ and \newbarexp, all such nodes lie on a boundary of a disk (circular or elliptical).
For every sober node $u$, we create $\deg(u)$ clusters as follows. Refer to Figure~\ref{fig:newcluster}.
The node expansion has replaced $u$ with new nodes on $\partial D_u$. Subdivide each segment in $D_u$ with two new nodes.
For each node $v\in \partial D_u$, form a cluster $C(v)$ that consists of $v$ and all adjacent (subdivision) nodes inside $D_u$.
For each node $u$ on the boundary of an elliptical disk $D_b$, subdivide the unique edge outside $D_b$ incident to $u$ with a node $u^*$. Form a cluster $C(u^*)$ containing $u$ and $u^*$.
Every cluster maintains the following invariants.
\begin{enumerate}[label=(I\arabic*)]\itemsep -2pt
\item[] {\bf
\noindent Cluster Invariants.} For every cluster $C(u)$:
\item \label{inv:tree} $C(u)$ induces a tree $T[u]$ in the image graph rooted at $u$.
\item \label{inv:max-path}Every maximal path of $P$ in $C(u)$ is of one of the following two types:
        \begin{enumerate}\itemsep -2pt
        \item\label{inv:mp-a} both endpoints are at the root of $T[u]$ and the path contains a single spur;
        \item\label{inv:mp-b} one endpoint is at the root,
         the other is at a leaf, and the path contains no spurs.
        \end{enumerate}
%
%
%
\item \label{inv:deg} Every leaf node $\ell$ satisfies one of the following conditions:
        \begin{enumerate}\itemsep -2pt
        \item\label{inv:deg-a} $\ell$ has degree one in the image graph of $P$ (and every vertex at $\ell$ is a spur);
        \item\label{inv:deg-b}\label{inv:no-spur} $\ell$ has degree two in the image graph of $P$
        and there is no spur at $\ell$.
        \end{enumerate}
\item \label{inv:subdiv} No edge passes through a leaf $\ell$ (i.e., there is no edge $[a,b]$ such that $\ell\in ab$ but $\ell\not \in \{a,b\}$).
\end{enumerate}

Initially, every cluster trivially satisfies \ref{inv:tree}--\ref{inv:max-path} and every leaf node satisfies \ref{inv:deg}--\ref{inv:subdiv} since it was created by a subdivision.

\begin{figure}[htbp]
\centering
	\def\svgwidth{\textwidth}
\begingroup%
  \makeatletter%
  \providecommand\color[2][]{%
    \errmessage{(Inkscape) Color is used for the text in Inkscape, but the package 'color.sty' is not loaded}%
    \renewcommand\color[2][]{}%
  }%
  \providecommand\transparent[1]{%
    \errmessage{(Inkscape) Transparency is used (non-zero) for the text in Inkscape, but the package 'transparent.sty' is not loaded}%
    \renewcommand\transparent[1]{}%
  }%
  \providecommand\rotatebox[2]{#2}%
  \ifx\svgwidth\undefined%
    \setlength{\unitlength}{380.09963379bp}%
    \ifx\svgscale\undefined%
      \relax%
    \else%
      \setlength{\unitlength}{\unitlength * \real{\svgscale}}%
    \fi%
  \else%
    \setlength{\unitlength}{\svgwidth}%
  \fi%
  \global\let\svgwidth\undefined%
  \global\let\svgscale\undefined%
  \makeatother%
  \begin{picture}(1,0.13888302)%
    \put(0,0){\includegraphics[width=\unitlength,page=1]{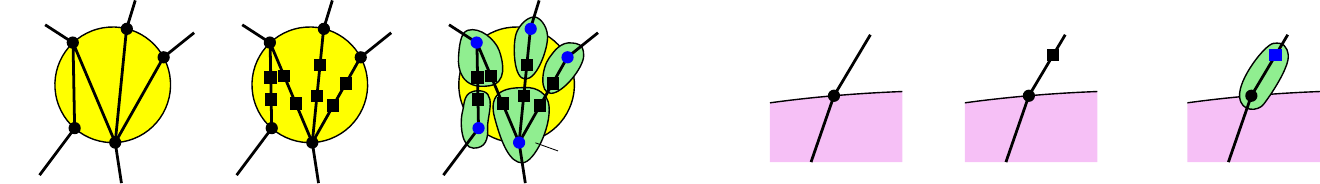}}%
    \put(0.00097235,0.05385054){\color[rgb]{0,0,0}\makebox(0,0)[lb]{\smash{$D_u$}}}%
    \put(0.09470191,0.01860652){\color[rgb]{0,0,0}\makebox(0,0)[lb]{\smash{$v$}}}%
    \put(0.43157292,0.0142823){\color[rgb]{0,0,0}\makebox(0,0)[lb]{\smash{$C(v)$}}}%
    \put(0.54040151,0.07053551){\color[rgb]{0,0,0}\makebox(0,0)[lb]{\smash{$\partial D_b$}}}%
    \put(0.63452176,0.05032638){\color[rgb]{0,0,0}\makebox(0,0)[lb]{\smash{$u$}}}%
    \put(0.78581507,0.04986723){\color[rgb]{0,0,0}\makebox(0,0)[lb]{\smash{$u$}}}%
    \put(0.80595653,0.09119164){\color[rgb]{0,0,0}\makebox(0,0)[lb]{\smash{$u^*$}}}%
    \put(0.88759879,0.10232425){\color[rgb]{0,0,0}\makebox(0,0)[lb]{\smash{$C(u^*)$}}}%
    \put(0.15217932,0.06076185){\color[rgb]{0,0,0}\makebox(0,0)[lb]{\smash{$\Rightarrow$}}}%
    \put(0.30412737,0.06076185){\color[rgb]{0,0,0}\makebox(0,0)[lb]{\smash{$\Rightarrow$}}}%
    \put(0.69601052,0.06076185){\color[rgb]{0,0,0}\makebox(0,0)[lb]{\smash{$\Rightarrow$}}}%
    \put(0.85380269,0.06076185){\color[rgb]{0,0,0}\makebox(0,0)[lb]{\smash{$\Rightarrow$}}}%
  \end{picture}%
\endgroup%
\caption{Formation of new clusters around (left) a sober node and (right) a node on the boundary of an elliptical disk. The roots of the induced trees are colored blue.}
\label{fig:newcluster}
\end{figure}
\medskip\noindent{\bf Dummy vertices.}
Although the operations described in Sections \ref{sec:bars} and \ref{sec:tree-exp} introduce new nodes in the clusters, the image graph will always have $O(n)$ nodes and segments.
A vertex at a cluster node is called a \emph{benchmark} if it is a spur or if it is at a leaf node;
otherwise it is called a \emph{dummy vertex}.
Paths traversing clusters may jointly contain $\Theta(n^2)$ dummy vertices in the worst case, however
we do not store these explicitly.
By \ref{inv:tree}, \ref{inv:max-path}, and \ref{inv:deg} a maximal path in a cluster can be uniquely encoded by one benchmark vertex:
if it goes from a root to a spur at an interior node $s$ and back, we record only $[s]$;
and if it traverses $T[u]$ from the root to a leaf $\ell$, we record only $[\ell]$.

\section{Bar simplification}
\label{sec:bars}

In this section we introduce three new ws-equivalent operations and show that they can eliminate all vertices from each bar independently (thus eliminating all forks).
The bar decomposition is pre-computed, and the bars remain fixed
during this phase (even though all edges along each bar are  eliminated).

We give an overview of the overall effect of the operations (Section~\ref{ssec:overview}),
define them and show that they are ws-equivalent (Sections~\ref{ssec:primitives}--\ref{ssec:operations}), and then show how to use these operations to eliminate all vertices from a bar (Section~\ref{ssec:phases}).

\subsection{Overview}
\label{ssec:overview}

After preprocessing in Section~\ref{sec:preprocess}, we may assume that $P$ has no edge crossings and satisfies \ref{cond:A1}--\ref{cond:A2}.
%
We summarize the overall effect of the bar simplification subroutine for a given expanded bar.

\medskip
\noindent{\bf Changes in the image graph {\em G}.} Refer to Figure~\ref{fig:overview-G}.
All nodes in the interior of the ellipse $D_b$ are eliminated.
Some spurs on $b$  are moved to new nodes in the clusters along  $\partial D_b$.
Segments inside $D_b$ connect two leaves of trees induced by clusters.

\begin{figure}[h!tb]
\centering
	\def\svgwidth{.95\textwidth}
\begingroup%
  \makeatletter%
  \providecommand\color[2][]{%
    \errmessage{(Inkscape) Color is used for the text in Inkscape, but the package 'color.sty' is not loaded}%
    \renewcommand\color[2][]{}%
  }%
  \providecommand\transparent[1]{%
    \errmessage{(Inkscape) Transparency is used (non-zero) for the text in Inkscape, but the package 'transparent.sty' is not loaded}%
    \renewcommand\transparent[1]{}%
  }%
  \providecommand\rotatebox[2]{#2}%
  \ifx\svgwidth\undefined%
    \setlength{\unitlength}{401.03430176bp}%
    \ifx\svgscale\undefined%
      \relax%
    \else%
      \setlength{\unitlength}{\unitlength * \real{\svgscale}}%
    \fi%
  \else%
    \setlength{\unitlength}{\svgwidth}%
  \fi%
  \global\let\svgwidth\undefined%
  \global\let\svgscale\undefined%
  \makeatother%
  \begin{picture}(1,0.16686103)%
    \put(0,0){\includegraphics[width=\unitlength,page=1]{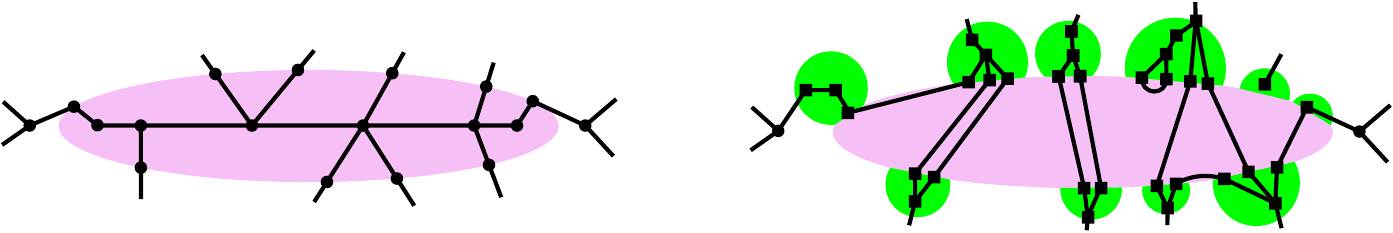}}%
    \put(0.07479119,0.1136075){\color[rgb]{0,0,0}\makebox(0,0)[lb]{\smash{$D_b$}}}%
    \put(0.63194474,0.11022495){\color[rgb]{0,0,0}\makebox(0,0)[lb]{\smash{$D_b$}}}%
    \put(0.48393463,0.07224731){\color[rgb]{0,0,0}\makebox(0,0)[lb]{\smash{$\Rightarrow$}}}%
  \end{picture}%
\endgroup%
\caption{The changes in the image graph caused by a bar simplification.}
\label{fig:overview-G}
\end{figure}

\medskip \noindent{\bf Changes in the polygon {\em P}.}
Refer to Figure~\ref{fig:overview-P}.
Consider a maximal path $p$ in $P$ that lies in  $D_b$.
The bar simplification replaces $p=[u,\ldots,v]$ with a new path $p'$.
By \ref{inv:deg}-\ref{inv:subdiv}, only nodes $u$ and $v$ in $p$ lie on $\partial D_b$.
If $p$ is the concatenation of a path $p_1$ and $p_1^{-1}$ (the path formed by the vertices of $p_1$ in reverse order),
then $p'$ is a spur in the cluster containing $u$ (Figure~\ref{fig:overview-P} (a)).
If $p$ has no such decomposition, but its two endpoints are at the same node, $u=v$,
then $p'$ is a single edge connecting two leaves in the cluster containing $u$ (Figure~\ref{fig:overview-P} (b)).
If the endpoints of $p$ are at two different nodes, $p'$ is an edge between two leaves of the clusters containing $u$ and $v$ respectively (Figure~\ref{fig:overview-P} (c) and (d)).

\begin{figure}[h!tb]
\centering
	\def\svgwidth{.9\textwidth}
\begingroup%
  \makeatletter%
  \providecommand\color[2][]{%
    \errmessage{(Inkscape) Color is used for the text in Inkscape, but the package 'color.sty' is not loaded}%
    \renewcommand\color[2][]{}%
  }%
  \providecommand\transparent[1]{%
    \errmessage{(Inkscape) Transparency is used (non-zero) for the text in Inkscape, but the package 'transparent.sty' is not loaded}%
    \renewcommand\transparent[1]{}%
  }%
  \providecommand\rotatebox[2]{#2}%
  \ifx\svgwidth\undefined%
    \setlength{\unitlength}{396.39069824bp}%
    \ifx\svgscale\undefined%
      \relax%
    \else%
      \setlength{\unitlength}{\unitlength * \real{\svgscale}}%
    \fi%
  \else%
    \setlength{\unitlength}{\svgwidth}%
  \fi%
  \global\let\svgwidth\undefined%
  \global\let\svgscale\undefined%
  \makeatother%
  \begin{picture}(1,0.81478592)%
    \put(0,0){\includegraphics[width=\unitlength,page=1]{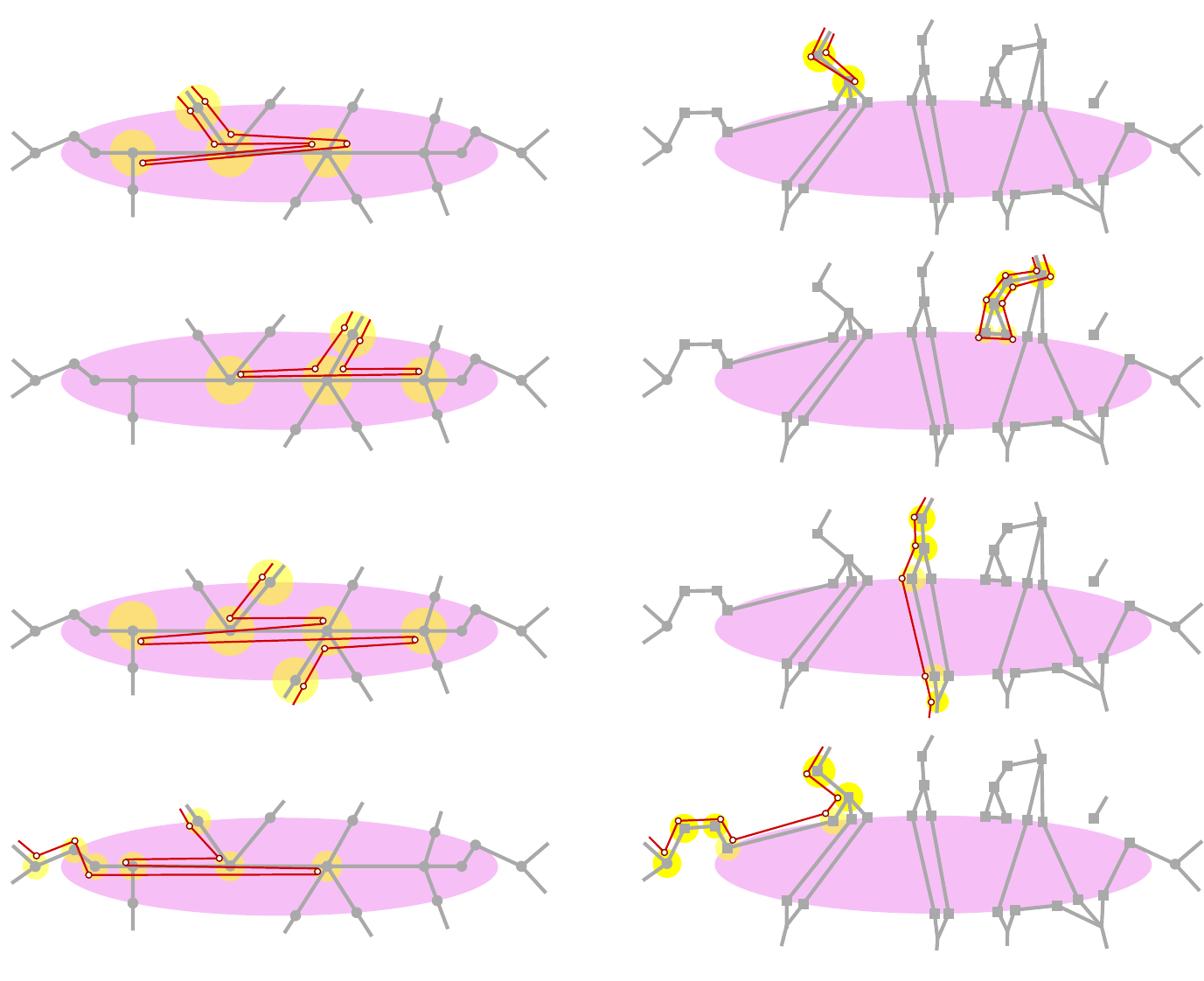}}%
    \put(0.4677143,0.60685907){\color[rgb]{0,0,0}\makebox(0,0)[lb]{\smash{(a)}}}%
    \put(0.4677143,0.39496295){\color[rgb]{0,0,0}\makebox(0,0)[lb]{\smash{(b)}}}%
    \put(0.4677143,0.19916632){\color[rgb]{0,0,0}\makebox(0,0)[lb]{\smash{(c)}}}%
    \put(0.4677143,0.01118823){\color[rgb]{0,0,0}\makebox(0,0)[lb]{\smash{(d)}}}%
    \put(0.48553243,0.68863713){\color[rgb]{0,0,0}\makebox(0,0)[lb]{\smash{$\Rightarrow$}}}%
    \put(0.48553243,0.49638681){\color[rgb]{0,0,0}\makebox(0,0)[lb]{\smash{$\Rightarrow$}}}%
    \put(0.48553243,0.29193012){\color[rgb]{0,0,0}\makebox(0,0)[lb]{\smash{$\Rightarrow$}}}%
    \put(0.48553243,0.09174567){\color[rgb]{0,0,0}\makebox(0,0)[lb]{\smash{$\Rightarrow$}}}%
  \end{picture}%
\endgroup%
\caption{The changes in the polygon caused by a bar simplification.}
\label{fig:overview-P}
\end{figure}

\subsection{Primitives}
\label{ssec:primitives}
The operations in Section~\ref{ssec:operations} rely on two basic steps, \spurreduction\/ and \nodesplit\/ (see Figure~\ref{fig:primitives}).
Together with \merge\ and \subdivision, these operations are called \emph{primitives}.

\begin{quote}
{\bf \spurreduction$(u,v)$.}
Assume
that every vertex at node $u$ has at least one incident edge $[u,v]$.
While there exists a path $[u,v,u]$, replace it with a single-vertex path $[u]$.
(See Figure~\ref{fig:primitives}, left.)
\end{quote}

\begin{quote}{\bf \nodesplit$(u,v,w)$.}
Assume that segments $uv$ and $vw$ are consecutive in radial order around $v$,
node $v$ is not in the interior of any edge that contains $uv$ or $vw$;
and $P$ has no spurs of the form $[u,v,u]$ or $[w,v,w]$.
Create node $v^*$ in the interior of the wedge $\angle uvw$ sufficiently close to $v$;
replace every path $[u,v,w]$ with $[u,v^*,w]$.
(See Figure~\ref{fig:primitives}, right.)
\end{quote}

\begin{figure}[h]
\centering
	\def\svgwidth{\textwidth}
\begingroup%
  \makeatletter%
  \providecommand\color[2][]{%
    \errmessage{(Inkscape) Color is used for the text in Inkscape, but the package 'color.sty' is not loaded}%
    \renewcommand\color[2][]{}%
  }%
  \providecommand\transparent[1]{%
    \errmessage{(Inkscape) Transparency is used (non-zero) for the text in Inkscape, but the package 'transparent.sty' is not loaded}%
    \renewcommand\transparent[1]{}%
  }%
  \providecommand\rotatebox[2]{#2}%
  \ifx\svgwidth\undefined%
    \setlength{\unitlength}{286.06499023bp}%
    \ifx\svgscale\undefined%
      \relax%
    \else%
      \setlength{\unitlength}{\unitlength * \real{\svgscale}}%
    \fi%
  \else%
    \setlength{\unitlength}{\svgwidth}%
  \fi%
  \global\let\svgwidth\undefined%
  \global\let\svgscale\undefined%
  \makeatother%
  \begin{picture}(1,0.13356086)%
    \put(0,0){\includegraphics[width=\unitlength,page=1]{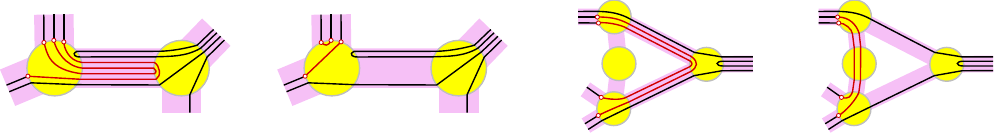}}%
    \put(0.07252801,0.02849422){\color[rgb]{0,0,0}\makebox(0,0)[lb]{\smash{$u$}}}%
    \put(0.21221591,0.04737147){\color[rgb]{0,0,0}\makebox(0,0)[lb]{\smash{$v$}}}%
    \put(0.35041521,0.02777881){\color[rgb]{0,0,0}\makebox(0,0)[lb]{\smash{$u$}}}%
    \put(0.49113513,0.04754412){\color[rgb]{0,0,0}\makebox(0,0)[lb]{\smash{$v$}}}%
    \put(0.63737132,0.00904274){\color[rgb]{0,0,0}\makebox(0,0)[lb]{\smash{$w$}}}%
    \put(0.58597128,0.06903929){\color[rgb]{0,0,0}\makebox(0,0)[lb]{\smash{$v^*$}}}%
    \put(0.63929552,0.12288175){\color[rgb]{0,0,0}\makebox(0,0)[lb]{\smash{$u$}}}%
    \put(0.70082087,0.09200199){\color[rgb]{0,0,0}\makebox(0,0)[lb]{\smash{$v$}}}%
    \put(0.82888029,0.06990255){\color[rgb]{0,0,0}\makebox(0,0)[lb]{\smash{$v^*$}}}%
    \put(0.88089689,0.12273403){\color[rgb]{0,0,0}\makebox(0,0)[lb]{\smash{$u$}}}%
    \put(0.87951564,0.00999233){\color[rgb]{0,0,0}\makebox(0,0)[lb]{\smash{$w$}}}%
    \put(0.94361818,0.09199466){\color[rgb]{0,0,0}\makebox(0,0)[lb]{\smash{$v$}}}%
    \put(0.24115352,0.06368612){\color[rgb]{0,0,0}\makebox(0,0)[lb]{\smash{$\Rightarrow$}}}%
    \put(0.78272659,0.06368612){\color[rgb]{0,0,0}\makebox(0,0)[lb]{\smash{$\Rightarrow$}}}%
  \end{picture}%
\endgroup%
\caption{Left: Spur-reduction$(u,v)$.
Right: Node-split$(u,v,w)$.}
\label{fig:primitives}
\end{figure}

The following two lemmas are generalizations of the results in~\cite[Section~5]{CEX15}.

\begin{lemma}\label{lem:primitives1}
Operation \spurreduction\/ is ws-equivalent.
\end{lemma}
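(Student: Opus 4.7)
The plan is to mirror the template established for Lemma~\ref{lem:crimp}, proving ws-equivalence in both directions by exhibiting, for any simple polygon in $\Phi(P_1)$, a corresponding simple polygon in $\Phi(P_2)$, and vice versa. Write $P_1$ for the polygon before the operation and $P_2$ for the polygon after; the two polygons differ only by the removal of some number of spurs of the form $[u,v,u]$, each of which deletes one chord in $D_u$, one chord in $D_v$, and two near-parallel edges in the corridor $N_{uv}$.

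For the forward direction I would start from a simple $Q_1\in\Phi(P_1)$ and eliminate the spurs one at a time. The key local observation is that the two edges of a spur $[u,v,u]$ must be consecutive in the total order along the corridor $N_{uv}$: they are joined by a chord in $D_v$, and if any other edge were strictly between them in the corridor order, its endpoint on $\partial D_v$ would be enclosed by that chord and forced to cross it. Hence the two endpoints $b,c$ of the spur edges on $\partial D_u$ are adjacent among the chord endpoints on the $v$-arc. I would then delete the two corridor edges and the $D_v$-chord, and replace the chords $(a,b)$ and $(c,d)$ in $D_u$ by a single chord $(a,d)$; adjacency of $b,c$ on $\partial D_u$ keeps the matching non-crossing. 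Iterating over all reduced spurs yields a simple $Q_2\in\Phi(P_2)$.

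For the backward direction I would start from a simple $Q_2\in\Phi(P_2)$ and reinstate the spurs one at a time. For each spur to be inserted, I locate the chord $(a,d)$ in $D_u$ coming from the merged vertex, introduce a new adjacent pair $b,c$ of endpoints on the $v$-arc, two near-parallel edges in $N_{uv}$, and a short chord inside $D_v$ joining their other endpoints; finally replace $(a,d)$ with $(a,b)$ and $(c,d)$. The precondition that every vertex at $u$ has an incident edge $[u,v]$ is crucial here: in $Q_1$, every chord in $D_u$ had a $v$-arc endpoint, so any chord of $D_u$ surviving in $Q_2$ without a $v$-arc endpoint must itself have arisen from a merge, and its ``natural'' expansion lies on a specific side of $(a,d)$. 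This allows me to place $b,c$ between two consecutive endpoints on the $v$-arc on the correct side of $(a,d)$, so that $(a,b)$ and $(c,d)$ do not cross any other chord in $D_u$.

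The hard part will be the backward direction: I must argue that there is enough freedom in the cyclic placement of $(b,c)$ on the $v$-arc to avoid crossings with every other chord in $D_u$. I expect this to reduce to the observation that any chord of $D_u$ touching the $v$-arc has both endpoints on the same side of $(a,d)$ in $Q_2$ (else it would have crossed $(a,d)$), so inserting an adjacent pair $b,c$ in a suitable gap on that side yields a non-crossing matching, after which the Remark~\ref{rem:combin} construction turns the signature back into an actual simple polygon in $\Phi(P_1)$.
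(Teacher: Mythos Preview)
Your forward direction has a gap: the claim that the two corridor edges of a spur $[u,v,u]$ are \emph{consecutive} in the total order along $N_{uv}$ is false in general.  If another spur $[u,v,u]$ is nested inside the first one in $D_v$, the outer spur's two edges are separated by the inner spur's two edges.  Your own argument shows only that any edge trapped between them must be matched, inside $D_v$, to another trapped edge on the $u$-arc; that matched pair is precisely another spur $[u,v,u]$.  The fix is easy: process an \emph{innermost} spur at each step (its $D_v$-chord encloses no other endpoints, so its two corridor edges really are consecutive), or remove all spurs simultaneously as the paper does and argue that the induced cyclic order on the surviving endpoints around $\partial D_u$ is unchanged.

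The backward direction has a more serious gap.  You correctly observe that every chord in $D_u$ with no $v$-arc endpoint must come from a merged vertex, but your proposed fix---choosing the ``correct side of $(a,d)$'' on the $v$-arc---does not address the case where some \emph{other} merged chord $(e,f)$ separates $(a,d)$ from the $v$-arc entirely.  In that situation every segment from $a$ (or $d$) to the $v$-arc crosses $(e,f)$, regardless of where you place $b,c$.  The missing idea is an \emph{order} in which to reinstate the spurs.  The paper defines a partial order on the vertices at $u$ in $P'$ by setting $u_1\prec u_2$ whenever the chord of $u_1$ separates the chord of $u_2$ from $N_{uv}$; any such $u_1$ necessarily came from a merge.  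Expanding in a linear extension of $\prec$ guarantees that when you reach $(a,d)$, every chord that used to block it has already been replaced by chords each having a $v$-arc endpoint, and your placement argument then goes through.  Without this ordering your construction can fail.
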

\begin{proof}
Let $P'$ be  obtained from applying \spurreduction$(u,v)$ to $P$.
First suppose that $P$ is weakly simple.
Then, there exists a simple polygon $Q\in \Phi(P)$ represented by its signature.
Successively replace any path $[u,v,u]$ by $[u]$ and delete these two edges from the ordering.
The new signature defines a polygon $Q'$ in the strip system of $P'$.
By the assumption in the operation, every edge of $Q$ in $D_u$ is adjacent to an edge in $N_{uv}$, which has another endpoint in $\partial D_v$. Since $Q$ is simple, the counterclockwise order of the endpoints of the deleted edges in $\partial D_v$ is the same as the clockwise order of the endpoints of the new edges in $\partial D_u$.
Thus, the new matching in $D_u$ produces no crossings, $Q'\in\Phi(P')$, and $P'$ is weakly simple.

Now suppose $P'$ is weakly simple.
Then, there exists a simple polygon $Q'\in \Phi(P')$ represented by its signature.
Let $H_u'$ be the set of all vertices in the node $u$ in $P'$.
Each vertex in $H_u'$ corresponds to an edge in $Q'$ that lies in the disk $D_u$; these edges are noncrossing
chords of the circle $\partial D_u$. We define a partial ordering on $H_u'$:
For two vertices $u_1,u_2\in H_u'$, let $u_1\prec u_2$ if the chord
corresponding to $u_1$ separates the chord of $u_2$ from $N_{uv}$ within the disk $D_u$.
Intuitively, we have $u_1\prec u_2$ if $u_1$ blocks $u_2$ from the corridor $N_{uv}$.
Note that if $u_1\prec u_2$, then neither endpoint of the chord corresponding
to $u_1$ is on the boundary of $N_{uv}$; consequently $u_1$ was obtained from a path $[u,v,u]$
or $[u,v,u,v,u,\ldots, u]$ in $P$ after removing one or more spurs.
We expand the paths $u_i\in H_u'$ incrementally, in an order
determined by any linear extension of the partial ordering $\prec$.
Replace the first vertex $u_1\in H_u'$ by $[p,u,v,u]$ (or $[p,u,v,u,v,u,\ldots, u,q]$ if needed),
and modify the signature by inserting consecutive new edges into the total order of
the edges along $uv$ at any position that is not separated from the
chord in $D_u$ that corresponds to $u_1$.
The resulting polygon $P''$ and the new signature define a polygon $Q''$ in the strip system of $P''$.
By construction, the new edges in $D_v$ connect consecutive endpoints in counterclockwise
order around $v$, thus the new matching in $D_v$ is noncrossing.
In the disk $D_u$, the operation replaces the chord corresponding to $u_1$ by
noncrossing new chords. Each new edge in $D_u$ has at least one endpoint in $N_{uv}$;
consequently, none of them blocks access to $N_{u,v}$.
Then, the new matching in $D_u$ has no crossing and $Q''\in \Phi(P'')$.
By repeating this procedure we obtain $P$ and a simple polygon $Q\in\Phi(P)$,
hence $P$ is weakly simple.
%
\end{proof}

\begin{lemma}\label{lem:primitives2}
Operation \nodesplit\/ is ws-equivalent.
\end{lemma}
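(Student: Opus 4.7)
The plan is to mirror the style of the proof of Lemma~\ref{lem:primitives1}: represent simple perturbations by their signatures and show, in both directions, that a signature for one polygon can be converted into a signature for the other by a purely local modification inside the disks and corridors near $v$.

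Let $P'$ be the polygon obtained from $P$ by \nodesplit$(u,v,w)$. For the forward direction, suppose $P$ is weakly simple, and fix a simple polygon $Q\in\Phi(P)$ represented by its signature $\sigma(Q)$. The edges of $Q$ in the disk $D_v$ form a noncrossing perfect matching on the endpoints lying in the boundary arcs $\partial D_v\cap N_{xy}$ for each segment $xy$ incident to $v$. Let $S$ be the set of chords in this matching that go between $\partial D_v\cap N_{uv}$ and $\partial D_v\cap N_{vw}$; by the assumption that $v$ is not in the interior of any edge that contains $uv$ or $vw$, these chords are exactly the edges of $Q$ that correspond to subpaths $[u,v,w]$ of $P$. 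Because $uv$ and $vw$ are radially consecutive at $v$ and $Q$ is simple, no chord of the matching separates $S$ from the wedge $\angle uvw$: any chord of the matching either has both endpoints in other corridors (and hence lies on the opposite side of $S$) or is in $S$ itself and, together with the rest of $S$, forms a nested family along the wedge. Place $D_{v^*}$ inside $\angle uvw$ close enough to $v$ that the corridors $N_{uv^*}$ and $N_{v^*w}$ lie inside the wedge-side regions of $N_{uv}$ and $N_{vw}$ respectively, and transfer the chords of $S$ from $D_v$ to $D_{v^*}$, preserving their cyclic order along the boundary of each corridor. The remaining chords in $D_v$ are still noncrossing; the transferred chords in $D_{v^*}$ are noncrossing because their relative order on $\partial D_{v^*}$ is inherited from $\partial D_v$; and no chord in any other corridor interacts with the modification. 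The resulting signature defines a simple polygon $Q'\in\Phi(P')$, witnessing that $P'$ is weakly simple.

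The reverse direction is easier. Suppose $P'$ is weakly simple, and fix $Q'\in\Phi(P')$. By construction of \nodesplit, the only edges of $P'$ incident to $v^*$ are those of the form $[u,v^*,w]$; the corresponding chords in $D_{v^*}$ form a noncrossing family connecting $\partial N_{uv^*}$ to $\partial N_{v^*w}$. Move all such chords back into $D_v$, placing them on the wedge-side of the existing matching and preserving their cyclic order along the boundaries of $N_{uv}$ and $N_{vw}$. Because $v^*$ was chosen inside the wedge $\angle uvw$ and sufficiently close to $v$, the reinserted chords occupy a region disjoint from the chords already present in $D_v$, so noncrossingness is maintained; furthermore, the assumption that $P$ has no spur of the form $[u,v,u]$ or $[w,v,w]$ guarantees that every chord we reinsert still corresponds to a legal $[u,v,w]$ subpath rather than being forced to double back. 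The new signature defines a simple polygon $Q\in\Phi(P)$, so $P$ is weakly simple.

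The main obstacle is the geometric step in the forward direction: verifying that the chords of $S$ really are extractable to the wedge side. This is precisely where the hypothesis that $uv$ and $vw$ are radially consecutive at $v$ is essential, together with the no-interior-fork hypothesis, which ensures that $N_{uv}$ and $N_{vw}$ contain no edges other than those spanning full corridors from $\partial D_u$ or $\partial D_w$ to $\partial D_v$. Once this nesting structure is observed, the rest is a mechanical rewriting of the signature, entirely analogous to the chord surgery performed in Lemmas~\ref{lem:crimp} and~\ref{lem:primitives1}.
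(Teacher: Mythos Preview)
Your forward direction is essentially the paper's argument. One imprecision: your dichotomy ``either both endpoints in other corridors, or in $S$'' omits the case of a chord with exactly one endpoint on the $N_{uv}$ arc (or the $N_{vw}$ arc) and the other on some $N_{vx}$ with $x\notin\{u,w\}$. Such chords exist whenever there is a path $[u,v,x]$ or $[x,v,w]$ through $v$. The paper handles this by observing that the no-spur hypothesis forces every endpoint on $N_{uv}$ to be matched either into $N_{vw}$ (hence in $S$) or into a corridor on the far side of the wedge; noncrossingness then forces the $S$-endpoints to be wedge-most, so nothing is trapped between consecutive chords of $S$. This is easy to add, and your conclusion stands.

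Your reverse direction is correct but differs from the paper's. The paper avoids signature surgery entirely: invoking Remark~\ref{rem:combin}, it slides $v^*$ to within distance $\delta$ of $v$ (legal because the face bounded by $u,v,w,v^*$ is empty), after which any $Q'\in\Phi(P')$ built on disks of radius $\eps$ is already within Fr\'echet distance $\eps+\delta$ of $P$. Your approach instead reinserts the $D_{v^*}$ chords into the wedge region of $D_v$; this works because that region is chord-free in $Q'$, but your appeal to the no-spur hypothesis is misplaced here---it is needed to certify that the wedge region is empty (ruling out $[u,v,u]$ and $[w,v,w]$ chords that would sit there), not to make the reinserted subpaths ``legal''. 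The paper's route is a one-liner; yours is more symmetric with the forward direction and equally valid.
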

\begin{proof}
Let $P'$ be obtained from $P$ via  \nodesplit$(u,v,w)$.
First assume  that $P$ is weakly simple.
Then there is a simple polygon $Q\in \Phi(P)$.
Consider the clockwise order of edges around $v$.
Since $Q$ is simple, the order of the edges $[u,v]$ of paths $[u,v,w]$ must be the reverse order of its adjacent edges $[v,w]$ (the paths must be nested as shown in Figure~\ref{fig:primitives}(right)).
Because $P$ has no spurs of the form $[u,v,u]$ or $[w,v,w]$, and the edges of $P$ that pass through $v$
avoid both $uv$ and $vw$,
every edge between a pair of adjacent edges $[u,v]$ and $[v,w]$ is also part of a path $[u,v,w]$. 
Replace the paths $[u,v,w]$ by $[u,v^*,w]$ and set the order of edges at segments $uv^*$ and $v^*w$ to be the same order of the removed edges at $uv$ and $vw$.
This defines a polygon $Q'\in\Phi(P')$, which is simple because the circular order of endpoints around $D_u$ and $D_w$ remains unchanged and the matching in $D_{v^*}$ is a subset of the matching in $D_{v}$.

Now, assume that $P'$ is weakly simple. Since the face in the image graph bounded by $u,v,w,v^*$ is empty,
we can change the embedding of the graph by bringing $v^*$ arbitrarily close to $v$, maintaining weak simplicity.
Let $\delta$ be the distance between $v^*$ and $v$.
Let $Q'\in \Phi(P')$ be a simple polygon defined on disks of radius $\eps$.
Then, $Q'$ is within $\eps+\delta$ Fr\'echet distance from $P$ and therefore $P$ is weakly simple.
\end{proof}

\subsection{Operations}
\label{ssec:operations}
We describe three complex operations: \pinextraction, \Vshortcut, and \Lshortcut.
In Section~\ref{ssec:phases}, we show how to use them to eliminate spurs
along any given bar $b$.
The \pinextraction\ and \Vshortcut\ operations eliminate pins and V-chains.
Chains in $\PP$ with two or more vertices in the interior of $D_b$ are simplified incrementally, removing one vertex at a time, by the \Lshortcut\/ operation.

Since the image graph is determined by the polygon, it would suffice to describe how the operations modify the polygon. However, it is sometimes more convenient to first define new nodes and segments in the image graph, and use them to describe the changes in the polygon. In the last step of these operations, we remove any node (segment) that contains no vertex (edge), to ensure that the image graph is consistent with the polygon.
\begin{quote}
{\bf \pinextraction$(u,v)$.}
Assume that $P$ satisfies \ref{inv:tree}--\ref{inv:subdiv} and contains a pin $[v,u,v]\in \Pin$.
By \ref{inv:deg}, node $v$ is adjacent to a unique node $w$ outside of $D_b$.
Perform the following three primitives:
(1) \subdivision\ of every path $[v,w]$ into $[v,w^*,w]$;
(2) \spurreduction$(v,u)$.
(3) \spurreduction$(w^*,v)$.
(4) Update the image graph.
See Figure~\ref{fig:pin-extraction} for an example.
\end{quote}

\begin{quote}
{\bf \Vshortcut$(v_1,u,v_2)$.}
Assume that $P$ satisfies \ref{inv:tree}--\ref{inv:subdiv} and $[v_1,u,v_2]\in \VV$.
Furthermore, $P$ contains no pin of the form $[v_1,u,v_1]$ or $[v_2,u,v_2]$, and no edge $[u,q]$ such that segment $uq$ is in the interior of the wedge $\angle v_1uv_2$.
By \ref{inv:deg}, nodes $v_1$ and $v_2$ are each adjacent to unique nodes $w_1$ and $w_2$ outside of $D_b$, respectively.

The operation executes the following primitives sequentially:
(1) \nodesplit$(v_1,u,v_2)$, which creates a temporary node $u^*$;
(2) \nodesplit$(u^*,v_1,w_1)$ and \nodesplit$(u^*,v_2,w_2)$;
    which create $v_1^*,v_2^*\in \partial D_b$, respectively;
(3) \merge\ every path $[v_1^*,u^*,v_2^*]$ to $[v_1^*,v_2^*]$.
(4) Update the image graph.
See Figure~\ref{fig:shortcut} for an example.
\end{quote}

\begin{figure}[h]
\centering
	\def\svgwidth{.8\textwidth}
\begingroup%
  \makeatletter%
  \providecommand\color[2][]{%
    \errmessage{(Inkscape) Color is used for the text in Inkscape, but the package 'color.sty' is not loaded}%
    \renewcommand\color[2][]{}%
  }%
  \providecommand\transparent[1]{%
    \errmessage{(Inkscape) Transparency is used (non-zero) for the text in Inkscape, but the package 'transparent.sty' is not loaded}%
    \renewcommand\transparent[1]{}%
  }%
  \providecommand\rotatebox[2]{#2}%
  \ifx\svgwidth\undefined%
    \setlength{\unitlength}{340.13549805bp}%
    \ifx\svgscale\undefined%
      \relax%
    \else%
      \setlength{\unitlength}{\unitlength * \real{\svgscale}}%
    \fi%
  \else%
    \setlength{\unitlength}{\svgwidth}%
  \fi%
  \global\let\svgwidth\undefined%
  \global\let\svgscale\undefined%
  \makeatother%
  \begin{picture}(1,0.40440649)%
    \put(0,0){\includegraphics[width=\unitlength,page=1]{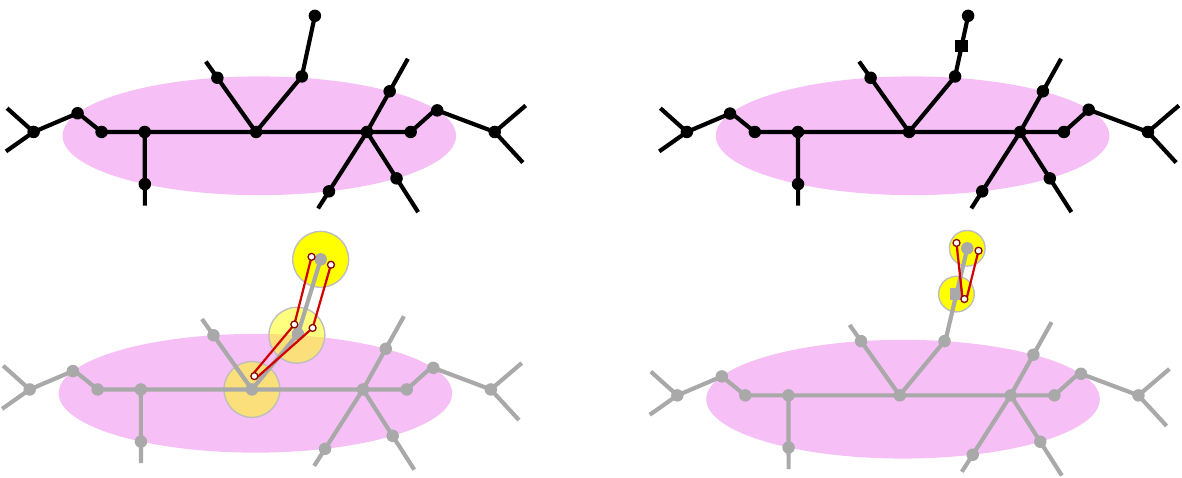}}%
    \put(0.27918366,0.38686045){\color[rgb]{0,0,0}\makebox(0,0)[lb]{\smash{$w$}}}%
    \put(0.26629627,0.32320761){\color[rgb]{0,0,0}\makebox(0,0)[lb]{\smash{$v$}}}%
    \put(0.20751802,0.27209977){\color[rgb]{0,0,0}\makebox(0,0)[lb]{\smash{$u$}}}%
    \put(0.29568539,0.16514986){\color[rgb]{0,0,0}\makebox(0,0)[lb]{\smash{$w$}}}%
    \put(0.27517419,0.12156479){\color[rgb]{0,0,0}\makebox(0,0)[lb]{\smash{$v$}}}%
    \put(0.19748922,0.03638174){\color[rgb]{0,0,0}\makebox(0,0)[lb]{\smash{$u$}}}%
    \put(0.83153794,0.38904262){\color[rgb]{0,0,0}\makebox(0,0)[lb]{\smash{$w$}}}%
    \put(0.77591292,0.35981923){\color[rgb]{0,0,0}\makebox(0,0)[lb]{\smash{$w^*$}}}%
    \put(0.81538595,0.31919424){\color[rgb]{0,0,0}\makebox(0,0)[lb]{\smash{$v$}}}%
    \put(0.7598616,0.27263173){\color[rgb]{0,0,0}\makebox(0,0)[lb]{\smash{$u$}}}%
    \put(0.74624496,0.04538366){\color[rgb]{0,0,0}\makebox(0,0)[lb]{\smash{$u$}}}%
    \put(0.76259685,0.15321744){\color[rgb]{0,0,0}\makebox(0,0)[lb]{\smash{$w^*$}}}%
    \put(0.80966286,0.11493949){\color[rgb]{0,0,0}\makebox(0,0)[lb]{\smash{$v$}}}%
    \put(0.84242183,0.18389475){\color[rgb]{0,0,0}\makebox(0,0)[lb]{\smash{$w$}}}%
    \put(0.48076988,0.29146466){\color[rgb]{0,0,0}\makebox(0,0)[lb]{\smash{$\Rightarrow$}}}%
    \put(0.48076988,0.06958966){\color[rgb]{0,0,0}\makebox(0,0)[lb]{\smash{$\Rightarrow$}}}%
  \end{picture}%
\endgroup%
\caption{\pinextraction. Changes in the image graph (top), changes in the polygon (bottom).}
\label{fig:pin-extraction}
\end{figure}
\begin{figure}[h]
\centering
	\def\svgwidth{.8\textwidth}
\begingroup%
  \makeatletter%
  \providecommand\color[2][]{%
    \errmessage{(Inkscape) Color is used for the text in Inkscape, but the package 'color.sty' is not loaded}%
    \renewcommand\color[2][]{}%
  }%
  \providecommand\transparent[1]{%
    \errmessage{(Inkscape) Transparency is used (non-zero) for the text in Inkscape, but the package 'transparent.sty' is not loaded}%
    \renewcommand\transparent[1]{}%
  }%
  \providecommand\rotatebox[2]{#2}%
  \ifx\svgwidth\undefined%
    \setlength{\unitlength}{395.15600586bp}%
    \ifx\svgscale\undefined%
      \relax%
    \else%
      \setlength{\unitlength}{\unitlength * \real{\svgscale}}%
    \fi%
  \else%
    \setlength{\unitlength}{\svgwidth}%
  \fi%
  \global\let\svgwidth\undefined%
  \global\let\svgscale\undefined%
  \makeatother%
  \begin{picture}(1,0.35018072)%
    \put(0,0){\includegraphics[width=\unitlength,page=1]{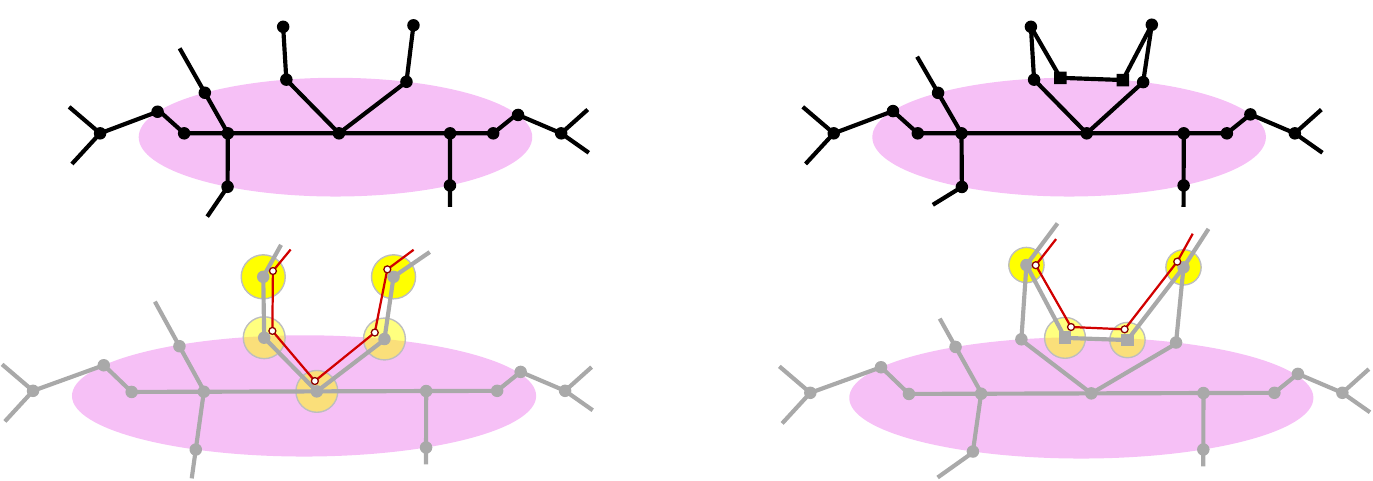}}%
    \put(0.23950538,0.23171934){\color[rgb]{0,0,0}\makebox(0,0)[lb]{\smash{$u$}}}%
    \put(0.22155035,0.03193418){\color[rgb]{0,0,0}\makebox(0,0)[lb]{\smash{$u$}}}%
    \put(0.78364235,0.04207644){\color[rgb]{0,0,0}\makebox(0,0)[lb]{\smash{$u$}}}%
    \put(0.78281298,0.22963508){\color[rgb]{0,0,0}\makebox(0,0)[lb]{\smash{$u$}}}%
    \put(0.16954477,0.33235989){\color[rgb]{0,0,0}\makebox(0,0)[lb]{\smash{$w_1$}}}%
    \put(0.14051547,0.15368746){\color[rgb]{0,0,0}\makebox(0,0)[lb]{\smash{$w_1$}}}%
    \put(0.71066076,0.33655092){\color[rgb]{0,0,0}\makebox(0,0)[lb]{\smash{$w_1$}}}%
    \put(0.7034186,0.1581071){\color[rgb]{0,0,0}\makebox(0,0)[lb]{\smash{$w_1$}}}%
    \put(0.30864648,0.33337518){\color[rgb]{0,0,0}\makebox(0,0)[lb]{\smash{$w_2$}}}%
    \put(0.30864648,0.146713){\color[rgb]{0,0,0}\makebox(0,0)[lb]{\smash{$w_2$}}}%
    \put(0.8785191,0.14889129){\color[rgb]{0,0,0}\makebox(0,0)[lb]{\smash{$w_2$}}}%
    \put(0.84802315,0.33957491){\color[rgb]{0,0,0}\makebox(0,0)[lb]{\smash{$w_2$}}}%
    \put(0.17061017,0.29835527){\color[rgb]{0,0,0}\makebox(0,0)[lb]{\smash{$v_1$}}}%
    \put(0.14869396,0.11375868){\color[rgb]{0,0,0}\makebox(0,0)[lb]{\smash{$v_1$}}}%
    \put(0.71997306,0.29961177){\color[rgb]{0,0,0}\makebox(0,0)[lb]{\smash{$v_1$}}}%
    \put(0.71377333,0.11001729){\color[rgb]{0,0,0}\makebox(0,0)[lb]{\smash{$v_1$}}}%
    \put(0.30904287,0.29343135){\color[rgb]{0,0,0}\makebox(0,0)[lb]{\smash{$v_2$}}}%
    \put(0.29904895,0.10747908){\color[rgb]{0,0,0}\makebox(0,0)[lb]{\smash{$v_2$}}}%
    \put(0.86642076,0.10482292){\color[rgb]{0,0,0}\makebox(0,0)[lb]{\smash{$v_2$}}}%
    \put(0.84193826,0.29468229){\color[rgb]{0,0,0}\makebox(0,0)[lb]{\smash{$v_2$}}}%
    \put(0.76739255,0.30589533){\color[rgb]{0,0,0}\makebox(0,0)[lb]{\smash{$v_1^*$}}}%
    \put(0.77430603,0.12205249){\color[rgb]{0,0,0}\makebox(0,0)[lb]{\smash{$v_1^*$}}}%
    \put(0.7950274,0.3054764){\color[rgb]{0,0,0}\makebox(0,0)[lb]{\smash{$v_2^*$}}}%
    \put(0.80121438,0.12314998){\color[rgb]{0,0,0}\makebox(0,0)[lb]{\smash{$v_2^*$}}}%
    \put(0.49004249,0.25550615){\color[rgb]{0,0,0}\makebox(0,0)[lb]{\smash{$\Rightarrow$}}}%
    \put(0.49004249,0.06640721){\color[rgb]{0,0,0}\makebox(0,0)[lb]{\smash{$\Rightarrow$}}}%
  \end{picture}%
\endgroup%
\caption{\Vshortcut. Changes in the image graph (top), changes in the polygon (bottom).}
\label{fig:shortcut}
\end{figure}

\begin{lemma}\label{lem:pin-and-V}
 \pinextraction\/ and \Vshortcut\/  are ws-equivalent and maintain  \ref{cond:A1}--\ref{cond:A2} in $D_b$ and \ref{inv:tree}--\ref{inv:subdiv} in adjacent clusters.
\end{lemma}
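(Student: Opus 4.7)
The plan is to verify ws-equivalence by decomposing each operation into the primitives of Section~\ref{ssec:primitives}, and then check the invariants by inspecting the local changes in $D_b$ and in the adjacent clusters. Every step of \pinextraction\ and \Vshortcut\ is one of \subdivision, \merge, \spurreduction, \nodesplit; the first two are ws-equivalent unconditionally, and the last two are ws-equivalent under their stated preconditions by Lemmas~\ref{lem:primitives1}--\ref{lem:primitives2}, so the real work is to verify those preconditions at each invocation.

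For \pinextraction, after step~1 the node $v$ has degree two in $G$, with neighbors $u$ and the freshly created $w^*$; since $w^*$ is new, no vertex at $v$ can be a spur in direction $w^*$, so every vertex at $v$ has an incident edge to $u$, validating \spurreduction$(v,u)$. After step~2, $w^*$ is likewise degree two and both of its neighbors are reachable from every vertex at $w^*$, validating \spurreduction$(w^*,v)$; tracing through the three primitives, the pin $[v,u,v]$ is effectively transported into a spur $[w,w^*,w]$ at the new interior cluster node $w^*$. For \Vshortcut, the consecutive radial order of $uv_1,uv_2$ around $u$ and the absence of the spurs $[v_1,u,v_1]$, $[v_2,u,v_2]$ are direct hypotheses. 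The remaining precondition of \nodesplit$(v_1,u,v_2)$---that $u$ is not in the interior of any edge containing $uv_1$ or $uv_2$---holds because the segments $uv_1,uv_2$ lie off the bar line, so the many edges that fork at $u$ along $b$ do not contain them; an edge collinear with $uv_i$ and having $u$ in its interior would combine with the V-chain $[v_1,u,v_2]$ to produce a pair of crossing sub-paths forbidden by Corollary~\ref{cor:forbidden}. The preconditions of the two subsequent \nodesplit\ calls follow immediately from the geometry of $u^*,v_1,v_2,w_1,w_2$ just after step~1, and \merge\ has no precondition.

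For the invariants, \ref{cond:A1}--\ref{cond:A2} concern collinear edges along $b$; every edge touched by the two operations lies on a non-bar line through $u$ or entirely inside an adjacent cluster, so no pair of consecutive collinear edges along $b$ is created or destroyed and the unimodal length structure of Lemma~\ref{lem:irreducible} is preserved. For the cluster invariants~\ref{inv:tree}--\ref{inv:subdiv}, \pinextraction\ inserts $w^*$ as an interior node of the cluster at $v$ and deposits there a spur of type~\ref{inv:mp-a}, preserving the rooted-tree structure~\ref{inv:tree} and leaving the old leaves untouched so that~\ref{inv:deg} and~\ref{inv:subdiv} persist. \Vshortcut\ attaches the two new boundary nodes $v_1^*,v_2^*$ to the clusters at $v_1,v_2$ as new leaves joined by a single segment across $D_b$; the subdivisions implicit in the \nodesplit\ steps guarantee~\ref{inv:subdiv} at these new leaves, which have degree two and no spur, matching case~\ref{inv:deg-b} of~\ref{inv:deg}.

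The hardest part will be the off-bar interior-of-edge precondition in step~1 of \Vshortcut: since $u$ is an interior node of the bar and up to $\Theta(n)$ edges of $P$ may fork at $u$ along the bar direction, one must argue carefully that none of them is also collinear with the non-bar segment $uv_1$ or $uv_2$, which is where the forbidden-configuration statement of Corollary~\ref{cor:forbidden} becomes essential.
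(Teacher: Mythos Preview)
Your approach is the same as the paper's: decompose each operation into the primitives of Section~\ref{ssec:primitives} and invoke Lemmas~\ref{lem:primitives1}--\ref{lem:primitives2}, then check that \ref{cond:A1}--\ref{cond:A2} and \ref{inv:tree}--\ref{inv:subdiv} survive the local changes. The paper's own proof is extremely terse (it essentially just says ``by construction'' for the invariants and cites the primitive lemmas), so your expanded version is welcome. Two points, however, need correction.

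First, the third step of \Vshortcut\ is not a \merge\ in the sense defined in Section~\ref{ssec:crimp}: the points $v_1^*,u^*,v_2^*$ are \emph{not} collinear, so ``\merge\ has no precondition'' is wrong. The paper handles this by observing that the triangle $\Delta(u^*v_1^*v_2^*)$ is empty of nodes and segments (this follows from the hypothesis that no segment $uq$ lies in the wedge $\angle v_1uv_2$), so replacing the two sides by the base is a homotopy inside an empty region and hence ws-equivalent. You should supply this argument.

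Second, the ``hardest part'' you flag is not hard, and your proposed route through Corollary~\ref{cor:forbidden} does not close it. Corollary~\ref{cor:forbidden} would only tell you that \emph{if} such a through-edge existed then $P$ is not weakly simple; but Lemma~\ref{lem:pin-and-V} does not assume $P$ is weakly simple, so this does not establish the \nodesplit\ precondition. The clean argument uses \ref{inv:deg}--\ref{inv:subdiv} directly: any edge containing the segment $uv_i$ must either pass through the leaf $v_i$ (forbidden by \ref{inv:subdiv}) or have $v_i$ as an endpoint; but $v_i$ has degree two with neighbors $u$ and $w_i$ by \ref{inv:deg}\ref{inv:deg-b}, and neither $[v_i,u]$ nor $[v_i,w_i]$ has $u$ in its interior. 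The same remark applies to your verification of \spurreduction$(v,u)$ in \pinextraction: the reason every vertex at $v$ has an edge to $u$ is not that ``$w^*$ is new'' (subdivision would faithfully carry an old spur $[w,v,w]$ to $[w^*,v,w^*]$) but that \ref{inv:deg}\ref{inv:deg-b} forbids spurs at the leaf $v$ to begin with---exactly what the paper cites.
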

\begin{proof}
{\bf  \pinextraction. }
By construction, the operation maintains \ref{cond:A1}--\ref{cond:A2} in $D_b$ and \ref{inv:tree}--\ref{inv:subdiv} in adjacent clusters.
Also, \ref{inv:deg}--\ref{inv:subdiv} ensure that \spurreduction$(v,u)$ in step (2) satisfies its preconditions. Consequently, all three primitives are ws-equivalent.

{\bf \Vshortcut.}
By construction, the operation maintains \ref{cond:A1}--\ref{cond:A2} in $D_b$ and \ref{inv:tree}--\ref{inv:subdiv} in adjacent clusters.
The first two primitives are ws-equivalent by Lemma~\ref{lem:primitives2}.
The third step is ws-equivalent because triangle $\Delta(u^* v_1^* v_2^*)$
is empty of nodes and segments, by assumption.
\end{proof}

\medskip\noindent{\bf \Lshortcut\/ operation.}
The purpose of this operation is to eliminate a vertex of a path that has an edge along a given bar.
Before describing the operation, we introduce some notation; refer to Figure~\ref{fig:classify}.
For a node $v\in \partial D_b$, let $L_v$ be the set of paths $[v,u_1,u_2]$ in $P$ such that $u_1,u_2\in {\rm int}(D_b)$.
Each path in $\PP$ is either in $\Pin$, in $\VV$, or has two subpaths in some $L_v$.
Let $M_{cr}$ be the set of longest edges of cross-chains in $\PP$.
Denote by $\widehat{L_v}\subset L_v$ the set of paths $[v,u_1,u_2]$, where $[u_1,u_2]$ is \emph{not}
in $M_{cr}$.

\begin{figure}[h]
\centering
\includegraphics[width=0.95\linewidth]{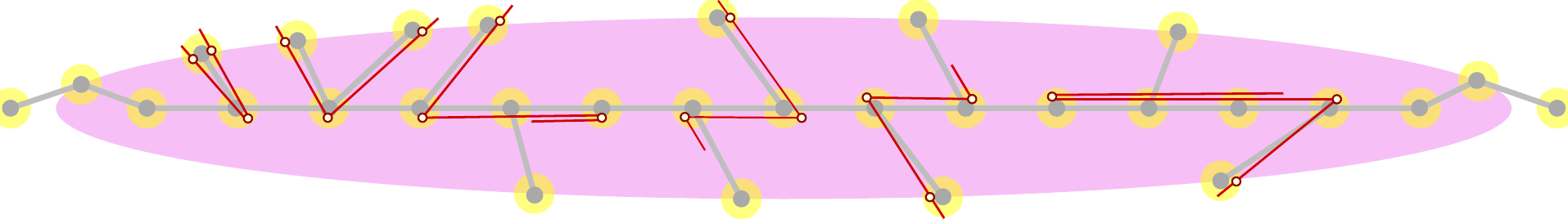}
\caption{Paths in $\Pin$, $\VV$, $L_v^{TR}$, $L_v^{TL}$, $L_v^{BR}$, and $L_v^{BL}$.}
\label{fig:classify}
\end{figure}

We partition $L_v$ into four subsets (refer to Figure~\ref{fig:classify}): a path $[v,u_1,u_2]\in L_v$ is in
\begin{enumerate}\itemsep  -1pt
\item $L_v^{TR}$ (\emph{top-right})    if $v$ is a \emph{top} vertex and $x(u_1)<x(u_2)$;
\item $L_v^{TL}$ (\emph{top-left})     if $v$ is a \emph{top} vertex and $x(u_1)>x(u_2)$;
\item $L_v^{BR}$ (\emph{bottom-right}) if $v$ is a \emph{bottom} vertex and $x(u_1)<x(u_2)$;
\item $L_v^{BL}$ (\emph{bottom-left})  if $v$ is a \emph{bottom} vertex and $x(u_1)<x(u_2)$.
\end{enumerate}
We partition $\widehat{L_v}$ into four subsets analogously.
We define the operation \Lshortcut\/ for paths in $L_v^{TR}$; the definition for the other subsets can be obtained by suitable reflections.

\begin{quote}
{\bf \Lshortcut$(v,TR)$.}
Assume that $P$ satisfies \ref{inv:tree}--\ref{inv:subdiv}, $v\in \partial D_b$ and $L_v^{TR}\neq \emptyset$.
By \ref{inv:deg}, $v$ is adjacent to a unique node $u_1\in b$ and to a unique node $w\notin D_b$. Let $U$ denote the set of all nodes $u_2$ for which $[v,u_1,u_2]\in L_v^{TR}$. Let $u_{\min}\in U$ and $u_{\max}\in U$ be the leftmost and rightmost node in $U$, respectively.
Further assume that $P$ satisfies:
\begin{enumerate}[label=(B\arabic*),start=1]\itemsep -2pt
\item \label{cond:B1}
there is no pin of the form $[v,u_1,v]$;
\item \label{cond:B2}
no edge $[p,u_1]$ such that segment $pu_1$ is in the interior of the wedge $\angle vu_1u_{\min}$;
\item \label{cond:B3}
no edge $[p,q]$ such that $p\in \partial D_b$ is a top vertex and $q\in b$, $x(u_1)<x(q)<x(u_{\max})$.
\end{enumerate}
Do the following (see Figure~\ref{fig:spur-simplification} for an example).
\begin{enumerate}[label=(\arabic*),start=0]\itemsep -2pt
\item  Create a new node $v^*\in \partial D_b$ to
 the right of $v$ sufficiently close to $v$.
\item \label{phase1} For every path $[v,u_1,u_2]\in L_v^{TR}$ in which $u_1u_2$ is the \emph{only} longest edge of a cross-chain, create a crimp by replacing $[u_1,u_2]$ with $[u_1,u_2,u_1,u_2]$.
\item \label{phase2} Replace every path $[w,v,u_1,u_{\min}]$ by $[w,v^*,u_{\min}]$.
\item \label{phase3} Replace every path $[w,v,u_1,u_2]$, where $u_2\in U$ and $u_2\neq u_{\min}$, by $[w,v^*,u_{\min},u_2]$.
\item \label{phase4} Update the image graph.
\end{enumerate}
\end{quote}

\begin{figure}[htbp]
\centering
	\def\svgwidth{.75\textwidth}
\begingroup%
  \makeatletter%
  \providecommand\color[2][]{%
    \errmessage{(Inkscape) Color is used for the text in Inkscape, but the package 'color.sty' is not loaded}%
    \renewcommand\color[2][]{}%
  }%
  \providecommand\transparent[1]{%
    \errmessage{(Inkscape) Transparency is used (non-zero) for the text in Inkscape, but the package 'transparent.sty' is not loaded}%
    \renewcommand\transparent[1]{}%
  }%
  \providecommand\rotatebox[2]{#2}%
  \ifx\svgwidth\undefined%
    \setlength{\unitlength}{382.35900879bp}%
    \ifx\svgscale\undefined%
      \relax%
    \else%
      \setlength{\unitlength}{\unitlength * \real{\svgscale}}%
    \fi%
  \else%
    \setlength{\unitlength}{\svgwidth}%
  \fi%
  \global\let\svgwidth\undefined%
  \global\let\svgscale\undefined%
  \makeatother%
  \begin{picture}(1,0.35931152)%
    \put(0,0){\includegraphics[width=\unitlength,page=1]{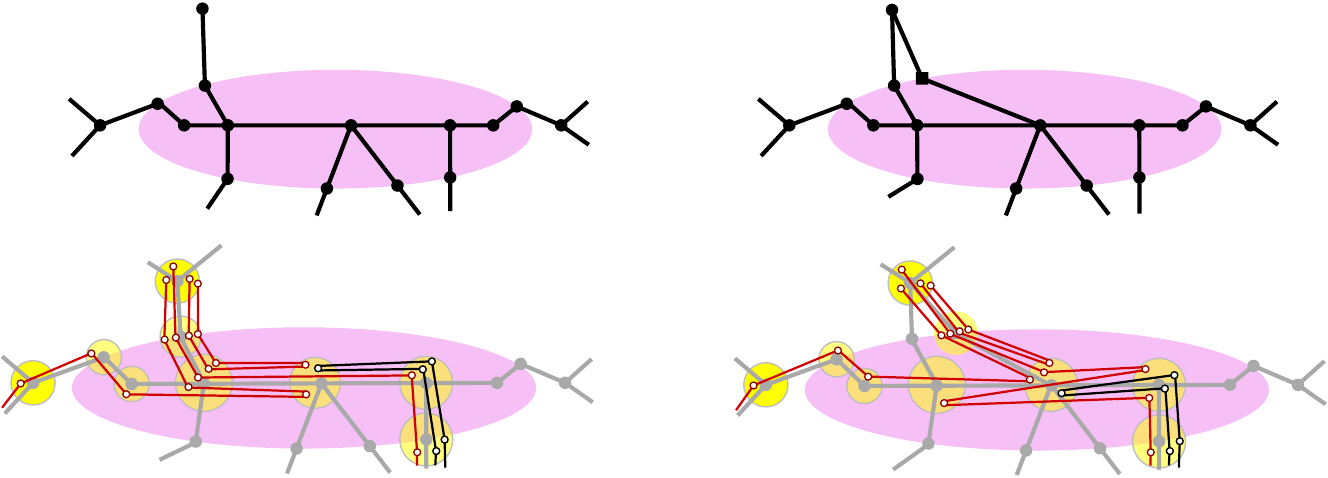}}%
    \put(0.16183673,0.34667531){\color[rgb]{0,0,0}\makebox(0,0)[lb]{\smash{$w$}}}%
    \put(0.1557278,0.1491438){\color[rgb]{0,0,0}\makebox(0,0)[lb]{\smash{$w$}}}%
    \put(0.68172301,0.34822904){\color[rgb]{0,0,0}\makebox(0,0)[lb]{\smash{$w$}}}%
    \put(0.70920726,0.149787){\color[rgb]{0,0,0}\makebox(0,0)[lb]{\smash{$w$}}}%
    \put(0.76153982,0.2830107){\color[rgb]{0,0,0}\makebox(0,0)[lb]{\smash{$u_{\min}$}}}%
    \put(0.13045644,0.30234321){\color[rgb]{0,0,0}\makebox(0,0)[lb]{\smash{$v$}}}%
    \put(0.09923589,0.1080705){\color[rgb]{0,0,0}\makebox(0,0)[lb]{\smash{$v$}}}%
    \put(0.16516711,0.03844691){\color[rgb]{0,0,0}\makebox(0,0)[lb]{\smash{$u_1$}}}%
    \put(0.64753135,0.30155136){\color[rgb]{0,0,0}\makebox(0,0)[lb]{\smash{$v$}}}%
    \put(0.66442725,0.10418293){\color[rgb]{0,0,0}\makebox(0,0)[lb]{\smash{$v$}}}%
    \put(0.17933462,0.24663845){\color[rgb]{0,0,0}\makebox(0,0)[lb]{\smash{$u_1$}}}%
    \put(0.69617619,0.24533128){\color[rgb]{0,0,0}\makebox(0,0)[lb]{\smash{$u_1$}}}%
    \put(0.70998496,0.03527508){\color[rgb]{0,0,0}\makebox(0,0)[lb]{\smash{$u_1$}}}%
    \put(0.24453446,0.28143168){\color[rgb]{0,0,0}\makebox(0,0)[lb]{\smash{$u_{\min}$}}}%
    \put(0.21330425,0.09941177){\color[rgb]{0,0,0}\makebox(0,0)[lb]{\smash{$u_{\min}$}}}%
    \put(0.77196935,0.09941177){\color[rgb]{0,0,0}\makebox(0,0)[lb]{\smash{$u_{\min}$}}}%
    \put(0.31466004,0.28053453){\color[rgb]{0,0,0}\makebox(0,0)[lb]{\smash{$u_{\max}$}}}%
    \put(0.29082744,0.09916491){\color[rgb]{0,0,0}\makebox(0,0)[lb]{\smash{$u_{\max}$}}}%
    \put(0.8285069,0.28068403){\color[rgb]{0,0,0}\makebox(0,0)[lb]{\smash{$u_{\max}$}}}%
    \put(0.84480467,0.09856683){\color[rgb]{0,0,0}\makebox(0,0)[lb]{\smash{$u_{\max}$}}}%
    \put(0.69812405,0.3119341){\color[rgb]{0,0,0}\makebox(0,0)[lb]{\smash{$v^*$}}}%
    \put(0.73331992,0.12122839){\color[rgb]{0,0,0}\makebox(0,0)[lb]{\smash{$v^*$}}}%
    \put(0.48448158,0.26434569){\color[rgb]{0,0,0}\makebox(0,0)[lb]{\smash{$\Rightarrow$}}}%
    \put(0.48448158,0.0676922){\color[rgb]{0,0,0}\makebox(0,0)[lb]{\smash{$\Rightarrow$}}}%
  \end{picture}%
\endgroup%
\caption{\Lshortcut. Changes in the image graph (top), changes in the polygon (bottom).}
\label{fig:spur-simplification}
\end{figure}

See Figure~\ref{fig:spur-simplification-diff} for an explanation of why  \Lshortcut\/
requires  conditions \ref{cond:B2}--\ref{cond:B3} and phase \ref{phase1} of the operation.
If we omit any of these conditions, \Lshortcut\/ would not be ws-equivalent.
\begin{figure}[htbp]
\centering
	\def\svgwidth{\textwidth}
\begingroup%
  \makeatletter%
  \providecommand\color[2][]{%
    \errmessage{(Inkscape) Color is used for the text in Inkscape, but the package 'color.sty' is not loaded}%
    \renewcommand\color[2][]{}%
  }%
  \providecommand\transparent[1]{%
    \errmessage{(Inkscape) Transparency is used (non-zero) for the text in Inkscape, but the package 'transparent.sty' is not loaded}%
    \renewcommand\transparent[1]{}%
  }%
  \providecommand\rotatebox[2]{#2}%
  \ifx\svgwidth\undefined%
    \setlength{\unitlength}{430.23999023bp}%
    \ifx\svgscale\undefined%
      \relax%
    \else%
      \setlength{\unitlength}{\unitlength * \real{\svgscale}}%
    \fi%
  \else%
    \setlength{\unitlength}{\svgwidth}%
  \fi%
  \global\let\svgwidth\undefined%
  \global\let\svgscale\undefined%
  \makeatother%
  \begin{picture}(1,0.33990332)%
    \put(0,0){\includegraphics[width=\unitlength,page=1]{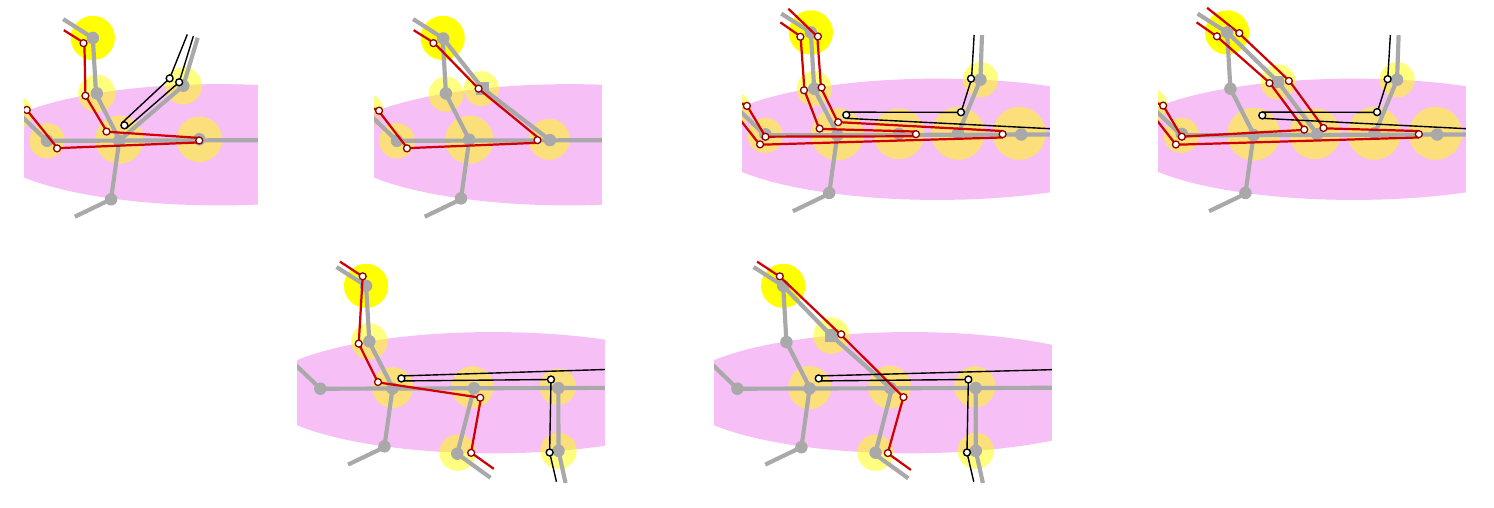}}%
    \put(0.02286202,0.31087849){\color[rgb]{0,0,0}\makebox(0,0)[lb]{\smash{$w$}}}%
    \put(0.03502536,0.27706556){\color[rgb]{0,0,0}\makebox(0,0)[lb]{\smash{$v$}}}%
    \put(0.08856851,0.22170624){\color[rgb]{0,0,0}\makebox(0,0)[lb]{\smash{$u_1$}}}%
    \put(0.13337026,0.21871759){\color[rgb]{0,0,0}\makebox(0,0)[lb]{\smash{$u_{\min}$}}}%
    \put(0.25793167,0.31164877){\color[rgb]{0,0,0}\makebox(0,0)[lb]{\smash{$w$}}}%
    \put(0.27392921,0.27702869){\color[rgb]{0,0,0}\makebox(0,0)[lb]{\smash{$v$}}}%
    \put(0.32363816,0.22247655){\color[rgb]{0,0,0}\makebox(0,0)[lb]{\smash{$u_1$}}}%
    \put(0.36157869,0.21807529){\color[rgb]{0,0,0}\makebox(0,0)[lb]{\smash{$u_{\min}$}}}%
    \put(0.31792832,0.29560103){\color[rgb]{0,0,0}\makebox(0,0)[lb]{\smash{$v^*$}}}%
    \put(0.5039422,0.31464538){\color[rgb]{0,0,0}\makebox(0,0)[lb]{\smash{$w$}}}%
    \put(0.51551624,0.28188028){\color[rgb]{0,0,0}\makebox(0,0)[lb]{\smash{$v$}}}%
    \put(0.53332473,0.22412977){\color[rgb]{0,0,0}\makebox(0,0)[lb]{\smash{$u_1$}}}%
    \put(0.58333117,0.22175716){\color[rgb]{0,0,0}\makebox(0,0)[lb]{\smash{$u_{\min}$}}}%
    \put(0.6640705,0.22291643){\color[rgb]{0,0,0}\makebox(0,0)[lb]{\smash{$u_{\max}$}}}%
    \put(0.78347989,0.31493077){\color[rgb]{0,0,0}\makebox(0,0)[lb]{\smash{$w$}}}%
    \put(0.80605201,0.28216566){\color[rgb]{0,0,0}\makebox(0,0)[lb]{\smash{$v$}}}%
    \put(0.81286242,0.22441513){\color[rgb]{0,0,0}\makebox(0,0)[lb]{\smash{$u_1$}}}%
    \put(0.86286887,0.22204252){\color[rgb]{0,0,0}\makebox(0,0)[lb]{\smash{$u_{\min}$}}}%
    \put(0.94360816,0.22320181){\color[rgb]{0,0,0}\makebox(0,0)[lb]{\smash{$u_{\max}$}}}%
    \put(0.86315444,0.29759919){\color[rgb]{0,0,0}\makebox(0,0)[lb]{\smash{$v^*$}}}%
    \put(0.21586681,0.14141477){\color[rgb]{0,0,0}\makebox(0,0)[lb]{\smash{$w$}}}%
    \put(0.22197616,0.10679468){\color[rgb]{0,0,0}\makebox(0,0)[lb]{\smash{$v$}}}%
    \put(0.26704371,0.06172713){\color[rgb]{0,0,0}\makebox(0,0)[lb]{\smash{$u_1$}}}%
    \put(0.29637785,0.10433769){\color[rgb]{0,0,0}\makebox(0,0)[lb]{\smash{$u_{\max}$}}}%
    \put(0.49828871,0.13656316){\color[rgb]{0,0,0}\makebox(0,0)[lb]{\smash{$w$}}}%
    \put(0.50439807,0.10194308){\color[rgb]{0,0,0}\makebox(0,0)[lb]{\smash{$v$}}}%
    \put(0.54946562,0.05687552){\color[rgb]{0,0,0}\makebox(0,0)[lb]{\smash{$u_1$}}}%
    \put(0.57879974,0.09948609){\color[rgb]{0,0,0}\makebox(0,0)[lb]{\smash{$u_{\max}$}}}%
    \put(0.56454389,0.12707925){\color[rgb]{0,0,0}\makebox(0,0)[lb]{\smash{$v^*$}}}%
    \put(0,0){\includegraphics[width=\unitlength,page=2]{fig-spur-simplification-diff-v2.pdf}}%
    \put(0.20251189,0.24095131){\color[rgb]{0,0,0}\makebox(0,0)[lb]{\smash{$\Rightarrow$}}}%
    \put(0.72728851,0.24352862){\color[rgb]{0,0,0}\makebox(0,0)[lb]{\smash{$\Rightarrow$}}}%
    \put(0.4295711,0.07473218){\color[rgb]{0,0,0}\makebox(0,0)[lb]{\smash{$\Rightarrow$}}}%
  \end{picture}%
\endgroup%
\caption{Cases in which \Lshortcut\/ is not ws-equivalent.
Top left: $P$  does not satisfy \ref{cond:B2}.
Top right: $P$ does not satisfy \ref{cond:B3}.
Bottom: the operation skips phase \ref{phase1}.
}
\label{fig:spur-simplification-diff}
\end{figure}


\begin{lemma}\label{lem:Lshortcut}
\Lshortcut\ is ws-equivalent and maintains \ref{cond:A1}--\ref{cond:A2} in $D_b$ and \ref{inv:tree}--\ref{inv:subdiv} in adjacent clusters.
\end{lemma}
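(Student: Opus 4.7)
The plan is to argue ws-equivalence via the signature representation, following the template of Lemmas~\ref{lem:primitives1}, \ref{lem:primitives2}, and~\ref{lem:pin-and-V}, and then verify the invariants by direct inspection of the modified image graph. Phase~(1) is a sequence of \crimpreduction\ operations run in reverse, hence ws-equivalent by Lemma~\ref{lem:crimp}; so the heart of the argument concerns phases~(2) and~(3), which reroute the $L_v^{TR}$-paths through the new node $v^*$.

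For the forward direction, I assume $P$ is weakly simple and take a simple $Q\in\Phi(P)$. I construct a signature $\sigma(Q')$ as follows: for each rerouted path, delete its edges from the corridors $N_{wv}$, $N_{vu_1}$, and $N_{u_1u_{\min}}$, and insert corresponding edges in $N_{wv^*}$, $N_{v^*u_{\min}}$, and, for paths continuing to $u_2\neq u_{\min}$, $N_{u_{\min}u_2}$, preserving the top-to-bottom order inherited from $N_{wv}$ and $N_{u_1u_2}$. The matchings at $D_w$, $D_{v^*}$, $D_{u_{\min}}$, and $D_{u_2}$ stay noncrossing because the new routes respect the cyclic orderings already present in $Q$; condition~\ref{cond:B2} guarantees that no edge incident to $u_1$ inside the wedge $\angle vu_1u_{\min}$ is trapped by the reroute, so $D_{u_1}$'s matching is also preserved.

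For the reverse direction, starting from $Q'\in\Phi(P')$, I expand each rerouted path back: replace every $[w,v^*,u_{\min}]$-path by $[w,v,u_1,u_{\min}]$ and every $[w,v^*,u_{\min},u_2]$-path by $[w,v,u_1,u_2]$, placing the reintroduced edges at the top of the orderings in every affected corridor. The main obstacle will be to show that this reconstruction yields a simple polygon; the reintroduced traversals at $u_1$ could \emph{a priori} cross cross-chains traversing $u_1$ or edges along $[u_1,u_{\max}]$. This is exactly where conditions~\ref{cond:B1}--\ref{cond:B3} earn their keep: \ref{cond:B1} excludes a pin at $u_1$ that would produce the node crossing forbidden by Corollary~\ref{cor:forbidden}\ref{cross:1}; \ref{cond:B2} forbids any rogue segment at $u_1$ inside the wedge $\angle vu_1u_{\min}$ that would trigger the same configuration; and \ref{cond:B3} forbids a top edge with endpoint on $b$ strictly between $u_1$ and $u_{\max}$ that, together with a reintroduced $L_v^{TR}$-path, would realize Corollary~\ref{cor:forbidden}\ref{cross:3} or~\ref{cross:4}. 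Since $P$ satisfies all three, the reconstruction stays noncrossing and $Q\in\Phi(P)$.

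Finally, for the invariants: the cluster containing $v$ absorbs the new node $v^*$ as a leaf attached by a subdivision along $\partial D_b$, so \ref{inv:tree}--\ref{inv:subdiv} are preserved; the clusters at $u_{\min}$ and the outside neighbours inherit only new path-endpoints at existing leaves, again compatibly with the invariants. Along the bar, phases~(2)--(3) remove one traversal each of $[v,u_1]$, $[u_1,u_{\min}]$, and (for phase~(3)) $[u_1,u_2]$ per rerouted path; phase~(1) compensates by adding two traversals of $[u_1,u_2]$ precisely when that edge was the unique maximum of a cross-chain, so by Lemma~\ref{lem:irreducible} the resulting length sequence along $b$ stays unimodal and crimp-free, preserving \ref{cond:A1}--\ref{cond:A2}.
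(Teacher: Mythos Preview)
Your argument has a genuine gap in the forward direction. When you reroute a path $[w,v,u_1,u_2]$ to $[w,v^*,u_{\min},u_2]$, the new edge $[v^*,u_{\min}]$ enters $D_{u_{\min}}$ from above, while its partner $[u_{\min},u_2]$ sits wherever the old $[u_1,u_2]$ sat in the corridors to the right of $u_{\min}$. If in the perturbation $Q$ you started from there are non-$H$ edges along $b$ lying \emph{above} some $[u_1,u_2]\in H$ in the corridors between $u_1$ and $u_{\min}$, then the new matching at $D_{u_{\min}}$ is forced to cross them: the diagonal from $v^*$ cannot reach the $[u_{\min},u_2]$ stub without cutting through those higher edges. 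Your phrase ``preserving the top-to-bottom order inherited from $N_{wv}$ and $N_{u_1u_2}$'' fixes the relative order of the $H$-edges among themselves but says nothing about their position relative to the rest of the bar, which is exactly the obstruction.

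The paper handles this by proving an intermediate property $(\star)$: among all simple perturbations $Q_1\in\Phi(P_1)$, there is one in which every edge $[u_1,u_2]\in H$ lies above every overlapping edge not in $H$. This is established by an exchange argument (take a $Q_1$ minimizing violations; if some $[u_1,u_2']\in H$ still has non-$H$ edges above it, those edges form a set $Z$ that, by \ref{cond:B2}--\ref{cond:B3} and Lemma~\ref{lem:irreducible}, must all turn back to the right at or beyond $u_2'$, so they can be pushed below $[u_2',u_3']$ without creating crossings, contradicting minimality). Only once $(\star)$ is in hand does the reroute become a legitimate \nodesplit\ followed by \merge. Note also that your appeal to Corollary~\ref{cor:forbidden} is misplaced: that corollary identifies configurations whose presence proves $P$ is \emph{not} weakly simple, which is how the algorithm justifies halting; it does not by itself certify that a particular signature construction produces a simple polygon. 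Conditions \ref{cond:B2}--\ref{cond:B3} enter the proof through the exchange argument for $(\star)$, not as direct crossing-exclusions in your reconstruction.
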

\begin{proof}
Let $P_1$ be the polygon obtained from $P$ after phase~\ref{phase1} of \Lshortcut$(v,TR)$ and $P_2$ be the polygon obtained after phase \ref{phase3}.
Note that phase \ref{phase1} of the operation only creates crimps,
and it is ws-equivalent by Lemma~\ref{lem:crimp}.
Let $H$ be the set of edges $[u_1,u_2]$ of paths $[v,u_1,u_2]\in L_v^{TR}$.
Phases \ref{phase2}--\ref{phase3} are equivalent to the concatenation of the primitives: \subdivision, \nodesplit, and \merge.
Specifically, they are equivalent to subdividing every edge in $H$ into $[u_1,u_{\min},u_2]$ whenever $u_2\neq u_{\min}$,
and applying \nodesplit$(v,u_1,u_{\min})$ (which creates $u_1^*$) to $P_2$ followed by \nodesplit$(w,v,u_1^*)$ (which creates $v^*$), and merging every path $[v^*,u_1^*,u_{\min}]$ to $[v^*,u_{\min}]$.
The only primitive that may not satisfy its preconditions is \nodesplit$(v,u_1,u_{\min})$:
segment $u_1u_{\min}$ may be collinear with several segments of $b$,
and $P_2$ may contain spurs that overlap with $u_1u_{\min}$.
In the next paragraph, we show that the spurs that may overlap with $u_1u_{\min}$ do not pose a problem,
and we can essentially repeat the proof of Lemma~\ref{lem:primitives2}.

Assume that $P_1$ is weakly simple and consider a polygon $Q_1\in\Phi(P_1)$.
Due to \ref{cond:A1}--\ref{cond:A2} and phase~\ref{phase1}, every path in $L_v^{TR}$ is a sub-path
of some path $[v,u_1,u_2, u_3]$ where $x(u_3)\le x(u_2)$.
We show that $P_1$ has a perturbation in $\Phi(P_1)$ with the following property:
\begin{quote}
($\star$) Every edge $[u_1,u_2]\in H$ lies above all overlapping edges $e\notin H$.
\end{quote}
Let $Q_1\in \Phi(P_1)$ be a perturbation of $P_1$ into a simple polygon that has the minimum number of edges
$[u_1,u_2]\in H$ that violate ($\star$). We claim that $Q_1$ satisfies ($\star$).
Suppose the contrary, that $Q_1$ does not satisfy ($\star$). For a contradiction, we modify
$Q_1\in \Phi(P_1)$ and obtain another perturbation $Q_1'\in\Phi(P_1)$ that has strictly fewer
edges that violate ($\star$) as shown in Figure~\ref{fig:reshuffling}.
Recall that $Q_1$ yields a total order of edges in each segment of $b$  based on the above-below relationship.
Let $[u_1,u_2']\in H$ be the highest edge that violates ($\star$),
and assume that this edge is part of a path $[v,u_1,u_2', u_3']$.
Let $Z$ be the set of edges that are above $[u_1,u_2']$ within the corridors between $u_1$ and $u_2'$, and are not in $H$.
By \ref{cond:B2}--\ref{cond:B3} and Lemma~\ref{lem:irreducible}, every edge $[z_1,z_2]\in Z$ must be part of a path $[z_1,z_2,z_3]$ where $x(u_1)\le x(z_2)<x(u_2')\le x(z_1)$ and $x(u_2')\le x(z_3)$, otherwise $Q_1$ would not be simple.
We modify $\sigma(Q_1)$ by moving the edges in $Z$, maintaining their relative order, immediately below edge $[u_2', u_3']$ in all segments between $u_1$ and $u_2'$.
This results in a simple polygon $Q_1'\in\Phi(P_1)$ such that $[u_1,u_2']$ and all edges in $H$ above $[u_1,u_2']$ satisfy ($\star$), contradicting the choice of $Q_1$.

\begin{figure}[htbp]
\centering
	\def\svgwidth{0.9\textwidth}
\begingroup%
  \makeatletter%
  \providecommand\color[2][]{%
    \errmessage{(Inkscape) Color is used for the text in Inkscape, but the package 'color.sty' is not loaded}%
    \renewcommand\color[2][]{}%
  }%
  \providecommand\transparent[1]{%
    \errmessage{(Inkscape) Transparency is used (non-zero) for the text in Inkscape, but the package 'transparent.sty' is not loaded}%
    \renewcommand\transparent[1]{}%
  }%
  \providecommand\rotatebox[2]{#2}%
  \ifx\svgwidth\undefined%
    \setlength{\unitlength}{372.63422852bp}%
    \ifx\svgscale\undefined%
      \relax%
    \else%
      \setlength{\unitlength}{\unitlength * \real{\svgscale}}%
    \fi%
  \else%
    \setlength{\unitlength}{\svgwidth}%
  \fi%
  \global\let\svgwidth\undefined%
  \global\let\svgscale\undefined%
  \makeatother%
  \begin{picture}(1,0.15564823)%
    \put(0,0){\includegraphics[width=\unitlength,page=1]{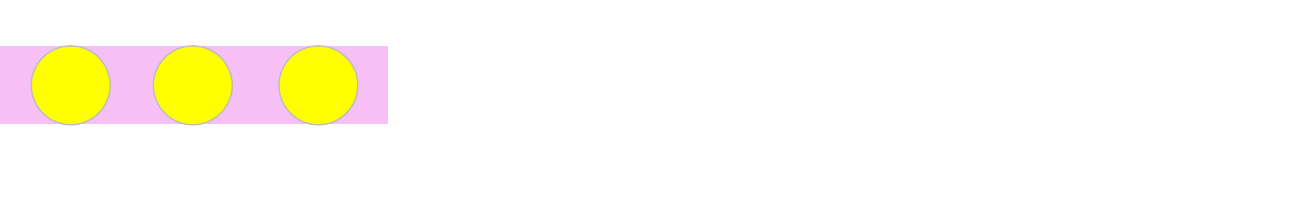}}%
    \put(0.04283601,0.04243298){\color[rgb]{0,0,0}\makebox(0,0)[lb]{\smash{$u_1$}}}%
    \put(0.22906953,0.13061408){\color[rgb]{0,0,0}\makebox(0,0)[lb]{\smash{$u_{\max}$}}}%
    \put(0,0){\includegraphics[width=\unitlength,page=2]{fig-reshuffle-v2.pdf}}%
    \put(0.39044968,0.04243298){\color[rgb]{0,0,0}\makebox(0,0)[lb]{\smash{$u_1$}}}%
    \put(0.57668321,0.13061408){\color[rgb]{0,0,0}\makebox(0,0)[lb]{\smash{$u_{\max}$}}}%
    \put(0,0){\includegraphics[width=\unitlength,page=3]{fig-reshuffle-v2.pdf}}%
    \put(0.74271802,0.04243298){\color[rgb]{0,0,0}\makebox(0,0)[lb]{\smash{$u_1$}}}%
    \put(0.92895155,0.13061408){\color[rgb]{0,0,0}\makebox(0,0)[lb]{\smash{$u_{\max}$}}}%
    \put(0,0){\includegraphics[width=\unitlength,page=4]{fig-reshuffle-v2.pdf}}%
    \put(0.15006443,0.0022426){\color[rgb]{0,0,0}\makebox(0,0)[b]{\smash{(a)}}}%
    \put(0.49767807,0.0022426){\color[rgb]{0,0,0}\makebox(0,0)[b]{\smash{(b)}}}%
    \put(0.84994646,0.0022426){\color[rgb]{0,0,0}\makebox(0,0)[b]{\smash{(c)}}}%
  \end{picture}%
\endgroup%
\caption{(a) A perturbation $Q_1$ that violates property ($\star$); the highest edge $[u_1,u_2']\in H$ that violates ($\star$) is \textcolor{red}{red}, and edges in $Z$ are \textcolor{blue}{blue}.
(b) We can modify $Q_1$ to reduce the number of edges in $H$ that violate ($\star$).
(c) There exists a perturbation $Q_1$ that satisfies ($\star$).
}
\label{fig:reshuffling}
\end{figure}

We can proceed as in the proof of Lemma~\ref{lem:primitives2}, using a perturbation $Q_1\in \Phi(P_1)$ that satisfies ($\star$)
to show that $P_2$ is weakly simple if and only if $P_1$ is weakly simple, that is, phases~\ref{phase2}--\ref{phase3} are ws-equivalent.

By construction, \ref{inv:tree}--\ref{inv:subdiv} are maintained.
Note that the intermediate polygon $P_1$ may violate condition \ref{cond:A2},
since phase~\ref{phase1} introduces crimps.
However, after phase~\ref{phase3}, conditions \ref{cond:A1} and \ref{cond:A2} are restored,
and operation \Lshortcut\/ maintains \ref{cond:A1}--\ref{cond:A2} in the ellipse $D_b$.
\end{proof}

\subsection{Bar simplification algorithm}
\label{ssec:phases}

In this section, we describe an algorithm, called {\sf bar-simplification}, that incrementally removes all spurs of the polygon $P$ from a bar $b$ using a sequence of \pinextraction, \Vshortcut, and \Lshortcut\ operations.  Informally, our algorithm ``unwinds'' each polygonal chain in the bar.
It extracts pins and V-chains whenever possible. Any other chain in $D_b$ contains edges along bar $b$, and the
sequence of these edge lengths is unimodal (cf.~Lemma~\ref{lem:irreducible}). Our algorithm ``unwinds'' these chains by a sequence of \Lshortcut\ operations. Each operation eliminates or reduces one of the shortest edges along $b$ (see Figure~\ref{fig:cross-lifecycle}). The algorithm alternates between \Lshortcut$(v,TR)$ and \Lshortcut$(v,TL)$ to unwind the chains from their top endpoints to the longest edge in $b$; and then uses \Lshortcut$(v,BR)$ and \Lshortcut$(v,BL)$ to resolve the bottom part.

When we unwind the chains in $D_b$ starting from their top vertices using \Lshortcut$(v,TR)$ and \Lshortcut$(v,TL)$, we cannot hope to remove the longest edge of a cross-chain. We stop using the operations when every path in $L_v^{TR}$ contains a longest edge of a cross-chain. This motivates the use of $\Lhat_v^{TR}$ (instead of $L_v^{TR}$) in step~\ref{step-iii} below. We continue with the algorithm and its analysis.

\begin{figure}[htbp]
\centering
\includegraphics[width=\linewidth]{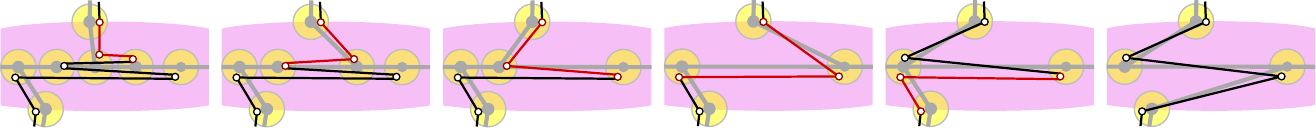}
\caption{Life cycle of a cross-chain in the while loop of {\sf bar-simplification}. The steps applied, from left to right, are: \ref{step-iii}, \ref{step-iv}, \ref{step-iii}, \ref{step-iv}, and \ref{step-vi}.}
\label{fig:cross-lifecycle}
\end{figure}

\medskip
\noindent Algorithm {\sf bar-simplification}$(P,b)$.\\
While $P$ has an edge along $b$, perform one operation as follows.
\begin{enumerate}[label=(\roman*)]\itemsep -2pt
\item\label{step-i}
	If $\Pin\neq \emptyset$, pick an arbitrary pin $[v,u,v]$ and perform \pinextraction$(u,v)$.
\item\label{step-ii}
    Else if $\VV\neq \emptyset$, then let $[v_1,u,v_2]\in \VV$ be a path where $|x(v_1)-x(v_2)|$ is minimal.
    If there is no segment $uq$ in the wedge $\angle v_1uv_2$, perform \Vshortcut$(v_1,u,v_2)$, else report that $P$ is not weakly simple and halt.
\item \label{step-iii}
    Else if there exists $v\in \partial D_b$ such that $\Lhat_v^{TR}\neq \emptyset$, do:
    \begin{enumerate}
    \item\label{step-iiiA}
    	Let $v$ be the rightmost node where $L_v^{TR}\neq \emptyset$.
    \item\label{step-iiiB}
    	If $L_{v}^{TR}$ satisfies \ref{cond:B2}--\ref{cond:B3}, do \Lshortcut$(v,TR)$.
    \item\label{step-iiiC}
    	Else let $v'$ be the leftmost node such that $x(v)<x(v')$ and $L_{v'}^{TL} \neq \emptyset$,
    or record that no such vertex $v'$ exists.
    	\begin{enumerate}
    	    \item [(c.1)]
	    	    If $v'$ does not exist,
                or $L_{v'}^{TL}$ does not satisfy \ref{cond:B2}--\ref{cond:B3},
                or any path in $L_{v'}^{TL}$ contains a longest edge of a cross-chain,
                then report that $P$ is not weakly simple and halt.	    	
    	    \item [(c.2)]
	    	    Else do \Lshortcut$(v',TL)$.	    	
		\end{enumerate}
    \end{enumerate}
\item \label{step-iv}
    Else if there exists $v\in \partial D_b$ such that $L_v^{TL}\neq \emptyset$, perform steps~\ref{step-iiiA}--\ref{step-iiiC} with left--right and $TR$--$TL$ interchanged. (Note the use of  $L_v$ instead of  $\widehat{L_v}$. The same applies to (vi) below).
\item \label{step-v}
    Else if there exists $v\in \partial D_b$ such that $\Lhat_v^{BL}\neq \emptyset$, perform steps~\ref{step-iiiA}--\ref{step-iiiC} using $BL$ and $BR$ in place of $TR$ and $TL$, respectively, and left-right interchanged.
\item \label{step-vi}
	Else if there exists $v\in \partial D_b$ such that $L_v^{BR}\neq \emptyset$, perform steps~\ref{step-iiiA}--\ref{step-iiiC} using $BR$ and $BL$ in place of $TR$ and $TL$, respectively.
\item \label{step-vii} Else invoke \oldbarexp.
\end{enumerate}
Return $P$ (end of algorithm).\\

\begin{lemma}\label{lem:simplfication-existance}
The operations performed by {\sf bar-simplification}$(P,b)$ are ws-equivalent,
and maintain properties \ref{cond:A1}--\ref{cond:A2} in $D_b$ and \ref{inv:tree}--\ref{inv:subdiv} in adjacent clusters.
The algorithm either removes all nodes from the ellipse $D_b$,
or reports that $P$ is not weakly simple.
The \Lshortcut\/ operations performed by the algorithm
create at most two crimps in each cross-chain in $\PP$.
\end{lemma}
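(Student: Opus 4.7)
The plan is to verify the four assertions in sequence using Lemmas~\ref{lem:pin-and-V} and~\ref{lem:Lshortcut} together with Corollary~\ref{cor:forbidden}. For the first assertion, \pinextraction, \Vshortcut, and \Lshortcut\ are ws-equivalent and preserve \ref{cond:A1}--\ref{cond:A2} in $D_b$ and \ref{inv:tree}--\ref{inv:subdiv} in adjacent clusters by the two cited lemmas; the only new obligation is the call to \oldbarexp\ in step~\ref{step-vii}. Since \oldbarexp\ is known to be ws-equivalent whenever $b$ carries no spurs, I would verify that upon reaching step~\ref{step-vii} we must have $\Pin=\VV=\emptyset$ and $L_v^{TR}=L_v^{TL}=L_v^{BR}=L_v^{BL}=\emptyset$ for every $v\in\partial D_b$, which forces every maximal path in $D_b$ to contain at most one interior vertex and hence carry no spur along $b$.

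For termination and the eventual elimination of interior nodes, I would use a lexicographic potential $(\Phi_1,\Phi_2)$, where $\Phi_1$ is the total length of non-crimp edges of $P$ lying along $b$ and $\Phi_2$ is the number of pins and V-chains in $D_b$. The operations \pinextraction\ and \Vshortcut\ strictly decrease $\Phi_2$ while leaving $\Phi_1$ unchanged, since the edges they delete or insert never lie along $b$. For each \Lshortcut\/ I would examine its three phases in turn: phase~\ref{phase1} embeds a previously non-crimp edge $[u_1,u_2]$ into a fresh crimp and so strictly shrinks $\Phi_1$; phase~\ref{phase2} deletes the edge $[u_1,u_{\min}]$ outright; and phase~\ref{phase3} replaces each edge $[u_1,u_2]$ by the strictly shorter $[u_{\min},u_2]$ (since $u_1 < u_{\min} \le u_2$). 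Their combined effect is a strict decrease of $(\Phi_1,\Phi_2)$ in lexicographic order, so the while loop halts after finitely many iterations, and the concluding \oldbarexp\ in step~\ref{step-vii} removes the remaining interior nodes of $D_b$.

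For the correctness of the two halting branches, in each case I would exhibit a forbidden configuration of Corollary~\ref{cor:forbidden}. In step~\ref{step-ii}, the minimality of $|x(v_1)-x(v_2)|$ among V-chains through $u$, combined with a segment $uq$ inside the wedge $\angle v_1uv_2$, produces the alternating cyclic order of segments around $u$ required by case~\ref{cross:1}. In the halting branch of step~\ref{step-iiiC}, each sub-case---a failure of \ref{cond:B2} or \ref{cond:B3}, the nonexistence of a candidate $v'$ admitting \Lshortcut$(v',TL)$, or the unavoidable inclusion of a cross-chain peak in every such candidate---gives an instance of case~\ref{cross:3} or case~\ref{cross:4} of Corollary~\ref{cor:forbidden}, witnessed by taking $\gamma_2$ to be a thin neighborhood of the obstructing portion of $b$. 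Checking all of these sub-cases uniformly will be the main technical obstacle of the proof, since one must correlate the local failure of the preconditions with a global forbidden path pair.

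Finally, for the crimp bound, phase~\ref{phase1} introduces a crimp only on paths $[v,u_1,u_2]$ whose edge $[u_1,u_2]$ is the \emph{unique} longest edge of a cross-chain, and Lemma~\ref{lem:irreducible} shows that each chain has at most one such unique peak. The if-else cascade of steps~\ref{step-iii}--\ref{step-vi} fully exhausts the top-side sets $L_v^{TR}$ and $L_v^{TL}$ before any bottom-side operation fires, so each cross-chain is peeled past its peak at most once from the top and at most once from the bottom, giving at most two phase-\ref{phase1} firings per cross-chain and hence at most two crimps per cross-chain, as claimed.
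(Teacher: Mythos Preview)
Your termination argument via the potential $(\Phi_1,\Phi_2)$ has a genuine gap. Consider a cross-chain on which phase~\ref{phase1} of \Lshortcut$(v,TR)$ has already fired: the chain now begins $[v^*,a,u_1,a,\ldots]$, where the edges $[a,u_1]$ and $[u_1,a]$ are the two ``crimp edges'' you intend to exclude from $\Phi_1$. At the next pass the algorithm may reach step~\ref{step-iv} and execute \Lshortcut$(v^*,TL)$. Here $[a,u_1]$ is \emph{a} longest edge but not the \emph{unique} longest edge of the cross-chain (its twin $[u_1,a]$ has the same length), so phase~\ref{phase1} does not fire; phases~\ref{phase2}--\ref{phase3} simply delete the crimp edge $[a,u_1]$. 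Since that edge was already excluded from $\Phi_1$, and no pin or V-chain is created or destroyed, $(\Phi_1,\Phi_2)$ is unchanged. Thus the lexicographic potential does not strictly decrease on every \Lshortcut, and your termination proof collapses precisely on the operations that unwind a previously created crimp. The paper sidesteps this by proving the crimp bound \emph{first} and using it, together with the vertex count in $D_b$, to bound the total number of operations; your attempt to decouple termination from the crimp bound does not go through.

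Two smaller points. First, your justification for step~\ref{step-vii} is incorrect: when the algorithm reaches \oldbarexp, the sets $L_v^{TR}$ and $L_v^{BL}$ need not be empty. What the cascade guarantees is only $\Lhat_v^{TR}=L_v^{TL}=\Lhat_v^{BL}=L_v^{BR}=\emptyset$, so there may still remain paths $[v,u_1,u_2]\in L_v^{TR}$ with $[u_1,u_2]\in M_{cr}$; one needs Lemma~\ref{lem:irreducible} and the direction constraints to conclude that each remaining cross-chain has exactly one edge on $b$ and hence no spur. Second, your crimp-bound sketch (``peeled past its peak at most once from the top'') is too coarse: after one crimp the chain has two longest edges, and a later top-side \Lshortcut\ could in principle restore a unique longest edge and fire phase~\ref{phase1} again. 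Ruling this out requires the structural argument the paper gives---that after the intermediate \Lshortcut$(v_1,TL)$, every chain sharing the top endpoint of $c$ has undergone the same history, forcing $\Lhat_{v_2}^{TR}=\emptyset$ when the second crimp would occur---which your proposal does not supply.
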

\begin{proof}
We show that the algorithm only uses operations that satisfy their preconditions,
and  reports that $P$ is not weakly simple only when $P$ contains a forbidden configuration.

\medskip\noindent{\bf Steps~\ref{step-i}--\ref{step-ii}.}
Since every pin can be extracted from a polygon satisfying \ref{inv:tree}--\ref{inv:subdiv}, we may assume that $\Pin=\emptyset$.
Suppose that $\VV\neq \emptyset$.
Let $[v_1,u,v_2]\in \VV$ be a V-chain such that $|x(v_1)-x(v_2)|$ is minimal. Since $\Pin=\emptyset$, the only obstacle for the precondition
of \Vshortcut\ is an edge $[u,q]$ such that segment $uq$ is in the interior of the wedge $\angle v_1uv_2$ (or else the image graph would have a crossing).
If such an edge exists, it is part of a path $[p,u,q]$. The node $q$ is in $\partial D_b$ between $v_1$ and $v_2$. Note that $p\neq q$, otherwise $[p,u,q]$ would be a pin. Further, $p$ cannot be a node in the interior of the wedge $\angle v_1uv_2$,
otherwise $[p,u,q]$ would be a V-chain where $|x(p)-x(q)|<|x(v_1)-x(v_2)|$, contrary to the choice of $[v_1,u,v_2]\in \VV$.
Consequently, $p$ must be in the exterior of the wedge $\angle v_1uv_2$. In this case, the paths $[v_1,u,v_2]$ and $[p,u,q]$ form the forbidden configuration in Corollary~\ref{cor:forbidden}(1), and the algorithm correctly reports that $P$ is not weakly simple.
If no such edge $[u,q]$ exists, then \Vshortcut$(v_1,u,v_2)$ satisfies all preconditions and it is ws-equivalent by Lemma~\ref{lem:pin-and-V}.
Henceforth, we may assume that $\Pin=\emptyset$ and $\VV=\emptyset$.

\medskip\noindent{\bf Step~\ref{step-iii}--\ref{step-iv}.}
By symmetry, we consider only step~\ref{step-iii}.
Since $\Pin=\emptyset$, condition \ref{cond:B1} is met.
In step \ref{step-iiiB}, if \ref{cond:B2}-\ref{cond:B3} are also satisfied,
then \Lshortcut$(v,TR)$ is ws-equivalent by Lemma~\ref{lem:Lshortcut}.
If condition \ref{cond:B2} or \ref{cond:B3} fails, we proceed with step \ref{step-iiiC}.

\medskip\noindent{\bf Step~\ref{step-iii}(c.1).}
We show that in these cases the algorithm correctly reports that $P$ is not weakly simple.
Assume first that $v'$ does not exist. Since $L_{v}^{TR}$ does not satisfy \ref{cond:B2} or \ref{cond:B3},
there exists an edge $[p,q]$ such that $x(u_1)\le x(q)<x(u_{\max})$ and $p\in \partial D_b$ is a top node.
Edge $[p,q]$ is part of some path $[p,q,r]$. Note that $r$ cannot be a top vertex of $\partial D_b$,
since $\Pin=\emptyset$ and $\VV=\emptyset$. If $r$ is on $b$ and $x(q)<x(r)$, then $[p,q,r]\in L_{p}^{TR}$, which contradicts the choice of node $v$. If $r$ is on $b$ and $x(r)<x(q)$, then $[p,q,r]\in L_{p}^{TL}$ and $v'$ exists.
It follows that $r$ is a bottom vertex, and then the paths $[v,u_1,u_{\max}]$ and $[p,q,r]$ form a forbidden configuration
in Corollary~\ref{cor:forbidden}(1) or (3).

Assume now that $v'$ exists but $L_{v'}^{TL}$ does not satisfy \ref{cond:B2} or \ref{cond:B3}.
Let $[v',u_1',u_{\max}']$ be the path in $L_{v'}^{TL}$ with the longest edge on $b$.
By the definitions of \ref{cond:B2}--\ref{cond:B3}, $x(u_1) \le x(u'_1) < x(u_{\max})$.
If $x(u_{\max}')<x(u_1)$, then $[v,u_1,u_{\max}]$ and $[v',u_1',u_{\max}']$ form the forbidden configuration in Corollary~\ref{cor:forbidden}(2).
Else, we have $x(u_1)\leq x(u_{\max}')< x(u_1')<x(u_{\max})$.
This implies that any edge $[p,q]$ that violates \ref{cond:B2} or \ref{cond:B3} for $L_{v'}^{TL}$ must also violate \ref{cond:B2} or \ref{cond:B3} for $L_v^{TR}$.
However, this contradicts the choice of $v$ (rightmost where $L_v^{TR}\neq \emptyset$) and $v'$ (leftmost, $x(v)<x(v')$, where $L_{v'}^{TL}\neq\emptyset$).

Next assume that there is a path $[v',u_1',u_{2}']\in L_{v'}^{TL}$ such that $[u_1',u_{2}']$
is the longest edge of a cross-chain. Then this cross-chain is of the form
$[v',u_1',u_{2}',\ldots ,p']$, where all interior vertices lie on the line segment $u_1' u_2'$,
and $p'$ is a bottom vertex. Now $[v,u_1,u_{\max}]$ and this cross-chain form the forbidden configuration in Corollary~\ref{cor:forbidden}(3).
In all three cases in step \ref{step-iii}(c.1), the algorithm correctly reports that $P$ is not weakly simple.

\medskip\noindent{\bf Step~\ref{step-iii}(c.2).}
Let the path $[v',u_1',u_{\max}']\in L_{v'}^{TL}$ be selected in \Lshortcut$(v',TL)$ by the algorithm.
Since conditions \ref{cond:B1}--\ref{cond:B3} are satisfied, \Lshortcut$(v',TL)$ is ws-equivalent by Lemma~\ref{lem:Lshortcut}.

\noindent{\bf Steps~\ref{step-v}--\ref{step-vii}.}
If steps~\ref{step-i}--\ref{step-iv} do not apply, then $\Lhat_v^{TR}\cup L_v^{TL}=\emptyset$.
That is, for every path $[v,u_1,u_2]\in L^{TR}$, we have $[u_1,u_2]\in M_{cr}$.
In particular, there are no top chains. The operations in \ref{step-v}--\ref{step-vi}
do not change these properties.
Consequently, once steps~\ref{step-v}--\ref{step-vi} are executed for the first time,
steps~\ref{step-iii}--\ref{step-iv} are never executed again.
By a symmetric argument, steps~\ref{step-v}--\ref{step-vi} eliminate all paths in $\Lhat_v^{BL}\cup L_v^{BR}$.
When the algorithm reaches step~\ref{step-vii}, every edge in $b$ is necessarily in $M_{cr}$ and $L_v^{TL}\cup L_v^{BR}=\emptyset$. Consequently, by Lemma~\ref{lem:irreducible}, $b$ contains no spurs and \oldbarexp\ is ws-equivalent.
This operation eliminates all nodes in the interior of $D_b$.

\medskip\noindent {\bf Termination.}
Each \pinextraction\/ and \Vshortcut\/ operation reduces the number of vertices of $P$ within $D_b$.
Operation \Lshortcut$(v,X)$, $X\in \{TR,TL,BR,BL\}$, either reduces the number of interior vertices,
or produces a crimp if edge $[u_1,u_2]$ is a longest edge of a cross-chain.
For termination, it is enough to show that, for each cross-chain $c\in \PP$, the algorithm introduces
a crimp at most once in steps~\ref{step-iii}--\ref{step-iv}, and at most once in steps~\ref{step-v}--\ref{step-vi}.
Without loss of generality, consider step~\ref{step-iii}.

Note that step~\ref{step-iii} may apply an \Lshortcut\/ operation in two possible cases: \ref{step-iiiB} and \ref{step-iiiC}. However, an \Lshortcut\/ operation in \ref{step-iiiC} does not create crimps: \Lshortcut\/ is performed when all three conditions in \ref{step-iii}(c.1) fail.
In this case, $L_{v'}^{TR}$ does not contain any edge in $M_{cr}$, and \Lshortcut\/ does not create crimps.
We may assume that step~\ref{step-iii} creates crimps in case \ref{step-iiiB} only.

Every cross-chain remains a cross-chain in algorithm {\sf bar-simplification}: operations \pinextraction\/ and \Vshortcut\/ do not modify cross-chains; and operations \Lshortcut\/ and \oldbarexp\/ modify only the first or last few edges of a cross-chain. A longest edge of a cross-chain $c$ always connects the same two nodes in $b$ until step~\ref{step-vii} (\oldbarexp), although the \emph{number} of longest edges in $c$ may change. When \Lshortcut$(v,X)$ modifies a cross-chain, it moves its endpoint from $v\in \partial D_b$ to a nearby new node $v^*\in \partial D$. Consequently, if $L_v^X$, $X\in \{TR,TL\}$ contains the first two edges of two chains in $\PP$, then they have been modified by the same sequence of previous \Lshortcut\/ operations.

Suppose, for contradiction, that two invocations of step \ref{step-iiiB} create crimps in a cross-chain $c$, say, in operations \Lshortcut$(v_0,TR)$ and \Lshortcut$(v_2,TR)$ (see Figure~\ref{fig:termination}). The first invocation replaces $[v_0,u_1,u_2]$ with $[v_0^*,u_{\min},u_2,u_1,u_2]$ (where the edge $[u_{\min},u_2]$ may vanish if $u_{\min}=u_2$). The resulting cross-chain has two maximal longest edges, $[u_2,u_1]$ and $[u_1,u_2]$.
Since \Lshortcut\/ creates crimps only if the longest edge is unique, there must be an intermediate operation \Lshortcut$(v_1,TL)$ that removes or shortens the edge $[u_2,u_1]$, so that $[u_1,u_2]$ becomes the unique longest edge again.
When \Lshortcut$(v_1,TL)$ is performed in a step \ref{step-iv}, we have $\Lhat_v^{TR}=\emptyset$ for all top nodes $v$, and $L_{v'}^{TL}=\emptyset$ for all top nodes $v'$, $x(v')<x(v_1)$.
The steps between \Lshortcut$(v_1,TL)$ and \Lshortcut$(v_2,TR)$ modify only cross-chains whose top node is at or to the right of the top node of $c$ (\Lshortcut\/ operations move the top vertex of $c$ to the left, from $v_1$ to $v_2$ in one or more steps).
Consequently, when \Lshortcut$(v_2,TR)$ is performed in a step \ref{step-iii}, we still have $\Lhat_{v'}^{TR}=L_{v'}^{TL}=\emptyset$ for all top nodes $v'$, $x(v')<x(v_2)$.

\begin{figure}[h!tbp]
\centering
	\def\svgwidth{.8\textwidth}
\begingroup%
  \makeatletter%
  \providecommand\color[2][]{%
    \errmessage{(Inkscape) Color is used for the text in Inkscape, but the package 'color.sty' is not loaded}%
    \renewcommand\color[2][]{}%
  }%
  \providecommand\transparent[1]{%
    \errmessage{(Inkscape) Transparency is used (non-zero) for the text in Inkscape, but the package 'transparent.sty' is not loaded}%
    \renewcommand\transparent[1]{}%
  }%
  \providecommand\rotatebox[2]{#2}%
  \ifx\svgwidth\undefined%
    \setlength{\unitlength}{245.51352539bp}%
    \ifx\svgscale\undefined%
      \relax%
    \else%
      \setlength{\unitlength}{\unitlength * \real{\svgscale}}%
    \fi%
  \else%
    \setlength{\unitlength}{\svgwidth}%
  \fi%
  \global\let\svgwidth\undefined%
  \global\let\svgscale\undefined%
  \makeatother%
  \begin{picture}(1,0.15152281)%
    \put(0,0){\includegraphics[width=\unitlength,page=1]{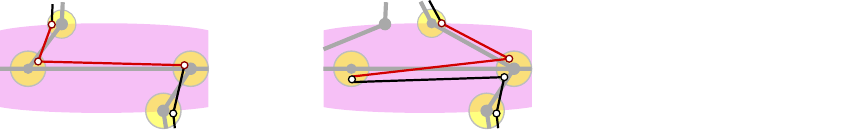}}%
    \put(0.02254864,0.1261191){\color[rgb]{0,0,0}\makebox(0,0)[lb]{\smash{$v_0$}}}%
    \put(0.52094839,0.13080363){\color[rgb]{0,0,0}\makebox(0,0)[lb]{\smash{$v_1$}}}%
    \put(0,0){\includegraphics[width=\unitlength,page=2]{fig-termination.pdf}}%
    \put(0.8564743,0.08997056){\color[rgb]{0,0,0}\makebox(0,0)[lb]{\smash{$v_2$}}}%
    \put(0.01405167,0.03437309){\color[rgb]{0,0,0}\makebox(0,0)[lb]{\smash{$u_1$}}}%
    \put(0.20337662,0.09760289){\color[rgb]{0,0,0}\makebox(0,0)[lb]{\smash{$u_2$}}}%
    \put(0.26169635,0.06848557){\color[rgb]{0,0,0}\makebox(0,0)[lb]{\smash{$\Rightarrow\ldots\Rightarrow$}}}%
    \put(0.63980371,0.06848557){\color[rgb]{0,0,0}\makebox(0,0)[lb]{\smash{$\Rightarrow\ldots\Rightarrow$}}}%
  \end{picture}%
\endgroup%
\caption{At most one crimp can be created in a cross-chain by steps~\ref{step-iiiB}.
}
\label{fig:termination}
\end{figure}

When \Lshortcut$(v_2,TR)$ is performed, we have $[v_2,u_1,u_2]\in L_{v_2}^{TR}$ but
$[v_2,u_1,u_2]\notin \Lhat_{v_2}^{TR}$ (since $u_1u_2$ is the longest edge of $c$).
Step~\ref{step-iii} is performed only if $\Lhat_p^{TR}\neq\emptyset$ for some top vertex $p$.
Since the rightmost top vertex where $L_v^{TR}\neq \emptyset$ is $v=v_2$, we have $x(p)\leq x(v_2)$.
This implies $p=v_2$. Consequently there exists a chain $c'\in \PP$ that contains a
subpath $[v_2,u_1,u_3]\in L_{v_2}^{TR}$, such that $[u_1,u_3]$ is not the longest edge of $c'$.
Since $L_{v_2}^{TR}$ contains the first two edges of both $c$ and $c'$, they have been modified by
the same sequence of \Lshortcut\/ operations. Therefore $c'$ contained $[v,u_1,u_2,u_1]$ initially.
By Lemma~\ref{lem:irreducible}, only the longest edge can repeat,
hence $[u_1,u_2]$ is the longest edge of $c'$. This implies that
$u_3=u_2$ and $\Lhat_{v_2}^{TR}=\emptyset$, contradicting the condition in Step~\ref{step-iii}.

We conclude that {\sf bar-simplification}$(P,b)$ introduces a crimp at most once in steps~\ref{step-iii}--\ref{step-iv}, and at most once in steps~\ref{step-v}--\ref{step-vi} in each cross-chain. Since all other steps decrease the number of vertices in $D_b$, the algorithm terminates, as claimed.
\end{proof}

\begin{lemma}\label{lem:simplfication-time}
Algorithm {\sf bar-simplification}$(P,b)$ takes $O(m\log m)$ time using suitable data structures,
where $m$ is the number of vertices in $b$.
\end{lemma}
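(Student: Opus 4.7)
The plan is to establish (i) that the total number of path-level modifications performed throughout all invocations of the operations is $O(m)$, and (ii) that each such modification, along with every search and precondition check executed by the algorithm, takes $O(\log m)$ time using suitable data structures; multiplying gives the bound.

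For (i), I invoke Lemma~\ref{lem:simplfication-existance}. Each \pinextraction\/ and \Vshortcut\/ call eliminates at least one vertex in the interior of $D_b$. Each \Lshortcut$(v,X)$ call processes the $|L_v^X|$ paths starting at $v$; for every such path, the interior vertex count drops by one, as $u_1$ is replaced by $u_{\min}$ and the endpoint moves from $v$ to the nearby $v^*$. The only vertex-creating events are the crimps introduced in phase~\ref{phase1} of \Lshortcut, and Lemma~\ref{lem:simplfication-existance} caps the number of such crimps at two per cross-chain. Since $b$ initially contains $O(m)$ cross-chains and the algorithm does not create new ones, the total number of newly created vertices is $O(m)$. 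Summing, the cumulative number of path-level modifications across all operations is $O(m)$; in particular the number of operations themselves is $O(m)$, and the single invocation of \oldbarexp\/ in step~\ref{step-vii} runs in $O(m\log m)$ time by~\cite{CEX15}.

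For (ii), I would maintain, keyed by $x$-coordinate along $b$, the following structures: for every $v\in\partial D_b$ and every $X\in\{TR,TL,BR,BL\}$, a balanced BST storing $L_v^X$ ordered by the $x$-coordinate of the deeper endpoint $u_2$, together with a flag indicating membership in $\Lhat_v^X$; a secondary BST over nodes of $\partial D_b$ annotated with $|L_v^X|$ and $|\Lhat_v^X|$, supporting rightmost/leftmost-nonempty queries in $O(\log m)$ time; a priority queue of V-chains keyed by $|x(v_1)-x(v_2)|$ and a set of pins; a BST over the edges along $b$ keyed by their endpoints, so that the obstructing-edge conditions \ref{cond:B2}--\ref{cond:B3} reduce to $O(1)$ range queries costing $O(\log m)$ each; and a directory identifying, for each cross-chain, its current longest-edge endpoints, so that the $\Lhat$-flag of affected paths can be refreshed in constant time whenever \Lshortcut\/ changes the longest edge of a chain. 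With these structures, a \pinextraction\/ or \Vshortcut\/ call costs $O(\log m)$, and an \Lshortcut$(v,X)$ call costs $O(\log m)$ per path processed, since the search in step~\ref{step-iii}(c) is a constant number of extreme-node queries, and reinserting each path into the sets at $v^*$ together with updating the edge record along $b$ takes $O(\log m)$.

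Multiplying the per-modification cost $O(\log m)$ by the $O(m)$ path-level modifications from (i), and adding $O(\log m)$ for each of the $O(m)$ operations and the $O(m\log m)$ for the final \oldbarexp, yields the claimed $O(m\log m)$ bound. The main obstacle is the interplay between \Lshortcut\/ and the $\Lhat_v^X$ sets: a single \Lshortcut$(v,X)$ call can make or unmake the ``unique longest edge'' status of several cross-chains, thereby reclassifying paths between $L_{v^*}^X$ and $\Lhat_{v^*}^X$ upon reinsertion; bounding the total cost of these reclassifications requires attributing each one to the vertex it creates or destroys, which is exactly the budget counted in~(i).
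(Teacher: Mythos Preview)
Your argument for part~(i) has a gap. You claim that for every path in $L_v^X$ processed by an \Lshortcut, ``the interior vertex count drops by one.'' This is false for the phase-\ref{phase3} paths: a path $[w,v,u_1,u_2]$ with $u_2\neq u_{\min}$ becomes $[w,v^*,u_{\min},u_2]$, which still has two interior vertices on~$b$ (the vertex at $u_1$ merely shifts to $u_{\min}$). Only the phase-\ref{phase2} paths, those with $u_2=u_{\min}$, actually lose a vertex. The paper explicitly notes that a single \Lshortcut\ may modify $\Theta(m)$ phase-\ref{phase3} paths, so over $\Theta(m)$ operations the total number of path-level modifications can be $\Theta(m^2)$. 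Your plan of ``reinserting each path into the sets at $v^*$'' at $O(\log m)$ cost per path therefore yields only an $O(m^2\log m)$ bound.

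The missing idea is a representation that batch-shifts all phase-\ref{phase3} paths for free. Since every path $[v,u_1,u_2]\in L_v^{TR}$ shares the same first two nodes (by~\ref{inv:no-spur}), the paper stores only the third vertex $u_2$, sorted by $x(u_2)$. After \Lshortcut$(v,TR)$, the paths with $u_2\neq u_{\min}$ become $[v^*,u_{\min},u_2]$; their stored key $u_2$ is unchanged, so the remaining list is simply renamed to $L_{v^*}^{TR}$ at no cost. Only the paths with $u_2=u_{\min}$ are individually removed and reinserted elsewhere, and those are precisely the ones that eliminate a vertex, so the $O(m)$ vertex budget covers them. Your argument becomes correct once you charge only phase-\ref{phase2} paths (and crimps) against the vertex budget and handle phase-\ref{phase3} by this renaming trick rather than by per-path reinsertion.
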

\begin{proof}
Operations \pinextraction, \Vshortcut, and \Lshortcut\/  each make $O(1)$ changes in the image graph.
Operations \pinextraction\/ and \Vshortcut\/  decrease the number of vertices inside $D_b$.
Each \Lshortcut\/  does as well, except for the steps that create crimps. By Lemma~\ref{lem:Lshortcut}, \Lshortcut\/
operations may create at most $2|\PP|=O(m)$ crimps. So the total number of operations is $O(m)$.

When $[v,u_1,u_2]\in L_v^{TR}$ and $u_2\neq u_{\min}$,  \Lshortcut\, replaces $[v,u_1,u_2]$ by $[v^*,u_{\min},u_2]$: vertex $[u_1]$ shifts to $[u_2]$, but no vertex is eliminated.
In the worst case, one \Lshortcut\ modifies $\Theta(m)$ paths, so in $\Theta(m)$ operations
the total number of vertex shifts is  $\Theta(m^2)$.

\medskip\noindent {\bf Data structures.}
We maintain a cyclic list of nodes in $\partial D_b$ given by the combinatorial embedding of the image graph.
Since each operation adds a constant number of nodes to $\partial D_b$ at positions adjacent to the nodes to which the operation was applied, such a list can be maintained using $O(1)$ time per operation.
Our implementation does not maintain the paths in $\PP$ explicitly.
Instead, we use set operations. We maintain the sets $\Pin$, $\VV$,
and $L_v^X$, with $v\in \partial D_b$ and $X\in\{TR,TL,BR,BL\}$, in sorted lists.
The pins $[v,u,v]\in \Pin$ are sorted by $x(v)$;
the wedges $[v_1,u,v_2]\in \VV$ are sorted by $|x(v_1)-x(v_2)|$.
In every set $L_v^X$, the first two nodes in the paths $[v,u_1,u_2]\in L_v^X$  are the same by \ref{inv:no-spur}, and so it is enough to store vertex $[u_2]$; these vertices are stored in a list sorted by $x(u_2)$.
We also maintain binary variables to indicate for each path $[v,u_1,u_2]\in L_v^X$ whether it is part of a cross-chain, and whether $[u_1,u_2]$ is the only longest edge of that chain.

\medskip\noindent {\bf Running time analysis.}
The condition in step~\ref{step-ii} can be tested in $O(1)$ time by checking whether $uv_1$ and $uv_2$ are consecutive segments in the rotation of node $u$ in the image graph.
Steps~\ref{step-i}-\ref{step-ii} remove pins and V-chains, taking linear time in the number of removed vertices, without introducing any path in any set.

Consider \Lshortcut$(v,TR)$, executed in step~\ref{step-iii}, which can be generalized to other occurrences of the \Lshortcut\ operation performed in one of steps~\ref{step-iii}--\ref{step-vi}. Recall that $\Pin=\VV=\emptyset$.
Let $p'$ be the leftmost top vertex in $\partial D_b$ to the right of $v$, which can be found in $O(1)$ time using the cyclic list of nodes in $\partial D_b$.
By \ref{inv:deg-b}, every path  $[p',q',r']\in L_{p'}^X, X\in\{TR,TL\}$, must contain the edge $[p',q']$.
If $x(q')< x(u_{\max})$, then \ref{cond:B2} or \ref{cond:B3} are not satisfied.
Assume that $x(q')\ge x(u_{\max})$ and \ref{cond:B2} (resp., \ref{cond:B3}) is not satisfied.
Then there must exist an edge $[p,u_1]$ (resp., $[p,q]$ where $x(q)< x(u_{\max})$) such that $p$ is to the right of $p'$.
Then, segments $p'q'$ and $pu_1$ (resp., $pq$) properly cross.
This is a contradiction since no operation introduces crossings in the image graph.
Hence \ref{cond:B2}--\ref{cond:B3} are satisfied if and only if either $p'$ does not exist (i.e., $v$ is the rightmost top vertex), or $x(u_{\max})\le x(q')$; this can be tested in $O(1)$ time.
The elements $[v,u_1,u_{\min}]\in L_v^{TR}$ are simplified to $[v^*,u_{\min}]$.
Consider one of these paths, and assume that the next edge along $P$ is $[u_{\min},u_3]$.
Then, the path $[v^*,u_{\min},u_3]$ is inserted into either $\Pin\cup \VV$ if $u_3\in \partial D_b$ is a top vertex,
or $L_{v^*}^{TL}$ if $u_3\in b$. We can find each chain $[v,u_1,u_{\min}]\in L_v^{TR}$ in $O(1)$ time since $L_v^{TR}$ is sorted by $x(u_2)$. Finally, all other paths of the form $[v,u_1,u_2]\in L_v^{TR}$, where $u_2\neq u_{\min}$, become $[v^*,u_{\min},u_2]$ and they form the new set $L_{v^*}^{TR}$.
Since we store only the last vertex $[u_2]$, which is unchanged, we create $L_{v^*}^{TR}$ at no cost.

This representation allows the manipulation of $O(m)$ vertices with one set operation.
The number of insert and delete operations in the sorted lists is proportional to the number of
vertices that are removed from the interior of $D_b$, which is $O(m)$.
Each insertion and deletion takes $O(\log m)$ time, and the overall time complexity is $O(m\log m)$.
\end{proof}

\section{Spur elimination algorithm}
\label{sec:tree-exp}

After bar-simplification (Section~\ref{sec:bars}), we obtain a polygon that has no forks and every spur is at an interior node of some cluster (formed on the boundary of some ellipse $D_b)$. In the absence of forks, we can decide weak simplicity using~\cite[Theorem 5.1]{CEX15}, but a na\"{\i}ve implementation runs in $O(n^2\log n)$ time: successive applications of \spurreduction\/ would perform an operation at each dummy vertex. In this section, we show how to eliminate spurs in $O(n\log n)$ time.

\medskip\noindent{\bf Formation of Groups.} We create \emph{groups} by gluing pairs of clusters with adjacent roots together.
Recall that by \ref{inv:tree} each cluster induces a tree.
We modify the image graph, transforming each tree in a cluster into a binary tree using ws-equivalent primitives. For each node $s$ with more than two children, let $s_1$ and $s_2$ be the first two children in counterclockwise order. Create new nodes $s_1'$ and $s_2'$ by \subdivision\ in $ss_1$ and $ss_2$, respectively, and create a segment $s_1's_2$. Use the inverse of \nodesplit\ to merge nodes $s_1'$ and $s_2'$ into a node $s'$, reducing the number of children of $s$ by one.

In the course of our algorithm, an analogue of the \pinextraction\/ operation extracts a spur from one group into an ``adjacent'' group. This requires a well-defined adjacency relation between groups.
By construction, if a segment $uv$ connects nodes in different clusters, both $u$ and $v$ are leaves or both are root nodes. For every pair of clusters, $C(u)$ and $C(v)$, with adjacent roots,  $u$ and $v$, create a \emph{group} $G_{uv}=C(u)\cup C(v)$; see Figure \ref{fig:groups}. By construction, the groups are pairwise disjoint. Two groups are called \emph{adjacent} if they have two adjacent leaves in the image graph.

\begin{figure}[htbp]
\centering
	\def\svgwidth{.4\textwidth}
\begingroup%
  \makeatletter%
  \providecommand\color[2][]{%
    \errmessage{(Inkscape) Color is used for the text in Inkscape, but the package 'color.sty' is not loaded}%
    \renewcommand\color[2][]{}%
  }%
  \providecommand\transparent[1]{%
    \errmessage{(Inkscape) Transparency is used (non-zero) for the text in Inkscape, but the package 'transparent.sty' is not loaded}%
    \renewcommand\transparent[1]{}%
  }%
  \providecommand\rotatebox[2]{#2}%
  \ifx\svgwidth\undefined%
    \setlength{\unitlength}{177.11726074bp}%
    \ifx\svgscale\undefined%
      \relax%
    \else%
      \setlength{\unitlength}{\unitlength * \real{\svgscale}}%
    \fi%
  \else%
    \setlength{\unitlength}{\svgwidth}%
  \fi%
  \global\let\svgwidth\undefined%
  \global\let\svgscale\undefined%
  \makeatother%
  \begin{picture}(1,0.53029271)%
    \put(0,0){\includegraphics[width=\unitlength,page=1]{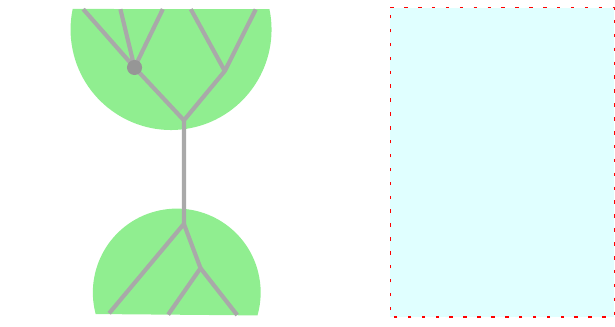}}%
    \put(0.47476046,0.24881804){\color[rgb]{0,0,0}\makebox(0,0)[lb]{\smash{⇒}}}%
    \put(0,0){\includegraphics[width=\unitlength,page=2]{fig-groups-A-v3.pdf}}%
    \put(-0.00050523,0.36624684){\color[rgb]{0,0,0}\makebox(0,0)[lb]{\smash{$C(u)$}}}%
    \put(-0.00146401,0.14609837){\color[rgb]{0,0,0}\makebox(0,0)[lb]{\smash{$C(v)$}}}%
    \put(0.66021391,0.24503706){\color[rgb]{0,0,0}\makebox(0,0)[lb]{\smash{$G_{uv}$}}}%
    \put(0.85226454,0.31865494){\color[rgb]{0,0,0}\makebox(0,0)[lb]{\smash{$u$}}}%
    \put(0.85100001,0.1541031){\color[rgb]{0,0,0}\makebox(0,0)[lb]{\smash{$v$}}}%
  \end{picture}%
\endgroup%
\caption{The formation of a group $G_{uv}$, containing clusters $C(u)$ and $C(v)$. Leaf nodes are shown as black dots.}
\label{fig:groups}
\end{figure}

Recall that a maximal path in each cluster is represented by benchmark vertices (leaves and spurs).
We denote by $[u_1;\ldots;u_k]$ (using semicolons) a maximal path inside a group defined by the benchmark vertices $u_1,\ldots,u_k$. For a given group $G_{uv}$, let $\PP$ denote the set of maximal paths with vertices in $G_{uv}$; and let $\BB$ be the set of subpaths in $\PP$ between consecutive benchmark vertices.

\begin{remark}\label{rem:groups}
By invariants \ref{inv:tree}--\ref{inv:deg}, a path in $\PP$ of a group $G_{uv}$ has alternating benchmark vertices between $C(u)$ and $C(v)$.
Consequently, every path in $\BB$ has one endpoint in $C(u)$ and one in $C(v)$,
and each spur in $G_{uv}$ is incident to two paths in $\BB$.
\end{remark}

\medskip\noindent{\bf Spur-elimination algorithm.}
Assume that $\GG$ is a partition of the nodes of the image graph into groups satisfying \ref{inv:tree}--\ref{inv:subdiv}.
We consider one group at a time, and eliminate all spurs from one cluster of that group.
When we process one group, we may split it into two groups, create a new group,
or create a new spur in an adjacent group (similar to \pinextraction\/ in Section~\ref{sec:bars}).
The latter operation implies that we may need to process a group several times.
Termination is established by showing that each operation reduces a weighted sum of the
number of benchmark vertices (i.e., spurs and boundary vertices).
Initially, the number of benchmarks is $O(n)$.

\begin{quote} Algorithm {\sf spur-elimination}$(P,\GG)$.\\
While $P$ contains a spur, do:
\begin{enumerate}\itemsep -1pt
\item\label{step:setup} Choose a group $G_{uv}\in \GG$ that contains a spur, w.l.o.g.~contained in cluster $C(u)$, and create its supporting data structures (described in Section~\ref{ssec:data} below).
\item While $T[u]$ contains an interior node, do:
\begin{enumerate}\itemsep -1pt
\item\label{step:merge} If $u$ contains no spurs and is incident to only two edges $uv$ and $uw$, eliminate $u$ with a merge operation. Rename node $w$ to $u$ which becomes the new root of the tree $T[u]$.
\item\label{step:low} If $u$ contains spurs, eliminate them as described in Section~\ref{ssec:low}.
\item\label{step:split} If $u$ contains no spurs, split $G_{uv}$ into two groups along a chain of segments that contains $uv$ as described in Section~\ref{ssec:split}. Rename a largest resulting group to $G_{uv}$.
\end{enumerate}
\end{enumerate}
\end{quote}

The detailed description of steps~\ref{step:low} and \ref{step:split} are in Sections~\ref{ssec:low} and \ref{ssec:split}, respectively. We first present supporting data structures in Section~\ref{ssec:data}, and then analyze the algorithm in Section~\ref{ssec:time}.

\subsection{Data structures}
\label{ssec:data}

In this section, we describe the data structures that we maintain for a group $G_{uv}$. We start with reviewing and introducing some notation. Consider a group $G_{uv}$ composed of two binary trees $T[u]$ and $T[v]$ rooted at $u$ and $v$, respectively. Recall that $\BB$ denotes the set of benchmark-to-benchmark paths, each with one benchmark in $T[u]$ and one in $T[v]$. In the algorithm {\sf spur-elimination}, we dynamically maintain the image trees $T[u]\cup T[v]$, and the set of paths $\BB$.
In each group $G_{uv}$, we maintain only $O(|\BB|)$ nodes that contain benchmark vertices or have degree higher than 2.
Dummy nodes of degree two that contain no benchmark vertices are redundant for the combinatorial representation, and
will be eliminated with \merge\/ operations. However, a polyline formed by a chain of dummy nodes of degree two cannot always be replaced by a straight-line segment (this might introduce unnecessary crossings). By Remark~\ref{rem:combin},
it suffices to maintain the combinatorial embeddings of the trees $T[u]$ and $T[v]$ (i.e., the counterclockwise order
of the incident segments around each node).

The partition of a group into two groups is driven by the partition of the paths in $\BB$.
For a set $\BB'\subset \BB$ of benchmark-to-benchmark paths, we define a subtree $T(\BB')$ induced by $\BB'$ as follows. Let $N=N(\BB')$ be the set of nodes that contain endpoints of some path in $\BB'$. The tree $T(\BB')$ is obtained in two steps: take the minimum subtree of $T[u]\cup T[v]$ that contains all nodes in $N$, and then merge all nodes of degree two that are not in $N$. In particular, the  nodes of $T(\BB')$ include $N$ and the lowest common ancestor of any two nodes in $N\cap C(u)$ and in $N\cap C(v)$, respectively. Denote by $\lca(r,s)$ the \emph{lowest common ancestor} of nodes $r$ and $s$ in $T[u]$ (resp., $T[v]$).

\paragraph{Description of data structures.}
For the image graph of $G_{uv}$, we maintain the following data structures.
\begin{itemize}\itemsep -1pt
\item We store trees $T[u]$ and $T[v]$ each using the dynamic data structure of~\cite{CH05}, which supports $O(1)$-time insertion and deletion of leaves, merging interior nodes of degree 2,
subdivision of edges, and lowest common ancestor queries.
\item 
Imagine that $G_{uv}$ is inside an axis-aligned rectangle with the leaves of $T[u]$ along the top edge and leaves of $T[v]$ along the bottom edge (see Figure~\ref{F:tree-cross-geom}(a)).
For each tree, we maintain a left-to-right Euler tour in an order-maintenance data structure \cite{BCD+02,DS88}, which supports insertions immediately before or after an existing item, deletions, and precedence queries, each in $O(1)$ amortized time.  For any node $w$, let $w^\flat$ and $w^\sharp$ respectively denote the first and last occurrences of $w$ in the Euler tour. Note that we have $w^\sharp=w^\flat$ for a leaf $w$. We refer to the elements of the Euler tour as \emph{tokens}.  We write $x<y$ to denote that some token $x$ occurs before (``to the left of'') another token $y$ in their common Euler tour.
\item
We also maintain the cyclic list of all leaves of the tree $T[u]\cup T[v]$ (in the order determined by the Euler tour above).
\end{itemize}

\begin{figure}[htb]
\centering
	\def\svgwidth{.7\textwidth}
\begingroup%
  \makeatletter%
  \providecommand\color[2][]{%
    \errmessage{(Inkscape) Color is used for the text in Inkscape, but the package 'color.sty' is not loaded}%
    \renewcommand\color[2][]{}%
  }%
  \providecommand\transparent[1]{%
    \errmessage{(Inkscape) Transparency is used (non-zero) for the text in Inkscape, but the package 'transparent.sty' is not loaded}%
    \renewcommand\transparent[1]{}%
  }%
  \providecommand\rotatebox[2]{#2}%
  \ifx\svgwidth\undefined%
    \setlength{\unitlength}{531.74995117bp}%
    \ifx\svgscale\undefined%
      \relax%
    \else%
      \setlength{\unitlength}{\unitlength * \real{\svgscale}}%
    \fi%
  \else%
    \setlength{\unitlength}{\svgwidth}%
  \fi%
  \global\let\svgwidth\undefined%
  \global\let\svgscale\undefined%
  \makeatother%
  \begin{picture}(1,0.72035908)%
    \put(0,0){\includegraphics[width=\unitlength,page=1]{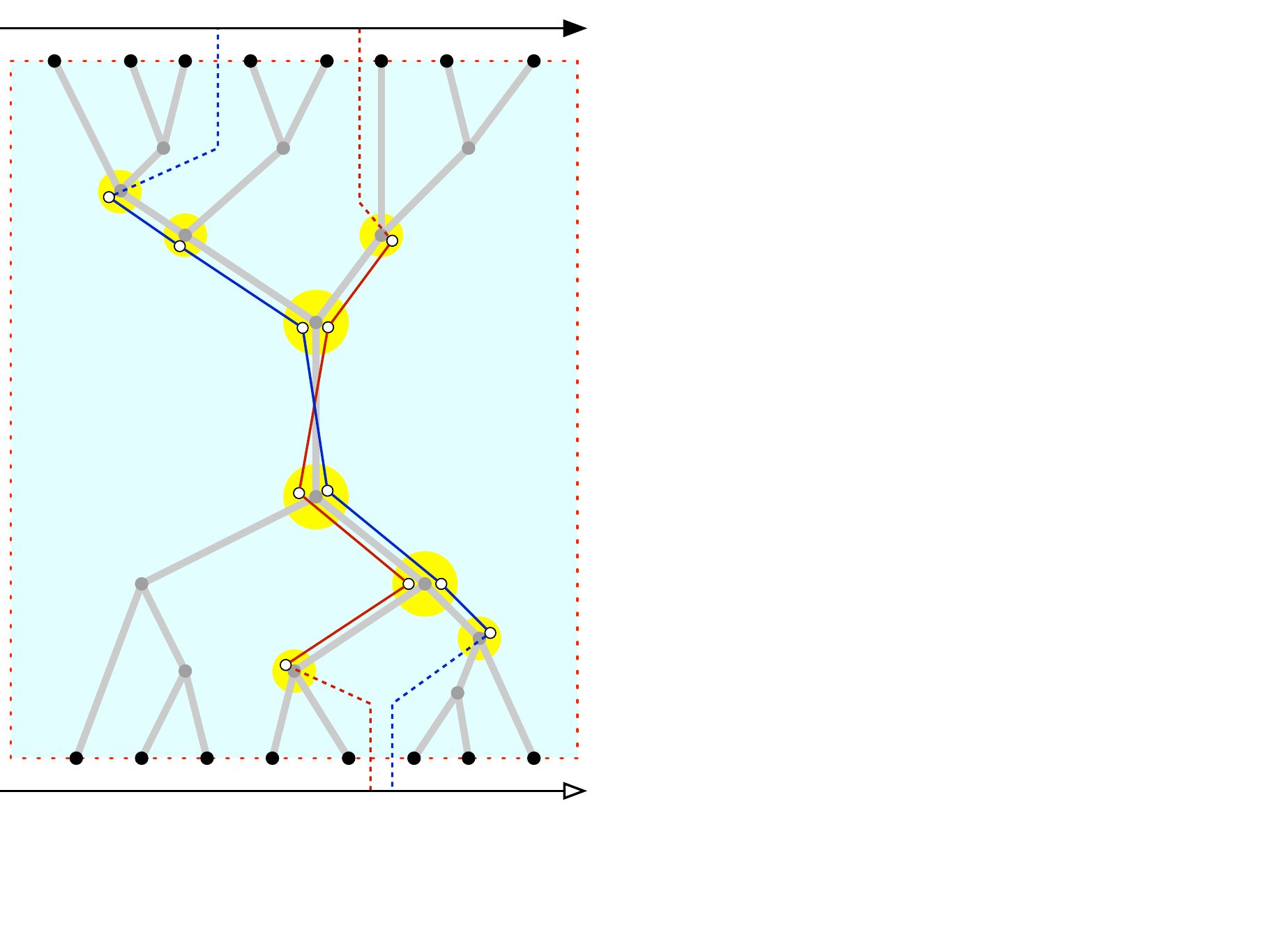}}%
    \put(0.1610408,0.71152469){\color[rgb]{0,0,0}\makebox(0,0)[lb]{\smash{$s_1^\sharp$}}}%
    \put(0.26579358,0.71139562){\color[rgb]{0,0,0}\makebox(0,0)[lb]{\smash{$s_2^\flat$}}}%
    \put(0.26149116,0.06369101){\color[rgb]{0,0,0}\makebox(0,0)[lb]{\smash{$t_2^\sharp$}}}%
    \put(0.30380727,0.06369101){\color[rgb]{0,0,0}\makebox(0,0)[lb]{\smash{$t_1^\flat$}}}%
    \put(0.05555216,0.54868515){\color[rgb]{0,0,0}\makebox(0,0)[lb]{\smash{$s_1$}}}%
    \put(0.31209519,0.5209169){\color[rgb]{0,0,0}\makebox(0,0)[lb]{\smash{$s_2$}}}%
    \put(0.20416376,0.21819207){\color[rgb]{0,0,0}\makebox(0,0)[lb]{\smash{$t_2$}}}%
    \put(0.38732331,0.23846114){\color[rgb]{0,0,0}\makebox(0,0)[lb]{\smash{$t_1$}}}%
    \put(0,0){\includegraphics[width=\unitlength,page=2]{fig-tree-paths-cross-v4.pdf}}%
    \put(0.80385162,0.06369103){\color[rgb]{0,0,0}\makebox(0,0)[lb]{\smash{$t^\sharp$}}}%
    \put(0.84616772,0.06369103){\color[rgb]{0,0,0}\makebox(0,0)[lb]{\smash{$t_{\max}^\flat$}}}%
    \put(0.59791261,0.54868518){\color[rgb]{0,0,0}\makebox(0,0)[lb]{\smash{$s$}}}%
    \put(0.74652419,0.2181921){\color[rgb]{0,0,0}\makebox(0,0)[lb]{\smash{$t$}}}%
    \put(0.92968374,0.23846117){\color[rgb]{0,0,0}\makebox(0,0)[lb]{\smash{$t_{\max}$}}}%
    \put(0,0){\includegraphics[width=\unitlength,page=3]{fig-tree-paths-cross-v4.pdf}}%
    \put(0.66098733,0.22305777){\color[rgb]{0,0,0}\makebox(0,0)[lb]{\smash{$t_{\min}$}}}%
    \put(0.73389935,0.06369103){\color[rgb]{0,0,0}\makebox(0,0)[lb]{\smash{$t^\flat$}}}%
    \put(0.65108804,0.06369103){\color[rgb]{0,0,0}\makebox(0,0)[lb]{\smash{$t_{\min}^\sharp$}}}%
    \put(0.22849646,0.00230949){\color[rgb]{0,0,0}\makebox(0,0)[b]{\smash{(a)}}}%
    \put(0.77085688,0.00230949){\color[rgb]{0,0,0}\makebox(0,0)[b]{\smash{(b)}}}%
  \end{picture}%
\endgroup%
\caption{
The geometry of crossing benchmark-to-benchmark paths.
(a) Paths $[s_1;t_1]$ and $[s_2;t_2]$ cross.
(b) If $t_{\min}^\sharp< t^\flat \leq  t^\sharp < t_{\max}^\flat$,
    then any benchmark-to-benchmark path $[s;t]$ crosses path $[t_{\min};t_{\max}]$.
}
\label{F:tree-cross-geom}
\end{figure}
We now describe data structures for $\PP$ and $\BB$. For every benchmark-to-benchmark path $[s;t]\in \BB$, we assume that $s$ is in $T[u]$ and $t$ is in $T[v]$.
A path $[s;t]$ is associated with the intervals $[s^\flat,s^\sharp]$ and $[t^\flat,t^\sharp]$.
For two consecutive benchmark-to-benchmark paths $[s_1;t;s_2]$, where $t$ is in $T[v]$,
we define the interval $I[s_1;t;s_2]=[s_1^\flat,s_2^\flat]$.

\begin{itemize}\itemsep -1pt
\item The set of benchmark-to-benchmark paths $[s;t]\in \BB$ is stored in four lists,
      sorted by $s^\flat$, $s^\sharp$, $t^\flat$, and $t^\sharp$, respectively, with ties broken arbitrarily.
      The sorted lists can be computed in $O(|\BB|)$ time by an Eulerian traversal of the tree.
\item For each node $s$ of $T[u]$, let $\BB_s$ denote the set of paths $[s;t] \in \BB$.
     We store $\BB_s$ in two lists, sorted by $t^\flat$ and $t^\sharp$, respectively.
\item We use a centered \emph{interval tree}~\cite{DutchBook} for all $O(n)$ intervals $I[s_1;t;s_2]$ that can report, for a query node $q$, all intervals containing $q$ in output-sensitive $O(\log n+k)$ time, where $k$ is the number of intervals that contain $q$. Since the interval endpoints $s^\flat$ are already sorted, the interval tree can be constructed in $O(|\BB|)$ time. The interval tree can handle the deletion of an interval in $O(1)$ time (without re-balancing, hence maintaining the $O(\log n+k)$ query time).
\end{itemize}

All data structures described in this section can be constructed in $O(|\BB|)$ preprocessing time.

\medskip\noindent{\bf Crossing paths.}
The data structure described above can determine in $O(1)$ time whether two paths in $\BB$ cross. Straightforward case analysis implies the following characterization of path crossings (refer to Figure~\ref{F:tree-cross-geom}(a)).

\begin{lemma}\label{lem:path-crossing}
Let $s_1$ and $s_2$ be arbitrary nodes in tree $T[u]$, and let $t_1$ and $t_2$ be arbitrary nodes in $T[v]$.
Paths $[s_1;t_1]$ and $[s_2;t_2]$ cross if and only if either
(1) $s_1^\sharp < s_2^\flat$ and $t_1^\flat > t_2^\sharp$, or
(2) $s_2^\sharp < s_1^\flat$ and $t_2^\flat > t_1^\sharp$.
\end{lemma}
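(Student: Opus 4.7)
The plan is to reduce the claim to the standard planarity principle that two simple arcs in the plane whose four endpoints alternate in cyclic order on the boundary of an enclosing region must intersect, and conversely.

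I will fix the planar embedding of $T[u]\cup\{uv\}\cup T[v]$ suggested in Figure~\ref{F:tree-cross-geom}: the leaves of $T[u]$ lie along the top edge of a bounding rectangle in left-to-right Euler-tour order, the leaves of $T[v]$ along the bottom edge analogously, and $uv$ runs vertically through the interior. Under this drawing, for every node $w$ of $T[u]$ the interval $[w^\flat,w^\sharp]$ is exactly the horizontal range occupied by the subtree rooted at $w$; consequently $s_1^\sharp<s_2^\flat$ (resp.\ $s_2^\sharp<s_1^\flat$) holds iff $s_1$'s subtree lies strictly to the left (resp.\ right) of $s_2$'s in $T[u]$, and analogously in $T[v]$ for $t_1,t_2$. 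If neither strict-separation inequality holds for the pair $s_1,s_2$, then one of them is an ancestor of the other in $T[u]$ (possibly $s_1=s_2$), and similarly for $t_1,t_2$.

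Traversing the outer boundary clockwise starting from the top-left corner, the cyclic order of tokens is the Euler tour of $T[u]$ followed by the Euler tour of $T[v]$ in reverse. For the ``if'' direction, suppose without loss of generality that condition~(1) holds. Then on the outer boundary the four endpoints appear clockwise in the order $s_1,s_2,t_1,t_2$, so the pairs $\{s_1,t_1\}$ and $\{s_2,t_2\}$ interleave; by the Jordan curve theorem any two simple arcs with these endpoints in the bounded region must intersect, and in particular the two tree paths cross. To locate the crossing concretely, let $\mu=\lca(s_1,s_2)\in T[u]$ and $\nu=\lca(t_1,t_2)\in T[v]$: the two paths share the spine from $\mu$ to $u$, across $uv$, and down to $\nu$, and they exit this spine into opposite child subtrees at $\mu$ but with the left/right roles reversed at $\nu$, so the transversal intersection must lie along the shared spine.

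For the ``only if'' direction I argue the contrapositive. The failure of both~(1) and~(2) splits into two sub-cases: (a) one of $s_1,s_2$ is an ancestor of the other in $T[u]$, or similarly for $t_1,t_2$ in $T[v]$; then the two tree paths share a common subpath containing an endpoint of at least one of them, so their union is itself a subtree of $T[u]\cup\{uv\}\cup T[v]$ and admits no transversal intersection; (b) both pairs are strictly separated with matching left-to-right orientations ($s_1^\sharp<s_2^\flat$ together with $t_1^\sharp<t_2^\flat$, or the mirror), so the four endpoints do not alternate on the outer cycle and the paths exit the shared spine on matching sides at $\mu$ and $\nu$, hence they lie on opposite sides of the spine throughout and do not cross. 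The principal obstacle is not any deep argument but the careful bookkeeping across these sub-cases; the heart of the lemma is the cyclic-interleaving characterization of arc crossings in a planar region, and the Euler-tour inequalities are simply the combinatorial encoding of that interleaving.
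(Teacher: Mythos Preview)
The paper offers no proof beyond the sentence ``Straightforward case analysis implies the following characterization'' and a reference to Figure~\ref{F:tree-cross-geom}(a); your argument correctly supplies that analysis, organized around the shared spine from $\mu=\lca(s_1,s_2)$ to $\nu=\lca(t_1,t_2)$ and the left/right exit directions at its two ends, which is exactly what the figure depicts.

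Two minor points of exposition, neither a genuine gap. In the ``if'' direction you place the four endpoints ``on the outer boundary,'' but $s_i,t_i$ may be interior nodes of their trees; this is harmless since conditions~(1) and~(2) force the four rooted subtrees to be pairwise disjoint (so one may extend each path to a leaf in its subtree without changing crossing status), and your subsequent spine argument avoids the issue anyway. In case~(a) of the contrapositive, the observation that ``their union is itself a subtree'' holds for \emph{any} two paths in a tree and so carries no weight by itself; the operative fact, which you do also state, is that the shared subpath contains an endpoint of one of the two paths, allowing that path to be perturbed to either side of the other.
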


\subsection{Eliminating spurs from a root}
\label{ssec:low}

We describe step~\ref{step:low} of Algorithm~{\sl spur-elimination}. Suppose that the root node $u$ contains a spur. The following operation eliminates all spurs from $u$, but the resulting cluster $C(v)$ need not satisfy \ref{inv:max-path} and \ref{inv:deg}, and we need to perform other operations to restore these properties.
Refer to Figure~\ref{fig:spur-shortcut}(a)--(b) for an example.
\begin{quote}{\spurshortcut}$(u)$.
Assume that $G_{uv}$ satisfies invariants \ref{inv:tree}--\ref{inv:subdiv}, and $u$ contains a spur.
Replace every path $[t_1;u;t_2]$ by $[t_1;t_2]$.
Let $\calS$ be the set of all such modified paths.
\end{quote}

\begin{lemma}
\spurshortcut\  is ws-equivalent and maintains properties \ref{inv:tree} and \ref{inv:subdiv}.
\end{lemma}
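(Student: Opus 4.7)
My plan is to adapt the signature-manipulation proof of Lemma~\ref{lem:primitives1} to the more substantial polygon modification effected by \spurshortcut. By the cluster invariants, every spur at $u$ has both incident edges along segment $uv$, and every path in $\BB$ through such a spur has the form $[t_1;u;t_2]$ with $t_1,t_2$ benchmarks in $C(v)$. Setting $w=\lca_{T[v]}(t_1,t_2)$, the corresponding subpath of $P$ forms a retrace: it runs from $t_1$ along the tree path in $T[v]$ to $w$, continues up through $T[v]$ to $v$, crosses $vu$ to the spur at $u$, and returns along the mirror route to $t_2$; in particular, each tree segment of $T[v]$ strictly between $w$ and $v$ is traversed twice, and so is segment $uv$.

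For the forward direction I fix a simple $Q\in\Phi(P)$ and build $Q'\in\Phi(P')$ by deleting from $\sigma(Q)$ every edge contributed by these retraces. At every disk strictly interior to a retrace---namely $D_u$, $D_v$, and the disks at the proper ancestors of $w$ in $T[v]$---the retrace chords disappear together with the retrace itself, so the remaining matching is a subset of a non-crossing one and is therefore still non-crossing. The only sensitive disk is $D_w$: the two chords of $Q$ that carried the up-visit and the down-visit of $w$ for a given retrace get replaced by a single chord directly connecting the $t_1$-side endpoint and the $t_2$-side endpoint on $\partial D_w$. The main obstacle is to show that this new chord crosses no other chord of $Q$ in $D_w$. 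I argue this using simplicity of $Q$: the two old chords together with the subarc of $\partial D_w$ between their two $v$-side endpoints bound a region of $D_w$ that, by simplicity, contains no endpoint of any other chord of $Q$; the new chord inhabits the same topological channel as that subarc, so it separates exactly the same pair of remaining endpoints and therefore avoids every remaining chord.

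The reverse direction is symmetric: given $Q'\in\Phi(P')$, I reinsert each retrace in a thin tube drawn alongside the chord at $D_w$ that it replaces and climbing up through $T[v]$ and across segment $vu$ to a spur chord in $D_u$, introducing no crossings. Invariants \ref{inv:tree} and \ref{inv:subdiv} follow at once: \spurshortcut\ removes polygon vertices and edges but introduces no new image-graph node, no new segment, and no new polygon edge, so the trees $T[u]$ and $T[v]$ persist unchanged as subgraphs of the image graph (establishing \ref{inv:tree}), and since no edge is added, no edge can newly pass through any leaf (establishing \ref{inv:subdiv}).
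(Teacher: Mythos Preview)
Your approach differs substantially from the paper's, and there is a genuine gap in your forward direction.

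The paper's proof is a one-line reduction: \spurshortcut\ is exactly the composite of \spurreduction$(v,u)$ followed by \spurreduction$(x,\mathrm{parent}(x))$ for every non-root node $x$ of $T[v]$ in BFS order. Each of these primitives is ws-equivalent by Lemma~\ref{lem:primitives1}, and the BFS order together with invariant~\ref{inv:max-path} guarantees the precondition of each \spurreduction\ (every vertex at $x$ has an edge to $\mathrm{parent}(x)$) when it is applied. No signature surgery is needed.

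Your direct signature argument treats each retrace $[t_1;u;t_2]$ in isolation and asserts that the region bounded by the two old chords at $D_w$ and the parent-side subarc between $b_1$ and $b_2$ ``contains no endpoint of any other chord of $Q$.'' This claim is false. Take a second retrace $[s_1;u;s_2]$ whose edge in $N_{w,\mathrm{parent}(w)}$ lies strictly between $p_1$ and $p_2$; its chord at $D_w$ has a parent-side endpoint strictly between $b_1$ and $b_2$. What is true—and what your argument does not supply—is that \emph{every} endpoint on that subarc belongs to another retrace whose chord at $D_w$ is also being deleted or replaced by the operation (the pocket argument forces any such path to reach $u$ and spur there, and both of its visits to $N_{w,\mathrm{parent}(w)}$ lie inside the pocket, so its $\lca$ is in the subtree rooted at $w$). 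Thus the correct statement is about surviving chords of $Q'$, not chords of $Q$, and even then you must still argue separately that the new chords $a_1a_2$ and $a_1'a_2'$ coming from \emph{different} retraces with the same $\lca$ do not cross each other. Your proof provides neither of these pieces.

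The reverse direction is too sketchy to evaluate (``introducing no crossings'' is asserted, not argued), but it would face the symmetric difficulty of inserting several nested retraces consistently. The cleanest fix is exactly the paper's: push the spurs down one tree level at a time via \spurreduction, which automatically handles the nesting because Lemma~\ref{lem:primitives1} already does the delicate ordering work at a single node.
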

\begin{proof}
The operation is equivalent to a sequence of \spurreduction\/ operations:
First perform \spurreduction$(v,u)$. In a BFS traversal of all nodes $x$ of $T[v]$, except for the root, perform  \spurreduction$(x,{\rm parent}(x))$.
All these operations satisfy \spurreduction's constraints.
Initially, every path through the node $x$ has an edge in the segment $x\,{\rm parent}(x)$, by \ref{inv:max-path}.
The BFS traversal ensures that this property still holds when the algorithm performs \spurreduction$(x,{\rm parent}(x))$.
\end{proof}

Note that for every path $[t_1;u;t_2]$, both $t_1$ and $t_2$ are in $T[v]$ (cf.~Remark~\ref{rem:groups}) and path $[t_1;t_2]$ is uniquely defined by \ref{inv:tree}.
However, a maximal path in $C(v)$ that contains $[t_1;t_2]$ violates \ref{inv:max-path},
and if $t_1=t_2$ is a leaf in $C(v)$, then it forms a spur that may violate \ref{inv:deg}.
We proceed with a sequence of ``repair'' steps to restore them,
after which the total number of benchmark vertices decreases by at least $|\cal S|$.
The following three steps restore \ref{inv:max-path} and \ref{inv:deg} when $t_1$ and $t_2$ are in ancestor-descendent relation, that is, $\lca(t_1,t_2)\in \{t_1,t_2\}$. Let $\min(t_1,t_2)$ denote the node in $\{t_1,t_2\}$ farther from the root.

\begin{quote}
For every path $[t_1;t_2]\in \calS$, do
\begin{enumerate}
\item\label{step:trivi1} If $\lca(t_1,t_2)\in \{t_1,t_2\}$  and $t_1\neq t_2$,
    then replace $[t_1;t_2]$ with $[\min(t_1,t_2)]$.
\item\label{step:trivi2} If $t_1=t_2$ and $t_1$ is not a leaf of $T[v]$ that has degree two in the image graph, then replace $[t_1;t_2]$ with $[t_1]$.
\item\label{step:pin} If $t_1=t_2$ and $t_1$ is a leaf of $T[v]$ that has degree two in the image graph, then do: by \ref{inv:deg}, node $t_1$ is adjacent to a unique node $z\notin G_{uv}$ and $z$ is incident to a single segment in the cluster containing $z$. Subdivide such segment creating a new node $z^*$ (added to the cluster containing $z$), 
and replace every path $[z,t_1,z]$ with $[z^*]$.
    See Figure~\ref{fig:spur-shortcut}(b)--(c) for an example.
\end{enumerate}
\end{quote}

\begin{figure}[htbp]
\centering
	\def\svgwidth{.95\textwidth}
\begingroup%
  \makeatletter%
  \providecommand\color[2][]{%
    \errmessage{(Inkscape) Color is used for the text in Inkscape, but the package 'color.sty' is not loaded}%
    \renewcommand\color[2][]{}%
  }%
  \providecommand\transparent[1]{%
    \errmessage{(Inkscape) Transparency is used (non-zero) for the text in Inkscape, but the package 'transparent.sty' is not loaded}%
    \renewcommand\transparent[1]{}%
  }%
  \providecommand\rotatebox[2]{#2}%
  \ifx\svgwidth\undefined%
    \setlength{\unitlength}{502.22407227bp}%
    \ifx\svgscale\undefined%
      \relax%
    \else%
      \setlength{\unitlength}{\unitlength * \real{\svgscale}}%
    \fi%
  \else%
    \setlength{\unitlength}{\svgwidth}%
  \fi%
  \global\let\svgwidth\undefined%
  \global\let\svgscale\undefined%
  \makeatother%
  \begin{picture}(1,0.42798625)%
    \put(0,0){\includegraphics[width=\unitlength,page=1]{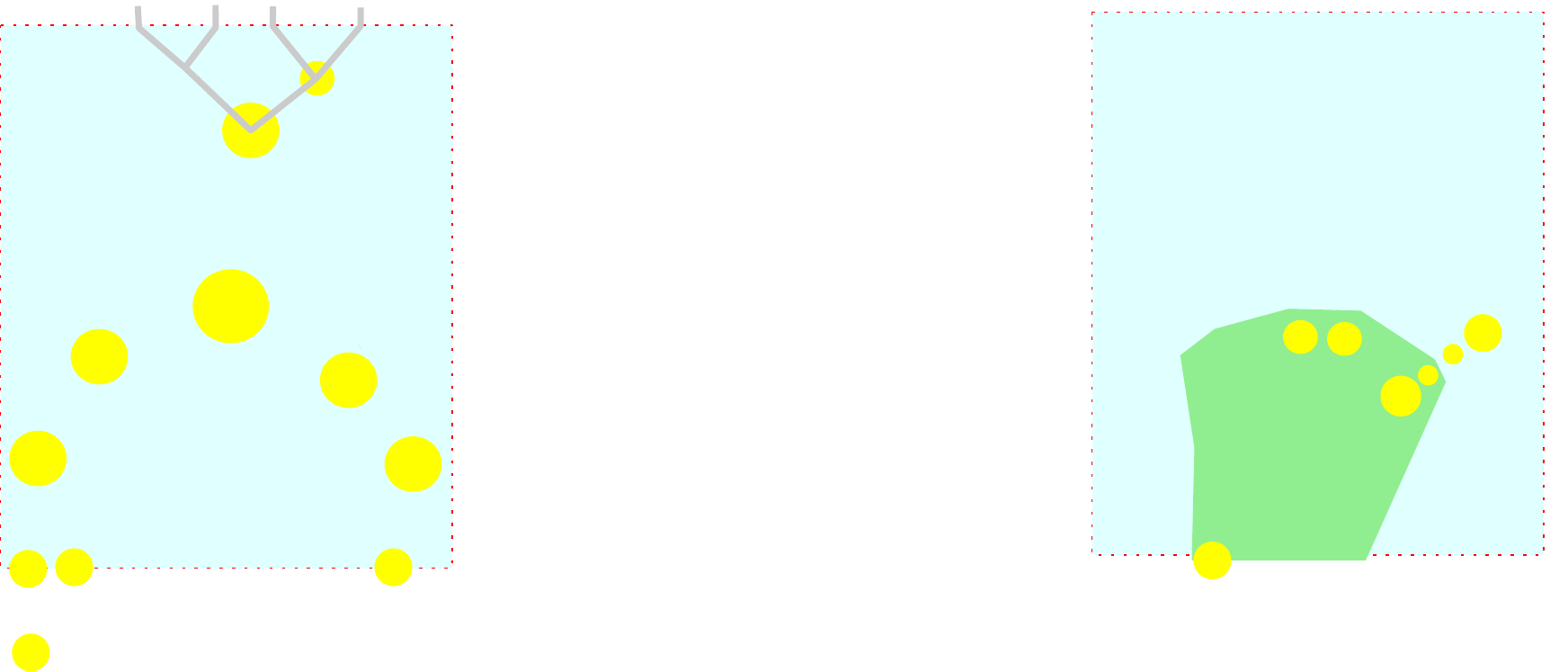}}%
    \put(0.30354396,0.24779177){\color[rgb]{0,0,0}\makebox(0,0)[lb]{\smash{⇒}}}%
    \put(0,0){\includegraphics[width=\unitlength,page=2]{fig-spur-shortcut-v4.pdf}}%
    \put(0.65485311,0.24905416){\color[rgb]{0,0,0}\makebox(0,0)[lb]{\smash{⇒}}}%
    \put(0,0){\includegraphics[width=\unitlength,page=3]{fig-spur-shortcut-v4.pdf}}%
    \put(0.1300881,0.02001139){\color[rgb]{0,0,0}\makebox(0,0)[lb]{\smash{(a)}}}%
    \put(0.49189714,0.02001139){\color[rgb]{0,0,0}\makebox(0,0)[lb]{\smash{(b)}}}%
    \put(0.83863079,0.02001139){\color[rgb]{0,0,0}\makebox(0,0)[lb]{\smash{(c)}}}%
    \put(0.03563383,0.00935967){\color[rgb]{0,0,0}\makebox(0,0)[lb]{\smash{$z$}}}%
    \put(0.38550616,0.00915192){\color[rgb]{0,0,0}\makebox(0,0)[lb]{\smash{$z$}}}%
    \put(0.73400998,0.03507422){\color[rgb]{0,0,0}\makebox(0,0)[lb]{\smash{$z$}}}%
    \put(0.73223334,0.00886862){\color[rgb]{0,0,0}\makebox(0,0)[lb]{\smash{$z^*$}}}%
    \put(0.44141069,0.17011298){\color[rgb]{0,0,0}\makebox(0,0)[lb]{\smash{$\color{violet}T[\ell]$}}}%
    \put(0.01804386,0.29040192){\color[rgb]{0,0,0}\makebox(0,0)[lb]{\smash{$G_{uv}$}}}%
    \put(0.37692719,0.29040192){\color[rgb]{0,0,0}\makebox(0,0)[lb]{\smash{$G_{uv}$}}}%
    \put(0.72476528,0.29040192){\color[rgb]{0,0,0}\makebox(0,0)[lb]{\smash{$G_{uv}$}}}%
    \put(0.12087177,0.34105828){\color[rgb]{0,0,0}\makebox(0,0)[lb]{\smash{$u$}}}%
    \put(0.47011801,0.34105828){\color[rgb]{0,0,0}\makebox(0,0)[lb]{\smash{$u$}}}%
    \put(0.81936426,0.34105828){\color[rgb]{0,0,0}\makebox(0,0)[lb]{\smash{$u$}}}%
    \put(0.17380485,0.24203508){\color[rgb]{0,0,0}\makebox(0,0)[lb]{\smash{$v$}}}%
    \put(0.52574254,0.24049938){\color[rgb]{0,0,0}\makebox(0,0)[lb]{\smash{$v=\ell$}}}%
    \put(0.87067448,0.26748229){\color[rgb]{0,0,0}\makebox(0,0)[lb]{\smash{$v=\ell$}}}%
    \put(0.38589004,0.03837291){\color[rgb]{0,0,0}\makebox(0,0)[lb]{\smash{$t_{\min}$}}}%
    \put(0.58239411,0.20465139){\color[rgb]{0,0,0}\makebox(0,0)[lb]{\smash{$t_{\max}$}}}%
    \put(0.81767931,0.19037051){\color[rgb]{0,0,0}\makebox(0,0)[lb]{\smash{$\ell^-$}}}%
    \put(0.84504052,0.19037051){\color[rgb]{0,0,0}\makebox(0,0)[lb]{\smash{$\ell^+$}}}%
    \put(0.84095821,0.1322675){\color[rgb]{0,0,0}\makebox(0,0)[lb]{\smash{$G_{\ell^-\ell^+}$}}}%
  \end{picture}%
\endgroup%
\caption{
(a) Node $u$ contains spurs.
(b) After eliminating spurs, $T[v]$ does not satisfy \ref{inv:max-path}.
(c) The analogues of \pinextraction\/ and  \Vshortcut\/.}
\label{fig:spur-shortcut}
\end{figure}
These steps restore \ref{inv:deg} at all leaves, and \ref{inv:max-path} for the affected paths $[t_1;t_2]\in \calS$. Note that these steps are ws-equivalent:
Steps~\ref{step:trivi1}--\ref{step:trivi2} do not modify the polygon (they change only the benchmarks); and step~\ref{step:pin} is analogous to  \pinextraction$(t_1,z)$.

We are left with paths $[t_1;t_2]\in \calS$ where $t_1$ and $t_2$ are in different branches of $T[v]$. In this case, we perform an elaborate version of the \Vshortcut\/ operation, that creates a new group. For every node $\ell$ of $T[v]$, let $\calS_\ell$ be the set of paths $[t_1;t_2]\in \calS$ such that $\lca(t_1,t_2)=\ell$. Consider every node $\ell$ of $T[v]$ where $\calS_\ell\neq \emptyset$ in a bottom-up traversal of $T[v]$; and create a new group $G_{\ell^-\ell^+}$ as follows
(refer to Figure~\ref{fig:spur-shortcut}).

Let $N^-$ (resp., $N^+$) be the set of nodes $t_1$ (resp., $t_2$) such that there is a path
$[t_1;t_2]\in \calS_\ell$, and $t_1$ is in the left (resp., right) subtree of $\ell$.
Let $N=N^-\cup N^+$.
Sort the nodes $t_1\in N^-$ by $t_1^\sharp$, and let $t_{\min}$ be the minimum node;
and similarly sort the nodes $t_2\in N^+$ by $t_2^\flat$, and let $t_{\max}$ be the maximum node.
The following lemma shows that interior nodes of the path from $t_{\min}$ to $\ell$ in $T[v]$ have no right branches, and the interior nodes of the path from $t_{\max}$ to $\ell$ have no left branches.

\begin{lemma}\label{lem:group-crossing}
If there is a path $[s;t]\in \BB \setminus \calS_\ell$ such that
$t_{\min}^\sharp< t^\flat \leq t^\sharp< t_{\max}^\flat$,
then it crosses some path in $\calS_\ell$, hence $P$ is not weakly simple.
\end{lemma}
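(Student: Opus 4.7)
The plan is to prove the lemma by contradiction: assume $P$ is weakly simple and such a path $[s;t]$ exists, and derive a forbidden interlocking configuration via Lemma~\ref{lem:cross}. Fix $t_0\in N^+$ with $\pi_1=[t_{\min};t_0]\in\calS_\ell$ and $t_1\in N^-$ with $\pi_2=[t_1;t_{\max}]\in\calS_\ell$ (possibly $\pi_1=\pi_2$ when $t_{\min}$ is paired with $t_{\max}$). Let $\beta$ denote the simple path in $T[v]$ from $t_{\min}$ up to $\ell$ and back down to $t_{\max}$; it is a sub-curve of $\pi_1\cup\pi_2$. From the hypothesis together with the Euler-tour nesting, $\ell^\flat<t_{\min}^\flat\le t_{\min}^\sharp<t^\flat\le t^\sharp<t_{\max}^\flat\le t_{\max}^\sharp<\ell^\sharp$, so $t$ lies in the subtree of $\ell$ and strictly between the branches through $t_{\min}$ and $t_{\max}$.

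Next, I would show that $s$ lies outside the subtree of $\ell$. If $[s;t]$ is unaffected by the preceding \spurshortcut, Remark~\ref{rem:groups} places $s\in T[u]$ and hence outside $T[v]$. Otherwise $[s;t]\in\calS_{\ell'}$ for some $\ell'\neq\ell$ with $s,t\in T[v]$. The bottom-up processing of the repair phase guarantees that $\ell'$ is not a strict descendant of $\ell$, since such $\ell'$ has already been moved into its own group. If $\ell'$ were incomparable to $\ell$, then $s$ and $t$ would both lie in the subtree of $\ell'$, which is disjoint from the subtree of $\ell$; but $t$ is in the subtree of $\ell$, a contradiction. Hence $\ell'$ is a strict ancestor of $\ell$, and $s$ lies in a sibling subtree of $\ell'$ disjoint from the subtree of $\ell$.

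Having $t$ inside and $s$ outside the subtree of $\ell$, the contradiction is obtained as follows. Using the combinatorial embedding of the image graph, close $\beta$ into a Jordan curve $\gamma_2$ by adjoining a Jordan arc $\delta$ from $t_{\max}$ back to $t_{\min}$ routed through the unbounded face of the image graph (so that $\delta$ meets no polygon edge), and perturb $\gamma_2$ slightly to avoid all vertices and cross every polygon edge transversely. Let $q_1,q_3$ be the two transversal crossings of $\gamma_2$ by the $\gamma_1$-arc corresponding to $\pi_1$, located near $t_{\min}$ and near $t_0$ respectively, and let $q_2,q_4$ be the two crossings of $\gamma_2$ by the maximal sub-arc of $\gamma_1$ containing $[s;t]$, extended until it next exits $\gamma_2$ after reaching $t$. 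The planar embedding of $\beta$ and of $\pi_1$ near $\ell$ forces $q_1,q_2,q_3,q_4$ to appear in cyclic order along $\gamma_2$, and the corresponding disjoint arcs of $\gamma_1$ each pass through the interior of $\gamma_2$, producing exactly the interlocking configuration forbidden by Lemma~\ref{lem:cross}. This contradicts the weak simplicity of $P$.

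The main obstacle I expect is verifying the cyclic interlocking of $q_1,q_2,q_3,q_4$ on $\gamma_2$ and handling the degenerate cases (e.g.\ $\pi_1=\pi_2$, or $t_{\min}$ and $t_{\max}$ being interior spur nodes rather than leaves of $T[v]$, or the $\gamma_1$-arc through $t$ crossing $\gamma_2$ more than twice inside the relevant window). Each such case reduces to a local analysis of the combinatorial embedding at $\ell$ and along $\beta$, using that no benchmark of a path in $\BB\setminus\calS_\ell$ shares an interior branch of $\beta$ whose Euler-tour window escapes $(t_{\min}^\sharp,t_{\max}^\flat)$.
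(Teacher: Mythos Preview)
Your case analysis showing that $s$ lies outside the subtree rooted at $\ell$ is correct and is essentially what the paper needs as well (the paper compresses it to the single remark that, by bottom-up processing, $\calS_{\ell'}=\emptyset$ for every descendant $\ell'$ of $\ell$, and then uses that $[s;t]$ must continue from $\ell$ to ${\rm parent}(\ell)$). So the first half of your argument is fine.

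The detour through an explicit Jordan curve $\gamma_2$ and Lemma~\ref{lem:cross}, however, creates a genuine difficulty that you do not resolve. Your curve $\beta$ is the tree path from $t_{\min}$ to $t_{\max}$, and your witness $\pi_1=[t_{\min};t_0]\in\calS_\ell$ \emph{shares a long prefix with $\beta$}: the entire segment from $t_{\min}$ up to the point where $\pi_1$ branches off toward $t_0$. After you perturb $\beta\cup\delta$ into a transversal Jordan curve $\gamma_2$, the polygon arc carrying $\pi_1$ may cross $\gamma_2$ arbitrarily many times along that shared stretch, so there is no canonical pair $q_1,q_3$; the cyclic-interlocking check for $q_1,q_2,q_3,q_4$ is therefore not well posed as stated. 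Your closing paragraph flags this as an ``obstacle,'' but it is really the heart of the matter, not a detail.

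The paper avoids the whole construction. It simply lets $C$ be the tree path from $t_{\min}$ to $t_{\max}$, observes that $[s;t]$ first meets $C$ at an \emph{interior} node $t^*$ (the hypotheses $t_{\min}^\sharp<t^\flat$ and $t^\sharp<t_{\max}^\flat$ rule out $t^*\in\{t_{\min},t_{\max}\}$), and then notes that $[s;t]$ continues along $C$ through $\ell$ and exits to ${\rm parent}(\ell)$. If $t^*$ lies in the left subtree of $\ell$, this forces a crossing with every path of $\calS_\ell$ that starts at $t_{\min}$; symmetrically on the right for paths ending at $t_{\max}$. The crossing is read off directly from the planar embedding of the tree---no auxiliary Jordan curve is needed. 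If you want to salvage your route, the clean fix is to drop $\pi_1,\pi_2$ and instead choose a Jordan curve that encircles a small neighbourhood of the single segment $\ell\,{\rm parent}(\ell)$; then $[s;t]$ crosses it exactly once while every path in $\calS_\ell$ stays entirely on the subtree side, which reduces to the paper's argument.
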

\begin{proof}
Let $C$ be the path between $t_{\min}$ and $t_{\max}$ in $T[v]$.
Refer to Figure~\ref{F:tree-cross-geom}(b).
By the choice of $\ell$ (in a bottom-up traversal of $T[v]$),
we have $\calS_{\ell'}=\emptyset$ for all descendants of $\ell$.
Path $[s;t]$ reaches $C$ at some interior node $t^*\in C$,
and then continues to $\ell$, and farther to ${\rm parent}(\ell)$.
If $t^*$ is in a left (resp., right) subtree of $\ell$, then $[s;t]$ crosses every
path in $\calS_\ell$ that starts at $t_{\min}$ (resp., ends at $t_{\max})$.
\end{proof}
We can find the set $N'$ of nodes $t$ such that $t_{\min}^\sharp< t^\flat \leq t^\sharp< t_{\max}^\flat$, in $O(|N'|+\log n)$ time, by binary search in the list of leaves to find all the leaves between $t_{\min}$ and $t_{\max}$, and by lowest common ancestor queries to find nodes in $N'$. The algorithm reports that the input polygon is not weakly simple and halts if some node in $N'$ has a path satisfying the condition in Lemma~\ref{lem:group-crossing}. We can now assume that $N'\subset N$.
The nodes in $N$ induce a binary tree, denoted $T[\ell]$, of size at most $2|N|$:
its nodes are all nodes in $N$ and the lowest common ancestors of consecutive nodes
in $N^-$ and $N^+$ respectively.
Note that a segment of $T[\ell]$ might not correspond to a segment of $T[v]$ (see Figure~\ref{fig:spur-shortcut}(b)).
Denote by $C^*$ the path between $t_{\min}$ and $t_{\max}$ in $T[\ell]$.

We now define the changes in the image graph.
Every node $t\in N\setminus C^*$ is deleted from $G_{uv}$, and added to the new group.
Create two nodes, $\ell^-$ and $\ell^+$, in $G_{\ell^- \ell^+}$ sufficiently close to $\ell$ in the wedge between the two children of $\ell$, and connect them by a segment $\ell^- \ell^+$. Duplicate each node $t\in C^*\setminus \{\ell\}$, by creating a node $t'$ (added to $G_{\ell^- \ell^+}$) sufficiently close to $t$, and add a segment $tt'$.
Subdivide every segment $tt'$ with two new boundary nodes, $t_{\rm leaf}$ (added to $T[v]$) and $t_{\rm leaf}'$ (added to $G_{\ell^- \ell^+}$).
The nodes $t$ or $t'$ might now have degree 4.
Adjust the image graph so that the group trees are binary.
Finally partition the nodes in $G_{\ell^- \ell^+}$ into two trees, $T[\ell^-]$ and $T[\ell^+]$, rooted at $ \ell^-$ and $\ell^+$, respectively.

We now define the changes in the polygon.
Replace every path $[t;t_1]\in \calS_\ell$, where $t\in C^*$,
by $[t';t_1]$ if it is adjacent to a path $[t;t_2]\in \calS_\ell$, i.e., replacing the path $[t_1;t;t_2]$ by $[t_1;t';t_2]$.
Otherwise, replace $[t;t_1]$ by $[t_{\rm leaf};t'_{\rm leaf}; t_1]$.
Now we can build $\BB'$ as the set of benchmark-to-benchmark paths $[t_1';t_2']$ where $t_1',t_2'\in G_{\ell^- \ell^+}$ in $O(|\BB'|)$ time.

To prove ws-equivalence, we consider the changes in the polygon.
These amount to a sequence of ws-equivalent primitives:
a \nodesplit\/ at $\ell$, a sequence of \nodesplit s
along the chain $C$ from $\ell$ to $t_{\min}$ and $t_{\max}$, respectively,
\subdivision\ operations that create the new leaf nodes,
and \merge\ operations at degree two nodes that no longer contain spurs.
The creation of new groups takes $O(|\calS_\ell|+\log n)$ time
and $O(|\calS_\ell|)$ paths in $\BB$ are removed or modified in $G_{uv}$. Thus the data structures for $G_{uv}$ are updated in $O(|\calS_\ell|\log n)$ time. Overall, operation {\spurshortcut}$(u)$ and the repair steps that follow take $O(|\calS|\log n)$ time.

\subsection{Splitting a group in two}
\label{ssec:split}

In this section we describe step~\ref{step:split} of Algorithm {\sf spur-elimination}$(P,\GG)$.
Assume that $G_{uv}$ satisfies invariants \ref{inv:tree}--\ref{inv:subdiv}
and $u$ contains no spur.

Denote the left and right child of $u$ by $u^-$ and $u^+$, respectively.
Let $\BB^-,\BB^+\subset \BB$, resp., be the set of benchmark-to-benchmark paths that contain $u^-$ and $u^+$.
We split $G_{uv}$ into two groups induced by $\BB^-$ and $\BB^+$, respectively.
Refer to Figure~\ref{fig:tree-splitting}.

It would be easy to compute the groups induced by $\BB^-$ and $\BB^+$ in $O(|\BB|)$ time.
However, for an overall $O(n\log n)$-time algorithm, we can afford $O(\min(|\BB^-|,|\BB^+|))$
time for the split operation, and an additional $O(\log n)$ time for each eliminated spur
and each node that we split into two nonempty nodes.
Without loss of generality, we may assume $|\BB^-|\leq |\BB^+|$.
The group induced by $|\BB^-|$ can be computed from scratch in $O(|\BB^-|)$ time,
and we construct the group for $\BB^+$ by modifying $G_{uv}$, and updating
the corresponding data structures.

\begin{figure}[htbp]
\centering
	\def\svgwidth{\textwidth}
\begingroup%
  \makeatletter%
  \providecommand\color[2][]{%
    \errmessage{(Inkscape) Color is used for the text in Inkscape, but the package 'color.sty' is not loaded}%
    \renewcommand\color[2][]{}%
  }%
  \providecommand\transparent[1]{%
    \errmessage{(Inkscape) Transparency is used (non-zero) for the text in Inkscape, but the package 'transparent.sty' is not loaded}%
    \renewcommand\transparent[1]{}%
  }%
  \providecommand\rotatebox[2]{#2}%
  \ifx\svgwidth\undefined%
    \setlength{\unitlength}{695.91513672bp}%
    \ifx\svgscale\undefined%
      \relax%
    \else%
      \setlength{\unitlength}{\unitlength * \real{\svgscale}}%
    \fi%
  \else%
    \setlength{\unitlength}{\svgwidth}%
  \fi%
  \global\let\svgwidth\undefined%
  \global\let\svgscale\undefined%
  \makeatother%
  \begin{picture}(1,0.29385472)%
    \put(0,0){\includegraphics[width=\unitlength,page=1]{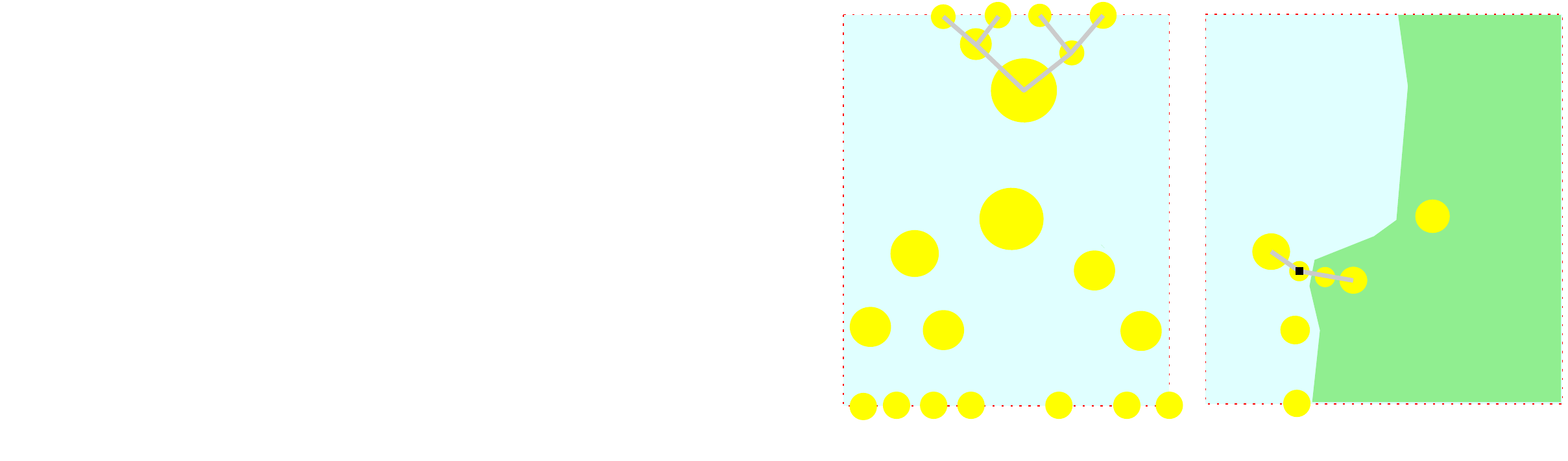}}%
    \put(0.74565438,0.16607432){\color[rgb]{0,0,0}\makebox(0,0)[lb]{\smash{⇒}}}%
    \put(0,0){\includegraphics[width=\unitlength,page=2]{fig-tree-splitting-v4.pdf}}%
    \put(0.20886988,0.16716439){\color[rgb]{0,0,0}\makebox(0,0)[lb]{\smash{⇒}}}%
    \put(0,0){\includegraphics[width=\unitlength,page=3]{fig-tree-splitting-v4.pdf}}%
    \put(0.20661289,0.00515217){\color[rgb]{0,0,0}\makebox(0,0)[lb]{\smash{(a)}}}%
    \put(0.74707625,0.00594656){\color[rgb]{0,0,0}\makebox(0,0)[lb]{\smash{(b)}}}%
    \put(0.05611281,0.24780407){\color[rgb]{0,0,0}\makebox(0,0)[lb]{\smash{$u^-$}}}%
    \put(0.08817504,0.22819802){\color[rgb]{0,0,0}\makebox(0,0)[lb]{\smash{$u$}}}%
    \put(0.06488288,0.10828481){\color[rgb]{0,0,0}\makebox(0,0)[lb]{\smash{$\color{violet}C_0$}}}%
    \put(0.09216714,0.15841171){\color[rgb]{0,0,0}\makebox(0,0)[lb]{\smash{$v$}}}%
    \put(0.15339811,0.24811665){\color[rgb]{0,0,0}\makebox(0,0)[lb]{\smash{$u^+$}}}%
    \put(0.4408261,0.24289201){\color[rgb]{0,0,0}\makebox(0,0)[lb]{\smash{$u^+$}}}%
    \put(0.28704883,0.2481639){\color[rgb]{0,0,0}\makebox(0,0)[lb]{\smash{$u^-$}}}%
    \put(0.30969976,0.15753221){\color[rgb]{0,0,0}\makebox(0,0)[lb]{\smash{$v^-$}}}%
    \put(0.40874405,0.15381146){\color[rgb]{0,0,0}\makebox(0,0)[lb]{\smash{$v^+$}}}%
    \put(0.55381074,0.18014395){\color[rgb]{0,0,0}\makebox(0,0)[lb]{\smash{$G_{uv}$}}}%
    \put(0.9308036,0.1639601){\color[rgb]{0,0,0}\makebox(0,0)[lb]{\smash{$G_{u^+v^+}$}}}%
    \put(0.78027088,0.1754372){\color[rgb]{0,0,0}\makebox(0,0)[lb]{\smash{$G_{u^-v^-}$}}}%
  \end{picture}%
\endgroup%
\caption{
Splitting group $G_{uv}$.
(a) Changes in the image graph.
(b) Changes in the polygon.}
\label{fig:tree-splitting}
\end{figure}

First, we find $\BB^-$ and $\BB^+$. Compute $\BB^-$ using the list of paths $[s;t]\in \BB$ sorted by $s^\sharp$ or $s^\flat$. Since both lists naturally split into corresponding lists for $\BB^-$ and $\BB^+$, we can split these lists in $O(\min(|\BB^-|,|\BB^+|))=O(|\BB^-|)$ time. To construct the list of $\BB^+$ sorted by $t^\sharp$ and $t^\flat$,
we start with the corresponding lists for $\BB$, and delete all elements of $\BB^-$ in $O(|\BB^-|)$ time. To compute the lists sorted by $t^\sharp$ and $t^\flat$ for $\BB^-$, we shall first compute the subtree $T[v^-]$ induced by $\BB^-$.
However, we can already find the maximum $t^\sharp$ of a path $[s;t]\in \BB^-$ in $O(|\BB^-|)$ time.

Next, we test for crossings between the paths in $\BB^-$ and the paths in $\BB^+$.
Let $t_-^\sharp$ be the maximum $t^\sharp$ of a path $[s;t]\in \BB^-$, and $t_+^\flat$ the minimum $t^\flat$ of a path $[s;t]\in \BB^+$. By Lemma~\ref{lem:path-crossing}, there is such a crossing if and only if $t_+^\flat<t_-^\sharp$, which can be determined in $O(1)$ time using our order-maintenance structures. If a crossing is detected, the algorithm reports that $P$ is not weakly simple and halts.

Trees $T[u^-]$ and $T[u^+]$ are simple subtrees of $T[u]$; but splitting $T[v]$ is nontrivial.
We use binary search in the Eulerian cycle of all leaves to find the rightmost leaf $\ell_0$ in $T[v]$ such that $\BB_{\ell_0}\cap \BB^- \neq \emptyset$, if such a leaf exists, otherwise the leftmost leaf $\ell_0$ in $T[v]$.
Let $C_0=[\ell_0;u]$. We do not compute the path $C_0$ explicitly, as it may contain more than $O(|\BB^-|)$ nodes,
but we can test whether a query node $t$ of $T[v]$ is in $C_0$ in $O(1)$ time by checking whether $\lca(\ell_0,t)=t$.
Since the paths in $\BB^-$ and $\BB^+$ do not cross, all nodes of $T[v^-]$ are
in or to the left of the chain $C_0$, and all common nodes of $T[v^-]$ and $T[v^+]$ are in $C_0$.
The image graph of $T[v^-]$ can be computed from scratch using $\BB^-$ in $O(\BB^-)$ time.
Replace each node $t$ of $T[v^-]$ that is in $C_0$ by a duplicate copy $t^-$ located sufficiently
close to $t$, to the right of $t$.
The tree $T[v^+]$ is computed from $T[v]$ by node deletion and merge operations as follows.
First delete all nodes that are in $T[v^-]$ but not in $C_0$.
For every node $t$ of $T[v^-]$ that lies in $C_0$, if $t$ has degree two in $T[v^-]$ and $\BB^+_t=\emptyset$,
then it would be a degree two node in $T[v^+]$ with no spurs, and so we can delete $t$ by merging its two
incident segments.
Let $v^+$ be the node not in $T[u^+]$ adjacent to $u^+$.
The resulting $T[v^+]$ becomes a tree induced by $\BB^+$.
It remains to resolve the connections between trees.

Let $\VV^0$ denote the set of chains $[s_1;t;s_2]$ such that $[s_1;t]\in \BB^-$ and $[t;s_2]\in \BB^+$.
The spurs at $t$ on all chains $[s_1;t;s_2]\in \VV^0$ will be eliminated
(they will become adjacent leaves in the two resulting groups).
$\VV^0$ can be found with a query for $u$ in the interval tree.
Let $N^0$ be the set of all nodes $t$ such that $[s_1;t;s_2]\in\VV^0$.
Each node $t\in N^0$ is in $C_0$ and, therefore, has a copy $t^-$ in $T[v^-]$.
Create a segment between $t$ and $t^-$, and subdivide the segment $t^- t$ with
two new nodes $t_{\rm leaf}^-$ and $t_{\rm leaf}$ in $T[v^-]$ and $T[v^+]$, respectively.
The degree of nodes $t$ or $t^-$ might increase to 4; and
so we adjust the image graphs so that both trees are binary.
The image graph is now split into groups $G_{u^-v^-}$ and $G_{u^+v^+}$.

We next define the changes in the polygon.
Replace every chain $[s_1;t;s_2]\in\VV^0$ with a new chain $[s_1;t^-_{\rm leaf};t_{\rm leaf};s_2]$,
while also replacing the corresponding paths in the lists $\BB^-$ and $\BB^+$ in $O(|\VV^0|)$ time.
In the sorted lists for $\BB^-$ and $\BB^+$, this is done by deletions and reinsertions.
Note that all leaves $t_{\rm leaf}^-$ (resp., $t_{\rm leaf}$) are at the end (resp., beginning)
of the Euler tour of $T[v^-]$ (resp., $T[v^+]$),
so deletions can be performed in $O(|\VV^0|)$ time;
and insertions take $O(|\VV^0|\log n)$ time.

The changes in the polygon are equivalent to a sequence of ws-equivalent primitives:
a \nodesplit\/ operation at $u$, followed by a sequence of \nodesplit s
along the chain $C_0$ from $\ell_0$ to $u$, and subdivision operations
that create the new leaf nodes between the two groups.
The interval tree is updated by deleting the intervals that contain $u$,
and the query time remains the same output-sensitive $O(\log n+k)$.
Consequently, we can split $G_{uv}$ in $O(\min(|\BB^-|,|\BB^+|)+|\VV^0|\log n+\log n)$ time.

\subsection{Analysis of the spur-elimination algorithm}
\label{ssec:time}

\begin{lemma}\label{lem:group-simplfication-time}
Given $m$ benchmark vertices, {\sf spur-elimination}$(P,\GG)$ takes $O(m\log m)$ time.
\end{lemma}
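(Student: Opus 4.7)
The plan is to bound the total running time of \textsf{spur-elimination}$(P,\GG)$ by $O(m\log m)$ by charging work to three sources: the initial data-structure construction, the spur removals in step~\ref{step:low}, and the group splits in step~\ref{step:split}. As a preliminary invariant, I would argue that the total number of benchmark vertices ever created is $O(m)$: every auxiliary operation that introduces new benchmarks (the \subdivision\ in the pin-like repair of Section~\ref{ssec:low} and the duplicated leaves $t_{\rm leaf}^\pm$ produced in Section~\ref{ssec:split}) is paired with the elimination of at least one spur in the same step. In particular, the total number of groups ever created is also $O(m)$, so step~\ref{step:setup} and the merge operations of step~\ref{step:merge} together contribute $O(m)$.

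Initialization of the data structures for each group in step~\ref{step:setup} takes time linear in $|\BB|$ by Section~\ref{ssec:data}, so all such initializations contribute $O(m)$ in total. For step~\ref{step:low}, each call to \spurshortcut\/ together with its repair steps costs $O(|\calS|\log n)$ while removing $|\calS|$ spurs; since $n=O(m)$, charging $O(\log m)$ to each eliminated benchmark yields $O(m\log m)$ summed over all invocations.

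The main obstacle is the analysis of step~\ref{step:split}. A single split costs $O(\min(|\BB^-|,|\BB^+|)+|\VV^0|\log n+\log n)$ by Section~\ref{ssec:split}. The additive $O(\log n)$ per split is paid by the total number of splits, which is $O(m)$ by the invariant above. The $|\VV^0|\log n$ term is charged to the spurs eliminated in $\VV^0$, each absorbing $O(\log n)$ and totaling $O(m\log m)$. For the dominant $\min(|\BB^-|,|\BB^+|)$ term I would apply the classical smaller-half argument: assign each benchmark $b$ a potential $\Phi(b)=\log m_b$, where $m_b$ denotes the number of benchmarks currently in the group containing $b$. Whenever $b$ falls in the smaller part of a split, $m_b$ at least halves, so $b$ loses at least one unit of potential, paying for the $O(1)$ work charged to it. Because each benchmark starts with potential at most $\log m$ and the potential stays non-negative, the total amortized cost of the $\min(|\BB^-|,|\BB^+|)$ terms is $O(m\log m)$. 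Summing the three contributions gives the claimed $O(m\log m)$ bound.
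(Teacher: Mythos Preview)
Your overall structure and the charging arguments for steps~\ref{step:low} and~\ref{step:split} match the paper's proof. The gap is in your claim that step~\ref{step:setup} contributes only $O(m)$ in total. From ``at most $O(m)$ groups are ever created'' together with ``each initialization costs $O(|\BB|)$'' it does \emph{not} follow that $\sum|\BB|=O(m)$ over all initializations: a single benchmark can participate in $\Theta(\log m)$ such initializations, because each time its group is split it may land in the smaller half, that smaller group will later need its own step~\ref{step:setup}, and this can recur $\log m$ times before the containing group has constant size. The correct bound for step~\ref{step:setup} is therefore $O(m\log m)$, and the paper obtains it with exactly the smaller-half argument you already invoke for the $\min(|\BB^-|,|\BB^+|)$ term in step~\ref{step:split}. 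The error does not change the final $O(m\log m)$ conclusion, but the intermediate $O(m)$ claim for step~\ref{step:setup} should be replaced by $O(m\log m)$ and justified via that same argument.
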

\begin{proof}
Let $\sigma$ be the number of spurs, $\beta$ the number of benchmark vertices at the leaves of clusters, and let $\phi=2\sigma+\beta$. Initially, $\phi=O(m)$ by \ref{inv:tree}. All operations in {\sf spur-elimination} monotonically decrease both $\sigma$ and $\phi$. Step~\ref{step:low} decreases $\phi$ by the number of spurs at $u$, and steps~\ref{step:merge} and \ref{step:split} both maintain $\phi$. In particular, Step~\ref{step:split} converts some spurs into pairs of adjacent benchmark vertices at leaves. Consequently, the number of benchmark vertices remains $O(m)$ throughout the algorithm.

Step~\ref{step:setup} creates data structures for new groups: For a group containing $m$ benchmarks, all supporting data structures can be computed in $O(m)$ time, that is, in $O(1)$ time per benchmark. A new benchmark $v$ appears in a group when (i) a benchmark is extracted into an adjacent group, or (ii) a group of size $m$ is split and $v$ is part of the smaller group of size at most $m/2$. Extraction strictly decreases $\phi$, so it occurs $O(m)$ times. The total number of benchmarks that are either present initially or created by extraction is $O(m)$. Each of these benchmarks can move into a group of half-size $O(\log m)$ times. Consequently, there are $O(m\log m)$ new benchmarks overall, and the time spent on all instances of Steps~\ref{step:setup} is $O(m\log m)$.

Step~\ref{step:merge} removes an interior node of degree two; the update of supporting data structures takes $O(\log m)$ time. Interior nodes are created only when they contain a spur, so at most $O(m)$ interior nodes are ever created, and all instances of Step~\ref{step:merge} take $O(m \log m)$ time.
Step~\ref{step:low} eliminates $|\calS|$ spurs in $O(|\calS|\log m)$ time.
Eventually, all spurs are eliminated, thus all instances of Step~\ref{step:low} take $O(m\log m)$ time.
Step~\ref{step:split} takes $O(\min(|\BB^-|,|\BB^+|)+|\VV^0|\log m+ \log m)$ time.
By a standard heavy-path decomposition argument, the terms $\min(|\BB^-|,|\BB^+|)$
contribute $O(m\log m)$ time.
Every chain in $\VV^0$ corresponds to a spur that is destroyed in a step~\ref{step:split}
(and no new spurs are created in step~\ref{step:split}),
therefore the terms $O(|\VV^0|\log m)$ sum to $O(m\log m)$
over the course of the algorithm. Since every execution of step~\ref{step:split} increases
the number of groups by one, and this step is repeated $O(m)$ times,
the $\log m$ terms sum to $O(m\log m)$ in the entire algorithm.
\end{proof}

Algorithm {\sf spur-elimination}$(P,\GG)$ returns a polygon $P'$, a set $\GG'$ of groups, and a set $\BB'$ of benchmark-to-benchmark paths, each of which connects two leaves in two different clusters of a group.
In each group $G_{uv}$, the trees $T[u]$ and $T[v]$ have no interior nodes,
thus $G_{uv}$ consists of two single-node clusters $C(u)=\{u\}$ and $C(v)=\{v\}$, connected by a single edge $uv$.
Consequently, the image graph is 2-regular.
We can now decide whether $P'$ is weakly simple in $O(n)$ time similarly to~\cite[Section~3.3]{CEX15}.
The polygon $P'$ is weakly simple if and only if the image graph is connected and each group contains precisely one benchmark-to-benchmark path.
These properties can be verified by a simple traversal of the image graph and $P'$ in $O(n)$ time.
This completes the proof of Theorem~\ref{thm:main}.

\section{Perturbing weakly simple polygons into simple polygons}
\label{sec:perturb}

In Sections~\ref{sec:preprocess}--\ref{sec:tree-exp}, we have presented an algorithm that
decides, in $O(n\log n)$ time, whether a given $n$-gon $P$ is weakly simple.
If $P$ is weakly simple, then for every $\varepsilon>0$ it can be perturbed into a simple
polygon by moving each vertex a distance at most $\varepsilon$. In this section we show how to
find, for any $\eps>0$, a simple polygon $Q$ with $2n$ vertices such that ${\rm dist}_F(P,Q)<\eps$.
Let $P'$ and $P''$ be the polygons obtained after the bar-simplification and spur-elimination phases
of the algorithm, respectively. $P''$ has $O(n)$ vertices, none of which is a fork or a spur.
Using the results in \cite[Section~3]{CEX15}, we can construct a simple polygon $Q''\in \Phi(P'')$ in $O(n)$ time.
In this section, we show that we can reverse the sequence of operations in $O(n \log n)$
time and perturb $P$ as well into a simple polygon $Q\in\Phi(P)$.

\medskip\noindent{\bf Combinatorial representation by bar-signatures.}
A perturbation of a weakly simple polygon has a combinatorial representation, called a signature, which consists of total orders of the overlapping edges in all segments of the image graph (cf. Section~\ref{sec:preliminaries}). In the absence of forks, every edge lies in a segment, and the size of such a signature is $O(n)$. However, the signature may have size $\Theta(n^2)$ in the presence of forks. When our algorithm eliminates forks from a polygon, it may create $\Theta(n^2)$ dummy vertices and edges, which would again lead to a signature of size $\Theta(n^2)$. For reversing the operations of the algorithm in Sections~\ref{sec:preprocess}--\ref{sec:tree-exp}, we introduce a new combinatorial representation of size $O(n)$ that maintains the total order of the edges in each bar that are outside of clusters.

For $n\geq 3$, let $P=(p_0,\ldots , p_{n-1})$ be a weakly simple polygon with image graph $G$.
Assume that the sober nodes of $G$ are partitioned into a set $\mathcal{C}$ of disjoint clusters satisfying invariants \ref{inv:tree}--\ref{inv:subdiv} such that every bar is either entirely in a cluster or outside of all clusters.
Let $Q=(p'_0,\ldots , p'_{n-1})$ be a simple polygon such that $|p_i,p'_i|<\eps_0=\eps_0(P)$ for all $i=0,\ldots , n-1$.
We may assume that $G$ has no vertical segments (so that the above-below relationship is defined between disjoint segments parallel to a bar).
In each segment $uv$ of $G$ outside of clusters, the above-below relationship yields a total ordering over the edges of $Q$ that contain $uv$. For each bar $b$ outside of clusters, the total orders of the segments along $b$ are consistent (since the above-below relationship between two edges is the same in every corridor). Consequently, the transitive closure of these total orders is a partial order over all edges in $b$. Consider a linear extension of such a partial order. The collection of these total orders for all bars in $P$ is a \emph{bar-signature} of $Q$. Since the linear extensions need not be unique, a polygon $Q\in \Phi(P)$ may have several bar-signatures.

Given a bar-signature of a perturbation of $P$, we can (re)construct an approximate simple polygon $Q'$ as follows; refer to Figure~\ref{fig:comb}. For every bar $b=uv$ of $G$ outside of clusters, let the \emph{volume} $\vol(uv)$ be the number of edges of $P$ that lie on $b$. Place $\vol(uv)$ parallel line segments, called \emph{lanes}, between $\partial D_u$ and $\partial D_v$ in the region $U_{\eps}$, ordered from bottom to top (the lanes contain the edges of $Q'$). For the $i$-th edge $pq$ in the total order of $b$, let the corresponding edge in $Q'$ be the shortest edge connecting $\partial D_p$ and $\partial D_q$ in the $i$-th lane. For each cluster $C(u)$, denote by $R(u)$ the union of all disks $D_v$, $v\in C(u)$, and all corridors between nodes in $C(u)$. If $C(u)$ contains only the node $u$, then $R(u)=D_u$, but $R(u)$ is always simply connected since $C(u)$ induces a tree $T[u]$. For each cluster $C(u)$, construct a noncrossing polyline matching, between the endpoints of the edges in $\partial R(u)$, that connects the endpoints corresponding to a maximal subpath in $T[u]$. The edges in the lanes and the perfect matchings in the regions $R(u)$ produce a polygon $Q'$. If the Euclidean diameter of each region $R(u)$ is at most $\delta$, then the Fr\'echet distance between $P$ and $Q'$ is at most $\eps+\delta$.
Denote by $\Psi(P)$ the set of all simple polygons that can be constructed in this manner from a bar-signature for some $\eps$, $0<\eps<\eps_0$.

\begin{figure}[h!tbp]
\centering
		\def\svgwidth{\textwidth}
\begingroup%
  \makeatletter%
  \providecommand\color[2][]{%
    \errmessage{(Inkscape) Color is used for the text in Inkscape, but the package 'color.sty' is not loaded}%
    \renewcommand\color[2][]{}%
  }%
  \providecommand\transparent[1]{%
    \errmessage{(Inkscape) Transparency is used (non-zero) for the text in Inkscape, but the package 'transparent.sty' is not loaded}%
    \renewcommand\transparent[1]{}%
  }%
  \providecommand\rotatebox[2]{#2}%
  \ifx\svgwidth\undefined%
    \setlength{\unitlength}{439.49633789bp}%
    \ifx\svgscale\undefined%
      \relax%
    \else%
      \setlength{\unitlength}{\unitlength * \real{\svgscale}}%
    \fi%
  \else%
    \setlength{\unitlength}{\svgwidth}%
  \fi%
  \global\let\svgwidth\undefined%
  \global\let\svgscale\undefined%
  \makeatother%
  \begin{picture}(1,0.29124992)%
    \put(0,0){\includegraphics[width=\unitlength,page=1]{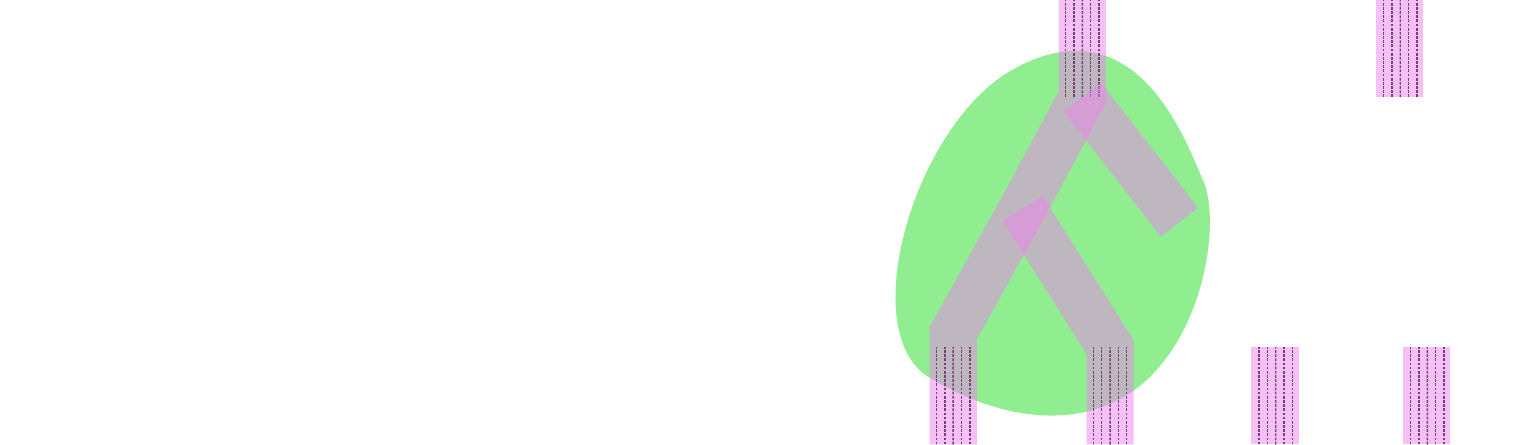}}%
    \put(-0.0238531,0.16343463){\color[rgb]{0,0,0}\makebox(0,0)[lb]{\smash{$u$}}}%
    \put(0.53015159,0.05989555){\color[rgb]{0,0,0}\makebox(0,0)[lb]{\smash{$v$}}}%
    \put(0,0){\includegraphics[width=\unitlength,page=2]{fig-combinatorial.pdf}}%
    \put(0.74248043,0.22326319){\color[rgb]{0,0,0}\makebox(0,0)[lb]{\smash{$u$}}}%
    \put(0.56266881,0.2008106){\color[rgb]{0,0,0}\makebox(0,0)[lb]{\smash{$C(u)$}}}%
    \put(0,0){\includegraphics[width=\unitlength,page=3]{fig-combinatorial.pdf}}%
    \put(0.96708557,0.09159466){\color[rgb]{0,0,0}\makebox(0,0)[lb]{\smash{$R(u)$}}}%
    \put(0,0){\includegraphics[width=\unitlength,page=4]{fig-combinatorial.pdf}}%
  \end{picture}%
\endgroup%
\caption{\small
Construction of a simple polygon $Q'\in \Psi(P)$ from a bar-signature. Left: Bar $uv$ of a simple polygon obtained from an order compatible with the polygon shown in Figure~\ref{fig:bars}(c).
Right: maximal paths of $Q$ and $Q'$ inside clusters.
\label{fig:comb}}
\end{figure}

\medskip\noindent{\bf Spur elimination.}
If a given $n$-gon is weakly simple, our decision algorithm computes a polygon $P''$, which is ws-equivalent to $P$ and represented implicitly by a cyclic sequence of benchmark nodes. Specifically, $P''$ is represented by an image graph $G''$, a set $\GG''$ of groups, a set $\BB''$ of benchmark-to-benchmark paths, and for every group $G_{uv}\in \GG''$, a linear order of the paths in $\BB''$ that cross the corridor $N_{uv}$ between $D_u$ and $D_v$. Consequently, the decision algorithm provides a bar-signature for the weakly simple polygon $P''$.

We show that, by reversing the steps of Algorithm {\sf spur-elimination}$(P',\GG')$, we can compute a bar-signature of $P'$ in $O(n\log n)$ time.
If a group $G_{uv}$ has been split in some step~\ref{step:split} (cf. Section~\ref{ssec:split}), we can construct an ordering of the benchmark-to-benchmark paths of $G_{uv}$ by concatenating the orders of $\BB^-$ and $\BB^+$ (the sets of benchmark-to-benchmark paths of the resulting two groups).

If $G_{uv}$ had spurs eliminated from $u$ in some step~\ref{step:low} (cf.~Section~\ref{ssec:low}), we reverse each of the steps in the following manner. Recall that if a new group $G_{\ell^- \ell^+}$ was created, then every path $[t_1';t_2']\in\BB'$ in that group was created from a concatenation of two paths $[t_1;u]$ and $[t_2;u]$.
Use the ordering of the paths in $\BB'$ to insert the paths $[t_1;u]$ and $[t_2;u]$ into the ordering of $\BB$ so that they form nested spurs, i.e., if $[t_1';t_2']$ is the topmost edge in $\BB'$, $[t_1;u]$ (resp. $[t_2;u]$) should be the leftmost (resp., rightmost) path (without loss of generality, we use the orientation of Figure~\ref{fig:spur-shortcut}).
Identify the leftmost path in the segment that connects $\ell$ and its right child and place all nested paths that created $G_{\ell^- \ell^+}$ immediately to its left.

If one or more spurs were created at a node $z$ in an adjacent group, we can find the position of the edges incident to each spur in the ordering of the adjacent group. Using this order, we can identify the first path in $G_{uv}$ to the right of the edges incident to $[z]$. Then, immediately to the left of such a path, we can place the paths $[t_1;u;t_1]$ that generated the spurs at $z$. The relative order of these paths is the same as the one obtained by reversing a \spurreduction, described in the proof of Lemma~\ref{lem:primitives1}, and therefore produces a simple polygon. If a path $[t_1;u;t_2]$ is simplified to  $[t_1]$ (Step~\ref{step:trivi1} with $\min(t_1,t_2)=t_1$ without loss of generality; or Step~\ref{step:trivi2}), we can proceed analogously to the reversal of a \crimpreduction\ (cf.~Lemma~\ref{lem:crimp}) from a path $[t_1;u;t_2;s]$ to $[t_1;s]$. Identify the path $[t_1;s]$ in the ordering of $\BB$ and replace it with the paths $[t_1;u]$, $[u;t_2]$, and $[t_2;s]$ in this order.

\medskip\noindent{\bf Bar simplification.}
The bar-signature determines all segments between adjacent clusters. Using these orders, we can reverse the operation \pinextraction$(u,v)$\ assigning the same order for the edges in $uv$ as the order of its adjacent benchmark-to-benchmark paths. \Vshortcut\ is also trivially reversible by concatenating the order of segments that get merged.

Updating the bar-signature when we reverse an \Lshortcut\ operation is a bit more challenging.
Determining the edge order in segments $vw$ and $vu_1$ can be trivially done by just concatenating the order of merged segments. But phase~\ref{phase1} introduces a crimp in some cross-chains, and the reverse operation, \crimpreduction, may require nontrivial reordering in the bar-signature. Suppose that $P'$ is obtained from $P$ after a \crimpreduction.
The proof of Lemma~\ref{lem:crimp} shows a straightforward way to obtain a bar-signature of a polygon in $\Psi(P)$ given a polygon in $\Psi(P')$.
However, obtaining a bar-signature of $Q'\in\Psi(P')$ given $Q\in\Psi(P)$ requires identifying $W_{top}$ and $W_{bot}$, which takes $O(n)$ time.

In order to handle the reversal of phase~\ref{phase1} in $O(1)$ time, we divide the signature of each bar into pieces.
Recall that the {\sf bar-simplification} algorithm does not eliminate any cross-chains from $D_b$, and when {\sf bar-simplification} terminates, only one-edge cross-chains remain in the interior of $D_b$. Let $K$ denote the set of cross-chains of $D_b$. The segments of the image graph that cross the ellipse $D_b$, and the bar-signatures of these edges yield a linear order (from left to right) of $K$; and the cross-chains subdivide $D_b$ into $|K|+1$ regions. We maintain a linear order for the edges along the bar in each such region (including the boundary of the region), and denote the set of these edges in $b$ by $E_1,\ldots, E_{|K|+1}$.

We reverse phases~\ref{phase2} and \ref{phase3} of \Lshortcut$(v,TR)$ as follows (applying reflections for other \Lshortcut\ operations if necessary). Assign the new edges $[u_1,u_2]$ the highest lanes in the ordering of the appropriate $E_i$, maintaining the relative order of affected paths. To reverse phase~\ref{phase1}, first notice that the three edges in the crimp $[u_1,u_2,u_1,u_2]$ are part of a cross-chain, consequently they appear in two consecutive subsets $E_i$ and $E_{i+1}$. In the ordering of the left (resp., right) subset, assign the new edge $[u_1,u_2]$ to the highest (resp., lowest) position among the positions of the three edges $[u_1,u_2]$.

When all operations in the bar simplification algorithm have been reversed, we have to combine the linear orders of $E_1,\ldots, E_{|K|+1}$ into a total order, a common linear extension of these orders. The intersection of two edge sets, $E_i\cap E_j$ with $i<j$, is either disjoint or contains the edges of the $i$-th cross-chain. The above-below relationship between the edges of each cross-chain is uniquely determined by Lemma~\ref{lem:irreducible}, and must be the same in each total order. Therefore, the union of the total orders is a partial order for all edges in the bar. Since the ordering of each subset guarantees that its paths can be realized without crossing, any linear extension of this partial order produces a bar-signature of a simple polygon.

\medskip\noindent{\bf Preprocessing.}
The cluster formation and \newbarexp\ consist of \subdivision\ operations that do not influence the order of edges that define the bar-signature. If an edge $[v,w]$ in a bar $b$ is subdivided into $[v,v',w]$, where $[v',w]$ is in $D_b$, we can assign $[v,w]$ to the same lane of $[v',w]$ in the ordering of edges in $b$.
The \crimpreduction\ operations can be reversed by making the three edges that form a new crimp consecutive in the ordering, as in the proof of Lemma~\ref{lem:crimp}.

We have shown how to maintain bar-signatures while reversing the operations of our algorithms, in time proportional to those operations. For every $\varepsilon>0$, the bar-signatures yield a perturbation of a weakly simple polygon $P$ into a simple polygon $Q\in \Phi(P)$ with $2n$ vertices, where each vertex $[u]$ of $P$ corresponds to two vertices of $Q$  on the circle $\partial D_u$. This completes the proof of Theorem~\ref{thm:perturb}.

\section{Conclusion}

We presented an $O(n\log n)$-time algorithm for deciding whether a polygon with $n$ vertices is weakly simple.
Weak simplicity of polygons has a natural generalization for planar graphs~\cite[Appendix D]{CEX15}.
We can define the {\em weak embedding} for graphs in terms of Fr\'echet distance. A graph $H=(V,E)$ can be considered a 1-dimensional simplicial complex. A \emph{drawing} of $H$ is a continuous map of $H$ to $\mathbb{R}^2$. The Fr\'echet distance between two drawings, $P$ and $Q$, of $H$ is defined as ${\rm dist}_F(P,Q)=\inf_{\phi:H\rightarrow H}\max_{x\in H}{\rm dist}(P(\phi(x)),Q(x))$, where $\phi$ is an automorphism of $H$ (a homeomorphism from $H$ to itself).
Very recently, Fulek and Kyn\v{c}l~\cite{FK17} gave a polynomial-time algorithm for deciding
whether a given drawing of a graph $H$ is weakly simple, i.e.,
whether a straight-line drawing $P$ of $H$ is within $\eps$ Fr\'echet distance from some embedding $Q$ of $H$, for all $\eps>0$.
Earlier, efficient algorithms were known only in special cases:
when the embedding is restricted to a given isotopy class (i.e., given combinatorial embedding)~\cite{Ful16};
and 
when all $n$ vertices are collinear and the isotopy class is given~\cite{ADD+14}.

We can also generalize the problem to higher dimensions.
A polyhedron can be described as a map $\gamma: M\rightarrow \mathbb{R}^3$, where $M$ is a 2-manifold without boundary. A simple polyhedron is an injective function.
A polyhedron $P$ is weakly simple if there exists a simple polyhedron within $\eps$ Fr\'echet distance from $P$ for all $\eps>0$.
This problem can be reduced to origami flat foldability.
The results of~\cite{BH96} imply that, given a convex polygon $P$ and a piecewise isometric function $f:P\rightarrow \mathbb{R}^2$ (called \textit{crease pattern}), it is NP-hard to decide if there exists an injective embedding of $P$ in three dimensions $\lambda:P\rightarrow \mathbb{R}^3$ within $\eps$ Fr\'echet distance from $f$ for all $\eps>0$, i.e., if $f$ is \textit{flat foldable}.
Given $P$ and $f$, we can construct a continuous function $g:\mathbb{S}^2\rightarrow P$ mapping each hemisphere of $\mathbb{S}^2$ to $P$ (for a point $x\in P$, the inverse image $g^{-1}(x)$ is a set of two points in opposite hemispheres of $\mathbb{S}^2$).
Then, the polyhedron $\gamma = g\circ f$ is weakly simple if and only if $f$ is flat foldable.
Therefore, it is also NP-hard to decide whether a polyhedron is weakly simple.

Finally it is an open problem to find a linear-time algorithm for recognizing weakly simple polygons.
Chang et al.~\cite{CEX15} conjectured that this is possible in the absence of  spurs and forks.

\paragraph{Acknowledgements.}
Research by Akitaya, Aloupis, and T\'oth was supported in part by the NSF awards CCF-1422311 and CCF-1423615.
Akitaya was supported by the Science Without Borders program.
Research by Erickson was supported in part by the NSF award CCF-1408763.
We thank Anika Rounds and Diane Souvaine for many helpful conversations that contributed to the
completion of this project. We thank the anonymous referees for many useful comments and suggestions.

\end{document}